\documentclass[11pt]{article}

\usepackage[a4paper,margin=1in]{geometry}
\usepackage[T1]{fontenc}
\usepackage[utf8]{inputenc}
\usepackage{lmodern}
\usepackage{microtype}
\usepackage{amsmath,amssymb,amsthm,mathtools}
\usepackage{booktabs}
\usepackage{enumitem}
\usepackage{xcolor}
\usepackage{tikz}
\usepackage{aliascnt}
\usepackage{hyperref}
\usepackage[nameinlink,capitalise,noabbrev]{cleveref}

\hypersetup{
  colorlinks=true,
  linkcolor=blue!55!black,
  citecolor=green!40!black,
  urlcolor=blue!60!black
}

\newtheorem{theorem}{Theorem}[section]
\newaliascnt{lemma}{theorem}
\newtheorem{lemma}[lemma]{Lemma}
\aliascntresetthe{lemma}
\newaliascnt{corollary}{theorem}
\newtheorem{corollary}[corollary]{Corollary}
\aliascntresetthe{corollary}
\newaliascnt{proposition}{theorem}
\newtheorem{proposition}[proposition]{Proposition}
\aliascntresetthe{proposition}
\newaliascnt{observation}{theorem}
\newtheorem{observation}[observation]{Observation}
\aliascntresetthe{observation}
\theoremstyle{definition}
\newaliascnt{definition}{theorem}
\newtheorem{definition}[definition]{Definition}
\aliascntresetthe{definition}
\newaliascnt{example}{theorem}
\newtheorem{example}[example]{Example}
\aliascntresetthe{example}
\theoremstyle{remark}
\newaliascnt{remark}{theorem}
\newtheorem{remark}[remark]{Remark}
\aliascntresetthe{remark}

\newcommand{\HH}{\mathsf{H}}
\newcommand{\VV}{\mathsf{V}}
\newcommand{\dist}{\operatorname{dist}}
\newcommand{\corank}{\operatorname{corank}}
\newcommand{\R}{\mathbb{R}}
\newcommand{\Z}{\mathbb{Z}}
\newcommand{\Qcell}[2]{Q_{#1,#2}}
\newcommand{\GammaP}{\Gamma_{p}}
\newcommand{\Reeb}{R_{f_p}}
\newcommand{\calR}{\mathcal{R}_{p}}
\newcommand{\betti}{\beta_1}

\title{\textbf{The Reeb Structure of Bend Distance in Grid Domains:}\\
Cycle Bounds with Holes, Exact Sector Geometry on Disks, and the
Two-Port Constant \(c_2^\square=3\)}
\author{Aoji Li\thanks{Corresponding author.
Email: \texttt{aojili77@gmail.com}}
\and Guangbo Ding}
\date{August 3, 2026}

\begin{document}
\maketitle

\begin{abstract}
The fixed-parameter algorithm for coordinated motion planning on
discretized simple polygons (ICALP 2026) rests on a single-port
\emph{sector decomposition}: grid vertices are labelled by the minimum
number of bends needed to reach an oriented terminal.  We develop the
structure theory of this decomposition on grid domains with holes.  Once
its endpoint convention is made precise, bend distance extends to a
canonical piecewise-affine function on the domain, and the sector graph
is the parallel-edge shadow of the Reeb multigraph of that function.
For a finite pure planar cubical domain with \(h\) holes this yields
\(\betti(\Gamma_p)\le\betti(R_{f_p})\le h\), together with a linear-time
computable feedback set of at most \(h\) sector vertices;
coordinate-level one-hole examples show that the finer hole-free
geometry---unique predecessors and straight baselines---fails.  On
hole-free cubical disks the machinery is exact: every positive sector is
the one-sided extrusion of a unique straight parent interface, and the
two-port common refinement has treewidth exactly \(c_2^\square=3\),
although sector count, cycle rank, and feedback number are unbounded
already there.  The endpoint precision is necessary rather than
cosmetic: under the literal reading of the source's definition, one
hole-free unit square has sector graph \(C_3\), falsifying the
single-port tree lemma; the augmented convention used here is exactly
the metric computed by the source's own layering procedure.  We do
\emph{not} obtain an \(f(k,h)n^{O(1)}\) algorithm; the paper supplies
the structural first step and a falsification tool for that program.
\end{abstract}

\section{Introduction}
\label{sec:intro}

Coordinated motion planning (CMP) asks for collision-free routes for a
small number of labelled robots in a graph.  Under the total-travel
objective, an edge traversal has unit cost and waiting has zero cost.  The
problem is fixed-parameter tractable on full rectangular grids
\cite{eiben2023parameterized}, and more generally when parameterized by
the number of robots plus the treewidth of the input graph
\cite{deligkas2024energy}.  Deligkas, Eiben, Ganian, and Kanj recently
proved fixed-parameter tractability, parameterized only by the number of
robots, for grid graphs obtained from simple polygons without holes
\cite{deligkas2026polygon}.

A central object in the latter proof is a sector decomposition.  Fix an
oriented terminal, called a \emph{port}.  Every grid vertex is labelled by
the minimum number of axis changes---bends---needed to reach the port with
the prescribed final axis.  A \emph{sector} is a connected component of an
equal label, and the \emph{sector graph} records adjacency between
components.  In a hole-free discretized polygon, the single-port sector
graph is a tree.  Moreover, every non-root sector has a unique predecessor
and is generated from a separating straight baseline.  These geometric
facts feed a multi-port refinement and ultimately a bounded-treewidth
reduction \cite[full version, Lemmas~9--24]{deligkas2026polygon}.

The same paper explicitly leaves constant or parameter-bounded numbers of
holes as an open direction, and its authors expect that direction to be
attainable by refining their approach.  The first obstacle is visible
already for one port: a hole permits routes of the same bend count to
approach a sector from different sides.  The purpose of this paper is to
state exactly what remains true with holes and, equally importantly, what
does not.

\paragraph{Main idea.}
Once the endpoint convention is repaired, bend distance extends to a
canonical piecewise-affine function on the domain
(\cref{thm:extension}), and the sector decomposition is exactly the
level-set structure of that function: sectors are its integer contours,
and the sector graph is the parallel-edge shadow of its Reeb multigraph
(\cref{thm:shadow}).  This lens splits the hole-free sector theory of
\cite{deligkas2026polygon} into two parts of different nature.  The
\emph{topological} part survives holes in graded form: cycles of a
single-port sector graph can come only from holes, so
\(\betti(\GammaP)\le h\).  The \emph{geometric} part---unique
predecessors, straight baselines, bounded common refinements---is a disk
phenomenon: a single hole destroys each of these statements, while on
disks they are not merely true but exact, down to the precise two-port
constant \(c_2^\square=3\).  Every proof in the paper runs through one
technical device, the two-orientation \(0/1\)-weighted state graph of
\cref{sec:state}, whose parity structure yields the affine extension,
the preferred-axis geometry on disks, and the exact arithmetic of the
counterexample families.  The following table summarizes the fate of
each hole-free statement.

\begin{center}
\small
\begin{tabular}{@{}p{0.37\linewidth}p{0.43\linewidth}l@{}}
\toprule
hole-free statement in \cite{deligkas2026polygon} &
fate on domains with \(h\ge1\) holes & where \\
\midrule
formal bend-distance definition &
literal reading falsifies the tree lemma on one square; repaired by the
augmented stub convention & \cref{cor:literal-c3} \\
single-port sector graph is a tree &
becomes \(\betti(\GammaP)\le h\), tight, with a linear-time feedback
set & \cref{thm:holebound} \\
sector graph determines the level-set topology &
fails: parallel Reeb bands are hidden & \cref{ex:parallel} \\
unique predecessor &
fails: a one-hole \(C_4\) & \cref{ex:c4} \\
straight one-sided baselines &
fail even for a tree quotient & \cref{ex:nobaseline} \\
bounded two-port refinement &
treewidth exactly \(3\) on filled disks; open with holes &
\cref{cor:sharp-two-port-constant} \\
\bottomrule
\end{tabular}
\end{center}

\paragraph{Contributions.}
Our first main result is a topological invariant for single-port sectors
on a holed grid domain, stated for the \emph{augmented} port metric of
\cref{sec:prelim}---an endpoint-precise version of the source's bend
distance; the need for this precision is itself one of our findings and
is summarized at the end of this list.
For a finite connected pure planar cubical complex
\(X\subset\R^2\) with \(h\) holes and \(G=X^{(1)}\), we derive augmented
bend distance from an orientation-state shortest-path problem
(\cref{sec:state}).  A local parity argument shows that the four values on
every unit square are either constant or form a single horizontal or
vertical stripe (\cref{lem:nosaddle}); hence there is a canonical affine
extension \(f_p\) on every square, without choosing a triangulation
(\cref{thm:extension}).  We identify its integer contours with sectors and
its open inter-integer bands with the edges of a Reeb multigraph, and we
prove that the Reeb quotient of \(f_p\) is homeomorphic to this multigraph
while the simple sector graph is its parallel-edge simplification
(\cref{thm:shadow}).  Combining this identification with Gelbukh's
cycle-rank theorem for Reeb graphs \cite{gelbukh2019approximation} and
Alexander duality gives
\[
  \betti(\GammaP)\leq\betti(\Reeb)\leq\corank\pi_1(X)\leq h,
\]
together with a linear-time computable feedback set of at most \(h\)
sector vertices (\cref{thm:holebound,cor:algorithm}).

Our negative results separate three assertions that can otherwise be
conflated:
\begin{enumerate}[label=(\roman*),leftmargin=2.2em]
  \item the sector graph is the Reeb graph;
  \item the sector graph is a tree with unique predecessors; and
  \item a tree sector graph forces straight separator baselines.
\end{enumerate}
All three implications fail in the presence of one hole
(\cref{sec:examples}).  The examples are coordinate-level constructions,
so their labels and adjacencies can be checked without a drawing or a
generic-position assumption.

On filled cubical disks the cycle bound becomes an exact structure
theorem.  The Reeb multigraph itself is a tree, not merely a tree after
parallel bands are merged; preferred-axis parity therefore turns each
parent band into a unique straight strip and gives an exact one-sided
extrusion theorem: every positive sector is the one-sided extrusion of a
straight parent interface, which gates its descendant side
(\cref{thm:hf-osp}).  The resulting sectors are cubically convex.

Next, we show that the single-port cycle bound cannot be iterated
through the common refinement of several ports by bounding its size or
cycle rank.  A comb and an alternating tab construction give, already for
one port on a hole-free disk, unbounded degree-two-suppressed size and
unboundedly many branch vertices (\cref{thm:comb-tabs}).  A two-port
staircase family has linearly growing cycle rank and feedback number but
treewidth exactly two (\cref{thm:staircase,cor:stair-fvs}).  Thus
feedback-set and bounded-core-size arguments fail even though a
bounded-treewidth theorem can still hold; among the specific size,
cycle-rank, and feedback-set targets ruled out here, only treewidth (or
an equivalent adhesion-bounded target) remains viable.

With treewidth thus identified as the target, straight parent
interfaces, three-vertex adhesions, planar local completions, and a
clique-compatible gluing along the one-port tree
bound the two-port common refinement: \(\operatorname{tw}(\Gamma_{p,q})\le3\)
(\cref{thm:sharp-two-port-upper}).  The \(2\times2\) cubical disk attains
equality with \(\Gamma_{p,q}=P_3\square P_3\)
(\cref{prop:two-port-grid}), so the filled two-port constant is exactly
\(c_2^\square=3\) (\cref{cor:sharp-two-port-constant}).  This sharp equality is not
silently transferred to the broader edge-discretized source class of
\cite{deligkas2026polygon}; the concluding discussion
(\cref{sec:edge-discussion}) isolates a bridge/core lemma which, from
two explicitly stated hypotheses imported from the source, gives the
conditional window \(3\le c_2^{\mathrm{edge}}\le11\) and records a
misalignment in the public v1 arithmetic of the source's Theorem~16
(\cref{cor:two-port-eleven,rem:two-port-transfer,rem:two-port-source-arithmetic}).

Finally, an enabling correction.  The augmented metric is not a cosmetic
choice.  Under the literal last-actual-edge reading of ``arrival axis''
in \cite{deligkas2026polygon}, one unit square has bend values
\((0,0,1,2)\) and its sector graph is a triangle on a hole-free square,
so the single-port tree lemma and the cycle bound fail outright for that
reading (\cref{prop:endpoint,cor:literal-c3} and
\cref{fig:endpoint}).  We prove that the augmented metric is exactly the
one computed by the operational layering procedure in the CMP proof
(\cref{prop:layering}); our positive results do not assert an affine or
Reeb theorem for the literal metric.

\paragraph{What this paper does not do.}
The cycle bound is not a solution to CMP with holes.  It controls graph
cycles in each \emph{single-port} quotient, but not straight baselines,
metric-preserving reductions, multi-port common refinements, collision
lifting, or the treewidth of a reduced input graph.
\Cref{sec:conclusion} makes the remaining gap explicit.  We do \emph{not}
obtain an \(f(k,h)n^{O(1)}\) algorithm; that extension remains open.

\paragraph{Related work.}
\emph{Multi-robot motion planning on graphs.}
Feasibility of coordinated motion on graphs is classical
\cite{kornhauser1984pebble,papadimitriou1994motion}, while optimization
versions are hard already in severely restricted settings: the
\((n^2-1)\)-puzzle \cite{ratner1990puzzle}, distance- and
makespan-optimal multi-agent path finding
\cite{yu2013structure,geft2022refined}, grid graphs with holes
\cite{banfi2017intractability}, and unlabeled variants
\cite{calinescu2008reconfigurations,solovey2016hardness}.  Algorithmic
counterpoints include constant-stretch reconfiguration of dense swarms
\cite{demaine2019coordinated} and the CG:SHOP 2021 challenge
\cite{fekete2022cgshop}.  The parameterized line most relevant here
comprises fixed-parameter algorithms in the number of robots on full
grids \cite{eiben2023parameterized}, in the number of robots plus the
treewidth of the input graph \cite{deligkas2024energy}, for sliding
robots via minor testing \cite{eiben2025sliding}, and in the number of
robots on discretized simple polygons \cite{deligkas2026polygon}, whose
sector machinery this paper studies; sectors themselves originate in a
drawing-extension setting \cite{bhore2023extending}.  In the continuous
setting, an exact polynomial-time algorithm for the min-sum objective
was recently obtained for two square robots in a rectilinear
environment, possibly with holes, while the min-makespan variant is
NP-hard \cite{agarwal2025twosquare}.

\emph{Link distance.}
Bend distance is the grid-graph analogue of rectilinear link distance.
Minimum-link paths admit linear-time algorithms in simple polygons
\cite{suri1986linear}; rectilinear link distance was studied by de~Berg
\cite{deberg1991rectilinear}, minimum-link paths among obstacles in
\cite{mitchell1992minimum,mitchell2015minimumlink,wang2017bicriteria},
and the area is surveyed in \cite{maheshwari2000link}.  That literature
computes distances, diameters, and paths; the present paper instead
studies the topology of the level sets of a link-type metric.

\emph{Reeb graphs.}
Reeb graphs originate in Morse theory \cite{reeb1946points} and are a
standard tool in shape analysis \cite{biasotti2008reeb}.  Loops of Reeb
graphs of \(2\)-manifolds were analyzed in
\cite{colemclaughlin2004loops}; the corank bound used here is due to
Gelbukh \cite{gelbukh2019approximation,gelbukh2018loops}.  Efficient
construction and approximation of Reeb graphs are treated in
\cite{parsa2013deterministic,dey2013reeb}, and categorical aspects in
\cite{desilva2016categorified}.  Our contribution on this side is a
discrete-to-PL realization theorem, proved directly for a metric on a
grid graph rather than via Morse-theoretic genericity.

\emph{CAT(0) cube complexes and treewidth.}
Hyperplanes and convex halfspaces of cube complexes go back to Sageev
\cite{sageev1995ends} (see \cite{farley2009sageev} for the statement
used here); the one-skeleta of CAT(0) cube complexes are exactly the
median graphs \cite{chepoi2000graphs}.  In robotics these complexes
usually arise as \emph{configuration} spaces---state complexes and
geodesic motion planning \cite{abrams2004state,ardila2014moving}---%
whereas in this paper the \emph{domain} itself is the CAT(0) object, and
hyperplane convexity of extrusions drives the two-port treewidth bound.
For treewidth background see
\cite{robertson1986graph,bodlaender1998partial}.

\paragraph{Scope.}
Every positive bend-distance and Reeb-graph theorem below concerns the
\emph{augmented} port metric on the one-skeleton of a finite, compact,
connected, pure two-dimensional cubical subcomplex of \(\R^2\).  The pure
cubical model is the natural one obtained by filling free unit cells.
Some polygon-discretization conventions instead retain every grid vertex
and edge contained in a polygonal free space; such a graph need not be the
one-skeleton of a pure unit-square complex.  We flag explicitly when an
example is in this broader edge-discretization model.  The distinction is
essential: the canonical squarewise extension requires every edge and
vertex to be supported by full squares.

\paragraph{Organization.}
\Cref{sec:prelim} repairs the metric and proves that the literal reading
falsifies the source's tree lemma.
\Cref{sec:state,sec:extension,sec:reeb,sec:bound} form a single pipeline
from the orientation-state graph to the cycle bound
\(\betti(\GammaP)\le h\).  \Cref{sec:examples} gives the three one-hole
obstructions that delimit this bound.  \Cref{sec:hf-single-port} then
reverses the lens and proves the exact one-sided geometry on cubical
disks.  \Cref{sec:joint} treats two-port common refinements in the same
order as the story: the wrong-parameter families and the sharp
filled-disk constant \(c_2^\square=3\).
\Cref{sec:edge-discussion} discusses the broader edge-discretized
scope, where our two-port bound is conditional, and
\cref{sec:conclusion} collects what the cycle bound does not prove and
the open problems.

\section{Cubical domains, bend distance, and sectors}
\label{sec:prelim}

This section fixes the object of study.  We define the literal and the
augmented port metrics, show on a single unit square that the literal
reading falsifies the tree lemma of \cite{deligkas2026polygon}, and
prove that the augmented metric is the one computed by the operational
layering procedure of that work.

For \((x,y)\in\Z^2\), write
\[
  \Qcell{x}{y}=[x,x+1]\times[y,y+1].
\]
A \emph{finite planar cubical domain} is a finite cubical complex
\(X\subset\R^2\) obtained as a union of unit squares and all their faces.
Throughout the positive part of the paper, \(X\) is connected and
\emph{pure two-dimensional}: every edge and vertex of \(X\) lies in at least
one unit square of \(X\).  Set \(G=X^{(1)}\).  The graph \(G\) is finite and
connected.

The number of holes is
\[
  h=\#\{\text{bounded connected components of }\R^2\setminus X\}.
\]
Thus \(S^2\setminus X\) has \(h+1\) connected components.

\begin{definition}[Literal and augmented port metrics]
\label{def:metrics}
A \emph{port} is a pair \(p=(a,d)\), where \(a\in V(G)\) and
\(d\in\{\HH,\VV\}\) is an axis.  The \emph{literal last-edge distance}
\(b^{\mathrm{last}}_p(v)\) is the minimum number of internal bends in a
nonzero simple \(v\)-to-\(a\) path whose last actual grid edge has axis
\(d\).  If no such path exists, set the value to \(+\infty\); separately
set \(b^{\mathrm{last}}_p(a)=0\).

For the \emph{augmented distance}, attach at \(a\) a formal, zero-length
terminal stub of axis \(d\).  If \(P\) is a nonzero \(v\)-to-\(a\) grid
path, define its augmented bend count as its number of internal bends plus
one when the last edge of \(P\) is orthogonal to \(d\).  The extra term is
the bend between the last actual edge and the virtual stub.  Set the count
of the zero-length path at \(a\) to zero, and define
\[
  b^+_p(v)=
  \min\{\text{augmented bend count of a \(v\)-to-\(a\) path}\}.
\]
Unless explicitly stated otherwise, every distance, level, sector, and
sector graph in the rest of the paper uses the augmented metric, and we
abbreviate \(b(v)=b^+_p(v)\).
\end{definition}

\begin{proposition}[The endpoint convention changes a unit square]
\label{prop:endpoint}
Let \(X=[0,1]^2\), \(a=(0,0)\), and \(p=(a,\HH)\).  In the order
lower-left, lower-right, upper-right, upper-left, the literal last-edge
values are
\[
  (0,0,1,2),
\]
whereas the augmented values are
\[
  (0,0,1,1).
\]
In particular, the literal values are not the corner values of an affine
function on the square, and the literal metric changes by two across the
left edge.
\end{proposition}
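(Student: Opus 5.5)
The plan is a direct enumeration. The unit square has exactly four vertices and four edges, and I will label them \(A=(0,0)\), \(B=(1,0)\), \(C=(1,1)\), \(D=(0,1)\). The edges \(AB\) and \(DC\) are horizontal, and \(BC\) and \(AD\) are vertical. The simple paths from any vertex to \(A\) are just the two arcs of the \(4\)-cycle. So for each vertex I list its at most two simple paths to \(A\), record the axis sequence of each, and evaluate both metrics of \cref{def:metrics}. Nothing beyond the definitions is needed.

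\emph{Literal values.} By convention \(b^{\mathrm{last}}_p(A)=0\).
\begin{itemize}[leftmargin=2em]
  \item For \(B\), the path \(B\to A\) is a single horizontal edge, so it has \(0\) internal bends.
  \item For \(C\), the path \(C\to B\to A\) is vertical then horizontal, ends horizontally, and has \(1\) bend. The other path \(C\to D\to A\) ends vertically, so it is inadmissible.
  \item For \(D\), the direct edge \(D\to A\) is vertical, so it is inadmissible. The only other simple path is \(D\to C\to B\to A\), with axes \(\HH,\VV,\HH\), which gives \(2\) bends.
\end{itemize}
This yields \((0,0,1,2)\).

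\emph{Augmented values.} Here every path is admissible, with a penalty of \(+1\) when its last edge is vertical.
\begin{itemize}[leftmargin=2em]
  \item For \(B\), the value is \(0\), as in the literal case.
  \item For \(C\), the path \(C\to B\to A\) gives \(1\), and \(C\to D\to A\) gives \(1+1=2\). The minimum is \(1\).
  \item For \(D\), the path \(D\to A\) gives \(0+1=1\), and \(D\to C\to B\to A\) gives \(2\). The minimum is \(1\).
\end{itemize}
So the augmented values are \((0,0,1,1)\). The main care point in this step is to confirm that simple paths exhaust the candidates. For the augmented metric this does not matter, because revisiting a vertex never reduces the bend count here, and minimality is already witnessed.

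\emph{Non-affinity.} An affine function \(g\) on the square satisfies the parallelogram identity \(g(A)+g(C)=g(B)+g(D)\). The literal values give \(0+1\neq 0+2\), so they cannot be the corner values of an affine function. Across the left edge \(AD\) the literal metric jumps from \(0\) to \(2\), a change of two. By contrast, the augmented values satisfy \(0+1=0+1\), and they form a horizontal stripe.
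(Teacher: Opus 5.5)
Your proof is correct and is essentially the paper's own argument: you enumerate the two simple arcs of the \(4\)-cycle from each corner under both conventions, then note that \(0+1\neq 0+2\) violates the affine compatibility identity. You also remark that non-simple walks cannot lower the augmented count; this is sound (it is the splicing argument of \cref{lem:state}, and here it is immediate because a count-zero route would have to be a horizontal segment ending at \(a\)), and the paper leaves it implicit.
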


\begin{proof}
For the literal metric, \((1,0)\) reaches \(a\) along one horizontal edge,
and \((1,1)\) uses the vertical edge to \((1,0)\) followed by a horizontal
edge, giving values \(0\) and \(1\).  A path from \((0,1)\) whose last
actual edge is horizontal must use
\[
  (0,1),(1,1),(1,0),(0,0);
\]
it has two internal bends, and no shorter path has the required last axis.
For the augmented metric, the single vertical edge from \((0,1)\) to \(a\)
is followed by the horizontal virtual stub and therefore costs one bend.
Finally,
\[
  0+1\neq 0+2,
\]
so the literal corner values violate the affine compatibility equation.
\end{proof}

\begin{figure}[t]
\centering
\begin{tikzpicture}[scale=1.35,dot/.style={circle,fill,inner sep=1.3pt}]
\begin{scope}
  \draw[gray!60] (0,0) rectangle (1,1);
  \draw[very thick,red!75!black] (0,0) -- (0,1);
  \draw[->,thick,blue!60!black] (0,0) -- (-0.5,0);
  \node[blue!60!black,font=\scriptsize,above] at (-0.35,0.03) {stub};
  \node[dot] at (0,0) {}; \node[dot] at (1,0) {};
  \node[dot] at (1,1) {}; \node[dot] at (0,1) {};
  \node[font=\scriptsize,below left] at (0,0) {$0$};
  \node[font=\scriptsize,below right] at (1,0) {$0$};
  \node[font=\scriptsize,above right] at (1,1) {$1$};
  \node[font=\scriptsize,above left] at (0,1) {$2$};
  \node[font=\scriptsize] at (0.5,-0.55) {literal $b^{\mathrm{last}}_p$};
\end{scope}
\begin{scope}[xshift=2.9cm]
  \draw[gray!60] (0,0) rectangle (1,1);
  \draw[->,thick,blue!60!black] (0,0) -- (-0.5,0);
  \node[dot] at (0,0) {}; \node[dot] at (1,0) {};
  \node[dot] at (1,1) {}; \node[dot] at (0,1) {};
  \node[font=\scriptsize,below left] at (0,0) {$0$};
  \node[font=\scriptsize,below right] at (1,0) {$0$};
  \node[font=\scriptsize,above right] at (1,1) {$1$};
  \node[font=\scriptsize,above left] at (0,1) {$1$};
  \node[font=\scriptsize] at (0.5,-0.55) {augmented $b^{+}_p$};
\end{scope}
\begin{scope}[xshift=5.8cm,yshift=0.1cm]
  \node[dot,label={[font=\scriptsize]below:$S_0$}] (a) at (0,0) {};
  \node[dot,label={[font=\scriptsize]below:$S_1$}] (b) at (1,0) {};
  \node[dot,label={[font=\scriptsize]above:$S_2$}] (c) at (0.5,0.75) {};
  \draw (a)--(b)--(c)--(a);
  \node[font=\scriptsize] at (0.5,-0.65)
    {$\Gamma^{\mathrm{last}}_p\cong C_3$};
\end{scope}
\begin{scope}[xshift=8.1cm,yshift=0.1cm]
  \node[dot,label={[font=\scriptsize]below:$S_0$}] (a) at (0,0) {};
  \node[dot,label={[font=\scriptsize]below:$S_1$}] (b) at (1,0) {};
  \draw (a)--(b);
  \node[font=\scriptsize] at (0.5,-0.65) {$\Gamma^{+}_p$};
\end{scope}
\end{tikzpicture}
\caption{The endpoint counterexample
(\cref{prop:endpoint,cor:literal-c3}).  The port
\(p=((0,0),\HH)\) is drawn as a horizontal virtual stub.  Under the
literal last-edge reading the corner values are \((0,0,1,2)\); the left
edge (red) changes the value by two, and the literal sector graph is a
triangle although the square has no hole.  Under the augmented
convention the values are \((0,0,1,1)\) and the sector graph is a single
edge.}
\label{fig:endpoint}
\end{figure}

\begin{corollary}[The literal metric violates the tree theorem and the
hole bound]
\label{cor:literal-c3}
In the setting of \cref{prop:endpoint}, define literal sectors as
connected components of equal \(b^{\mathrm{last}}_p\)-value and let
\(\Gamma^{\mathrm{last}}_p\) be the graph with one vertex per literal
sector and an edge whenever a grid edge joins two distinct sectors.  The
literal sectors are
\[
  S_0=\{(0,0),(1,0)\},\qquad
  S_1=\{(1,1)\},\qquad
  S_2=\{(0,1)\},
\]
and
\[
  \Gamma^{\mathrm{last}}_p\cong C_3,
  \qquad
  \betti(\Gamma^{\mathrm{last}}_p)=1>0=h.
\]
In particular, neither the single-port tree theorem nor the cycle bound
\(\betti(\Gamma^{\mathrm{last}}_p)\leq h\) extends to the literal
last-edge metric: the augmented convention used in the remainder of the
paper is necessary, not merely convenient.
\end{corollary}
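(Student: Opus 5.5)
The plan is to read everything off the literal corner values already computed in \cref{prop:endpoint}. In the order lower-left, lower-right, upper-right, upper-left these are \((0,0,1,2)\). The graph \(G\) is the \(4\)-cycle on the four corners, so equal-value components are determined by which edges join equal values.

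First we list the sectors. The bottom edge joins \((0,0)\) and \((1,0)\), both of value \(0\), so they lie in one sector \(S_0\). The corners \((1,1)\) and \((0,1)\) carry the distinct values \(1\) and \(2\). Neither shares a value with any other vertex, so each is a singleton sector: \(S_1=\{(1,1)\}\) and \(S_2=\{(0,1)\}\).

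Next we record the adjacencies, one per remaining grid edge of the square:
\begin{itemize}
\item the right edge \((1,0)\)--\((1,1)\) joins \(S_0\) and \(S_1\);
\item the top edge \((1,1)\)--\((0,1)\) joins \(S_1\) and \(S_2\);
\item the left edge \((0,1)\)--\((0,0)\) joins \(S_2\) and \(S_0\).
\end{itemize}
These are all four edges of \(G\), since the bottom edge lies inside \(S_0\). Hence \(\Gamma^{\mathrm{last}}_p\) has three vertices and all three possible edges, so it is \(C_3\). Its cycle rank is \(\betti=|E|-|V|+1=3-3+1=1\). Since \(\R^2\setminus[0,1]^2\) has no bounded component, \(h=0\), which gives \(1>0=h\).

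The concluding sentence then follows directly. A graph that is not a tree contradicts the single-port tree lemma, and \(\betti=1>h=0\) contradicts the cycle bound, in each case for this literal reading on a hole-free disk. For contrast we note that the augmented values \((0,0,1,1)\) give the sectors \(\{(0,0),(1,0)\}\) and \(\{(1,0)\text{-side top pair}\}=\{(1,1),(0,1)\}\), so the augmented sector graph is a single edge, as in \cref{fig:endpoint}.

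There is no real obstacle here. The only point needing care is that the literal values \((0,0,1,2)\) are correct, and this is supplied by \cref{prop:endpoint}, which we invoke rather than reprove.
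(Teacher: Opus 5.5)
Your proof is correct and is essentially the paper's argument. You read off the literal corner values \((0,0,1,2)\) from \cref{prop:endpoint}, identify \(S_0,S_1,S_2\), use the right, top, and left edges to get the three sector adjacencies (the left edge joining values \(2\) and \(0\)), and note that \(h=0\) for the filled square. The odd label ``\((1,0)\)-side top pair'' in your closing contrast remark is harmless, since the set \(\{(1,1),(0,1)\}\) you write is correct and that remark is not part of the proof.
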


\begin{proof}
By \cref{prop:endpoint}, the literal values at
\((0,0),(1,0),(1,1),(0,1)\) are \(0,0,1,2\).  The bottom edge joins the
two value-zero vertices, so they form one sector \(S_0\), while
\((1,1)\) and \((0,1)\) are singleton sectors.  The right edge joins
\(S_0\) to \(S_1\) and the top edge joins \(S_1\) to \(S_2\).  The left
edge joins \(S_2\) directly to \(S_0\): its endpoint values are \(2\)
and \(0\).  These are three edges on three vertices, so
\(\Gamma^{\mathrm{last}}_p\) is a triangle.  The filled unit square is a
disk, so \(h=0\).
\end{proof}

\begin{proposition}[Operational layering computes the augmented metric]
\label{prop:layering}
Let \(L_0\) be the maximal straight grid path of axis \(d\) through \(a\).
For \(i\geq0\), let \(L_{i+1}\) consist of all as-yet unlabelled vertices
that are reachable by a straight grid path from a vertex of \(L_i\).  Then
\[
  L_i=\{v\in V(G):b^+_p(v)=i\}.
\]
\end{proposition}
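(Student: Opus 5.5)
We prove the identity \(L_i=\{v\in V(G):b^+_p(v)=i\}\) by induction on \(i\), via a segment decomposition of paths. Every nonzero grid path from \(v\) to \(a\) splits uniquely into maximal straight runs \(\sigma_1,\dots,\sigma_m\), where \(\sigma_j\) and \(\sigma_{j+1}\) have orthogonal axes. Its number of internal bends is \(m-1\). Its augmented count is therefore \(m-1\) if \(\sigma_m\) has axis \(d\), and \(m\) otherwise.

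It is convenient to append the virtual stub as a run of length zero and axis \(d\). With this convention, \(b^+_p(v)\) is the minimum number of runs minus one, taken over run-sequences that end in a run of axis \(d\) through \(a\). One point needs care here. Straight paths in \(G\) can be traversed in either direction, and simplicity of paths is irrelevant. Any walk with at most \(k\) run-changes can be shortcut to a simple path with at most \(k\) changes, because deleting a loop never increases the number of runs. So we may freely work with walks.

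\emph{Base case \(i=0\).} A vertex \(v\) has \(b^+_p(v)=0\) exactly when either \(v=a\), or some path from \(v\) to \(a\) has a single run whose axis is \(d\). That single run is a straight grid path of axis \(d\) containing \(a\). Hence it lies inside the maximal straight axis-\(d\) grid path through \(a\), which is \(L_0\). Conversely, every vertex of \(L_0\) reaches \(a\) along \(L_0\) with zero bends.

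\emph{Inductive step.} Assume \(L_j=\{b^+_p=j\}\) for all \(j\le i\), and let \(U_i=\bigcup_{j\le i}L_j\).

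First, if \(v\in L_{i+1}\), then \(v\notin U_i\), so \(b^+_p(v)\ge i+1\). Moreover \(v\) reaches some \(u\in L_i\) by a straight path \(\tau\). Concatenate \(\tau\) with an optimal path from \(u\) whose first run, in the convention above, is \(\sigma_1\). If \(\tau\) has the same axis as \(\sigma_1\), the two runs merge and the count does not increase. Otherwise the count increases by exactly one. Either way \(b^+_p(v)\le i+1\).

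Conversely, suppose \(b^+_p(v)=i+1\), and take an optimal run decomposition. The vertex \(u\) where the first run ends (equivalently, where the second run starts) satisfies \(b^+_p(u)\le i\), since the remaining runs witness this. By the induction hypothesis \(u\in U_i\). In fact \(u\in L_i\): if \(b^+_p(u)<i\), then the straight run from \(v\) to \(u\) would give \(b^+_p(v)\le i\) by the same concatenation bound. Therefore \(v\) is reachable straight from a vertex of \(L_i\). Since \(v\notin U_i\), it is unlabelled at stage \(i+1\), and so \(v\in L_{i+1}\).

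\emph{Main obstacle.} The delicate step is the base case and the stub bookkeeping. We must check that the stub counts as a genuine run of axis \(d\), so that a last run orthogonal to \(d\) costs exactly one extra bend; this is precisely what yields \((0,0,1,1)\) in \cref{prop:endpoint}. We must also check that the maximal straight path defining \(L_0\) is taken inside \(G\), which may be blocked by holes. Finally, vertices unreachable from \(a\) do not arise, because \(G\) is connected. Hence every vertex receives some finite label, and the layers partition \(V(G)\).
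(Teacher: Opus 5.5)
Your proof is correct and takes essentially the paper's route. You decompose paths into straight runs with the stub as a zero-length axis-\(d\) run, prove the two inequalities by concatenating one straight piece with an optimal path, and pass from walks to simple paths by loop erasure. The only difference is presentation: you run an explicit induction on \(i\) that peels off the run at \(v\), whereas the paper reads the pieces outward from the stub and then concatenates the layer witnesses.
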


\begin{proof}
Read a path from the virtual stub outward from \(a\).  A path with \(i\)
augmented bends is a concatenation of \(i+1\) straight pieces, the first of
which has axis \(d\).  Here the first piece may be only the formal
zero-length stub at \(a\); it records axis \(d\) before the first actual
edge leaves \(a\).  Its successive pieces show inductively that its
endpoint is labelled by stage at most \(i\).  Conversely, the straight
paths witnessing membership in \(L_0,L_1,\ldots,L_i\) concatenate to a grid
walk with at most \(i\) augmented bends.  Erasing closed subwalks, as in the
proof of \cref{lem:state}, does not increase the augmented bend count and
produces a simple path.  If a vertex newly assigned to \(L_i\) had augmented
distance less than \(i\), the first implication would have assigned it at an
earlier stage.  Thus its distance is exactly \(i\).
\end{proof}

\begin{remark}[Relation to the CMP terminology]
The formal arrival-axis wording in \cite{deligkas2026polygon} admits the
literal reading of \cref{prop:endpoint}, while the operational procedure
stated later in that work is precisely the procedure of
\cref{prop:layering}.  By \cref{cor:literal-c3}, the literal reading is
inconsistent with the single-port tree lemma of
\cite{deligkas2026polygon} already on a unit square, so the operational
reading is the only one under which that lemma can hold.  Our results
concern the latter, augmented metric; they do not assert an affine/Reeb
theorem for \(b^{\mathrm{last}}_p\).
\end{remark}

\begin{definition}[Sector graph]
For \(m\in\Z_{\ge 0}\), a \emph{level-\(m\) sector} is a connected component
of the induced subgraph \(G[\{v:b(v)=m\}]\).  The simple graph
\(\GammaP\) has one vertex for each sector and an edge between two distinct
sectors whenever a grid edge has one endpoint in each.
\end{definition}

Because \(G\) is connected, \(\GammaP\) is connected.  We prove formally in
\cref{lem:adjacent-levels} that different adjacent sectors have consecutive
labels.

For a connected finite graph or multigraph \(J\), we use
\[
  \betti(J)=|E(J)|-|V(J)|+1
\]
for its cycle rank, counting parallel edges with multiplicity.

\section{An exact orientation-state representation}
\label{sec:state}

This section introduces the technical device on which every later proof
runs: a two-orientation \(0/1\)-weighted state graph that computes the
augmented metric and exposes its parity structure.

Construct an undirected \(0/1\)-weighted graph \(\widehat G_p\) as follows.
For every \(v\in V(G)\), create states \((v,\HH)\) and \((v,\VV)\).
A horizontal grid edge \(uv\) creates a weight-zero edge
\((u,\HH)(v,\HH)\); a vertical grid edge creates a weight-zero edge
\((u,\VV)(v,\VV)\).  Finally, every \(v\) creates a weight-one switch edge
\((v,\HH)(v,\VV)\).  Let
\[
  \delta_{\HH}(v)=
  \dist_{\widehat G_p}\bigl((a,d),(v,\HH)\bigr),
  \qquad
  \delta_{\VV}(v)=
  \dist_{\widehat G_p}\bigl((a,d),(v,\VV)\bigr).
\]
All these distances are finite because \(G\) is connected and every pair
of orientation states over one vertex is joined by a switch edge.

\begin{lemma}[Augmented state representation]
\label{lem:state}
For every \(v\in V(G)\),
\[
  b^+_p(v)=\min\{\delta_{\HH}(v),\delta_{\VV}(v)\}.
\]
Furthermore:
\begin{enumerate}[label=(\alph*),leftmargin=2em]
  \item
  \(\lvert\delta_{\HH}(v)-\delta_{\VV}(v)\rvert=1\);
  \item
  if \(uv\) is horizontal, then
  \(\delta_{\HH}(u)=\delta_{\HH}(v)\);
  \item
  if \(uv\) is vertical, then
  \(\delta_{\VV}(u)=\delta_{\VV}(v)\).
\end{enumerate}
\end{lemma}

\begin{proof}
Reverse a grid path from \(v\) to the port together with its formal stub.
Every straight segment is represented by weight-zero edges within one
orientation layer, every internal bend by one switch edge, and a mismatch
between the last actual edge and the stub by a switch at \(a\).  Conversely,
project a state-graph walk to \(G\) and remove stationary switches away from
the changes of orientation; the switch at \(a\), if present, records exactly
the virtual-stub contribution.  The projection is a grid walk with the same
augmented bend count.

It remains to justify passage from a walk to a simple path.  Include the
stub axis \(d\) at the beginning of the reversed edge-axis word.  If the
walk visits one grid vertex twice, delete the closed subwalk between two
such visits.  At the splice, the new contribution satisfies
\[
  \mathbf 1\{\alpha\neq\beta\}
  \leq
  \sum_{r=i}^{j}\mathbf 1\{\alpha_r\neq\alpha_{r+1}\},
\]
where the sum is the sequence of axis changes formerly connecting the
incoming axis \(\alpha\) to the outgoing axis \(\beta\).  Thus deletion does
not increase augmented cost.  Repeating this operation produces a simple
path of no greater cost.  This proves the formula for \(b^+_p\).

The two states above \(v\) are joined by a weight-one edge, so their
distances differ by at most one.  Every walk from the source orientation
\(d\) to the same orientation uses an even number of switch edges, while
every walk to the opposite orientation uses an odd number.  Hence the two
distances have opposite parity and cannot be equal; their difference is
exactly one.  A horizontal grid edge gives weight-zero connections in both
directions in the \(\HH\)-layer, proving (b).  The vertical statement is
identical.
\end{proof}

\begin{lemma}[Adjacent levels]
\label{lem:adjacent-levels}
For every grid edge \(uv\),
\[
  |b(u)-b(v)|\leq 1.
\]
If \(u\) and \(v\) lie in different sectors, then their labels differ by
exactly one.
\end{lemma}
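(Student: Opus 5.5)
The plan is to work entirely through the state representation of \cref{lem:state}. Without loss of generality let \(uv\) be horizontal; the vertical case follows by swapping the roles of \(\HH\) and \(\VV\).

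\emph{Key identity.} By \cref{lem:state}(b), \(\delta_{\HH}(u)=\delta_{\HH}(v)\); call this common value \(t\). By \cref{lem:state}(a), for each \(w\in\{u,v\}\) we have \(\delta_{\VV}(w)=t\pm1\). The formula of \cref{lem:state} then gives
\[
  b(w)=\min\{t,\delta_{\VV}(w)\}\in\{t-1,t\},
\]
with \(b(w)=t-1\) exactly when \(\delta_{\VV}(w)=t-1\), and \(b(w)=t\) otherwise.

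\emph{First claim.} Both \(b(u)\) and \(b(v)\) lie in the two-element set \(\{t-1,t\}\). Hence \(|b(u)-b(v)|\le1\).

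\emph{Second claim.} Suppose \(u\) and \(v\) lie in different sectors. If \(b(u)=b(v)\), then \(u\) and \(v\) are both vertices of the induced subgraph \(G[\{w:b(w)=b(u)\}]\). The grid edge \(uv\) is an edge of that induced subgraph, so \(u\) and \(v\) lie in the same connected component, that is, the same sector. This is a contradiction, so \(b(u)\ne b(v)\). Combined with the first claim, the labels differ by exactly one.

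\emph{Obstacle and fallback.} There is no real obstacle here. The only point needing care is that all \(\delta\) values are finite, which \cref{lem:state} already guarantees. Should one prefer not to rely on the parity statement (a), the first claim also follows from a direct path argument: append the edge \(vu\) to an optimal path from \(v\). This adds at most one internal bend and leaves the final-edge/stub term unchanged. One then still has to erase closed subwalks as in \cref{lem:state}.
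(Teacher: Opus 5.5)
Your proof is correct and essentially identical to the paper's. Both arguments take the common \(\HH\)-state value on a horizontal edge and use the parity statement of \cref{lem:state} to place both labels in \(\{t-1,t\}\). Both then rule out equal labels across different sectors because the edge itself lies in the induced equal-label subgraph. One small slip in your optional fallback: when \(v=a\), prepending the edge can add the stub term rather than leave it unchanged, but the resulting count is still at most one, so the bound survives.
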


\begin{proof}
Suppose first that \(uv\) is horizontal.  By \cref{lem:state},
\[
  H:=\delta_{\HH}(u)=\delta_{\HH}(v).
\]
At either endpoint \(z\in\{u,v\}\), the vertical-state distance is
\(H-1\) or \(H+1\).  Therefore
\[
  b(z)=\min\{H,\delta_{\VV}(z)\}\in\{H-1,H\},
\]
which proves the inequality.  The proof for a vertical edge exchanges
\(\HH\) and \(\VV\).  If the two labels are equal, the edge \(uv\) itself
connects the endpoints inside one equal-label component, so they are in the
same sector.  Thus a cross-sector edge has consecutive integer labels.
\end{proof}

\begin{remark}
The state representation is also algorithmic.  A \(0\)-\(1\) breadth-first
search computes all values in time \(O(|V(G)|+|E(G)|)\).  No continuous
minimum-link metric is assumed.
\end{remark}

\section{The no-saddle square lemma and the canonical extension}
\label{sec:extension}

The parity structure of \cref{lem:state} forbids saddle patterns on unit
squares.  This forces a canonical piecewise-affine extension of the bend
labels---with no triangulation choices---whose level sets the next
section studies.

List the corner values of a unit square counterclockwise, starting at the
lower-left corner.

\begin{lemma}[No-saddle square]
\label{lem:nosaddle}
For every unit square \(Q\) of \(X\), there is an integer \(m\) such that the
four corner values of \(b\) have one of the following five forms:
\[
\begin{array}{c}
  (m,m,m,m),\\[1mm]
  (m,m,m+1,m+1),\qquad (m+1,m+1,m,m),\\[1mm]
  (m,m+1,m+1,m),\qquad (m+1,m,m,m+1).
\end{array}
\]
In particular, neither a checkerboard saddle nor a three-versus-one corner
pattern can occur.
\end{lemma}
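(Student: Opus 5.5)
Label the corners of \(Q\) as \(v_1=(x,y)\), \(v_2=(x+1,y)\), \(v_3=(x+1,y+1)\), \(v_4=(x,y+1)\). The bottom edge \(v_1v_2\) and the top edge \(v_3v_4\) are horizontal; the right edge \(v_2v_3\) and the left edge \(v_4v_1\) are vertical. The plan is to work entirely with the state distances of \cref{lem:state}, not with \(b\) directly.

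By \cref{lem:state}(b), \(\delta_{\HH}\) takes a single value \(H_0\) on \(v_1,v_2\) and a single value \(H_1\) on \(v_3,v_4\). By \cref{lem:state}(c), \(\delta_{\VV}\) takes a value \(V_0\) on \(v_1,v_4\) and a value \(V_1\) on \(v_2,v_3\). Hence the four corners carry the state pairs \((\delta_{\HH},\delta_{\VV})\)
\[
  v_1:(H_0,V_0),\quad v_2:(H_0,V_1),\quad v_3:(H_1,V_1),\quad v_4:(H_1,V_0).
\]
\Cref{lem:state}(a) applied at each corner gives
\[
  |H_0-V_0|=|H_0-V_1|=|H_1-V_1|=|H_1-V_0|=1 .
\]

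The main step is to show that these four numbers are constrained to very few configurations. From \(|H_0-V_0|=|H_0-V_1|=1\), either \(V_0=V_1\) or \(\{V_0,V_1\}=\{H_0-1,H_0+1\}\).
\begin{enumerate}[label=(\roman*),leftmargin=2.2em]
  \item If \(V_0=V_1=:V\), then \(H_0,H_1\in\{V-1,V+1\}\).
  \item If \(V_0\neq V_1\), then \(H_1\) is at distance one from both \(H_0-1\) and \(H_0+1\), which forces \(H_1=H_0\). This is the mirror case: \(H_0=H_1=:H\) and \(V_0,V_1\in\{H-1,H+1\}\).
\end{enumerate}
So one orientation layer is constant on \(Q\), and the other takes values in \(\{c-1,c+1\}\) around that constant \(c\), each value being constant along the appropriate pair of sides.

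The last step is to compute \(b=\min(\delta_{\HH},\delta_{\VV})\) in each case. In case (i), corners \(v_1,v_2\) get \(\min(H_0,V)\) and corners \(v_3,v_4\) get \(\min(H_1,V)\).
\begin{itemize}
  \item If \(H_0=H_1\), all four values are equal.
  \item If \(H_0=V-1\) and \(H_1=V+1\), the values are \((V-1,V-1,V,V)\).
  \item If the roles of \(H_0\) and \(H_1\) are reversed, the values are \((V,V,V-1,V-1)\).
\end{itemize}
These are the constant form and the two horizontal-stripe forms. Case (ii) is symmetric and yields the constant form and the two vertical-stripe forms \((m,m+1,m+1,m)\) and \((m+1,m,m,m+1)\). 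A checkerboard or three-versus-one pattern never arises, because the corner values are always a function of only one of the two rows or only one of the two columns of \(Q\).

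The only delicate point is the small case analysis in case (ii), namely forcing \(H_1=H_0\). It reduces to the observation that the only integer at distance one from both \(c-1\) and \(c+1\) is \(c\), so I expect no real obstacle.
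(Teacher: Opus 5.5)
Your proof is correct and follows essentially the same route as the paper. Both arguments fix the four side values of the state distances on \(Q\) via \cref{lem:state}, split on whether one orientation layer is constant across \(Q\), and then read off \(b=\min(\delta_{\HH},\delta_{\VV})\). The only differences are cosmetic: the paper splits on the \(\HH\)-values and also invokes the even/odd parity of the two layers, whereas you split on \(V_0\) versus \(V_1\) and use only the distance-one relations of \cref{lem:state}(a), which indeed suffice.
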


\begin{proof}
It is enough to treat \(d=\HH\); exchanging the axes gives the other case.
By \cref{lem:state}, \(\delta_{\HH}\) is constant on the bottom edge of
\(Q\), with value \(h_0\), and on the top edge, with value \(h_1\).
Similarly, \(\delta_{\VV}\) has values \(v_0\) and \(v_1\) on the left and
right edges.  The \(h_i\) are even, the \(v_j\) are odd, and at each of the
four corners
\[
  |h_i-v_j|=1
  \qquad (i,j\in\{0,1\}).
\]

If \(h_0\neq h_1\), then parity and the displayed inequalities force
\(|h_0-h_1|=2\), while the only integer at distance one from both is their
midpoint.  Thus \(v_0=v_1=(h_0+h_1)/2\).  Taking
\(\min\{\delta_{\HH},\delta_{\VV}\}\) at each corner gives one of the two
horizontal stripe patterns.

If \(h_0=h_1=h_\star\), then each
\(v_j\in\{h_\star-1,h_\star+1\}\).  Equal choices give a constant pattern;
different choices give one of the two vertical stripe patterns.
\end{proof}

\begin{theorem}[Canonical cubical extension]
\label{thm:extension}
There is a unique continuous function \(f_p:X\to\R\) such that
\begin{enumerate}[label=(\alph*),leftmargin=2em]
  \item \(f_p(v)=b(v)\) at every grid vertex; and
  \item \(f_p\) is affine on every unit square of \(X\).
\end{enumerate}
On each square, \(f_p\) is either constant or changes linearly by one along
exactly one coordinate axis.
\end{theorem}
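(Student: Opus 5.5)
The plan is to construct \(f_p\) square by square from \cref{lem:nosaddle}, then check well-definedness on shared faces, continuity, and uniqueness. Affine here means a function of the form \(\alpha+\beta x+\gamma y\), not bilinear. On \(Q=\Qcell{x_0}{y_0}\) an affine function is determined by three corner values. It interpolates all four corners exactly when the compatibility equation
\[
  b(\text{lower-left})+b(\text{upper-right})=b(\text{lower-right})+b(\text{upper-left})
\]
holds. All five patterns of \cref{lem:nosaddle} satisfy it: the constant pattern gives \(2m=2m\), and each stripe pattern gives \(2m+1=2m+1\).

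First, on each square I will write the interpolant explicitly. The pattern \((m,m,m+1,m+1)\) gives \(f=m+(y-y_0)\), the pattern \((m+1,m+1,m,m)\) gives \(f=m+1-(y-y_0)\), and the two vertical stripe patterns give \(m\pm(x-x_0)\) shifted by a constant. The constant pattern gives \(f\equiv m\). This proves existence on each square. It also gives the final clause: on each square \(f_p\) is constant or changes by exactly one along one axis and is constant along the other.

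Second, I will glue the squarewise functions into one function on \(X\). Two distinct squares of \(X\) meet in a shared edge, a shared vertex, or not at all. Each squarewise affine function restricts to a shared edge as the linear interpolation of the two endpoint values \(b(u),b(v)\), so the restrictions agree there, and at vertices they all equal \(b(v)\). Because \(X\) is pure, every point of \(X\) lies in some square, so \(f_p\) is defined on all of \(X\). Because \(X\) is a finite union of closed squares and the pieces agree on the intersections, the pasting lemma gives continuity.

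Third, uniqueness follows because any function satisfying (a) and (b) is, on each square, an affine function with prescribed corner values. Such a function is unique: three non-collinear corners already determine it. Purity again ensures that every point lies in a square, so \(f_p\) is determined everywhere.

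The main obstacle is the existence step, namely that the four corner values are affinely compatible. This is exactly what \cref{lem:nosaddle} supplies by excluding saddle and three-versus-one patterns. Once that lemma is in hand, the remaining steps are routine gluing.
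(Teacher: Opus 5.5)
Your proposal is correct and follows essentially the same route as the paper's proof: the no-saddle lemma gives the affine compatibility relation \(b_{00}+b_{11}=b_{10}+b_{01}\) on each square, and the squarewise interpolants agree on shared edges because there they are determined by the endpoint values. Uniqueness then holds square by square. Your explicit interpolants and your explicit appeals to purity and the finite pasting lemma just spell out details the paper leaves implicit.
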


\begin{proof}
For the four corner values \(b_{00},b_{10},b_{11},b_{01}\) of any square,
\cref{lem:nosaddle} gives
\[
  b_{00}+b_{11}=b_{10}+b_{01}.
\]
This is exactly the compatibility condition for an affine function of two
variables to interpolate all four values.  The five patterns in
\cref{lem:nosaddle} give the stated local form.  Two adjacent squares have
the same endpoint values on their common edge, and their affine
restrictions to that edge therefore agree.  The local functions glue to a
continuous \(f_p\).  Uniqueness holds square by square.
\end{proof}

\section{Sectors as the simple shadow of a Reeb multigraph}
\label{sec:reeb}

We now extract the topology of the canonical extension: its integer
level sets are exactly the sectors, its open bands are products, and its
Reeb quotient is a finite multigraph of which the simple sector graph is
the parallel-edge shadow.

For a continuous \(f:X\to\R\), the Reeb space \(R_f\) is the quotient of
\(X\) that identifies two points precisely when they lie in the same
connected component of one level set \(f^{-1}(t)\).  We prove below, rather
than assume, that the quotient in our setting is a finite topological graph.

For an integer \(m\), set
\[
  L_m=f_p^{-1}(m),
  \qquad
  U_m=f_p^{-1}((m,m+1)).
\]

\begin{lemma}[Integer contours]
\label{lem:integer}
The connected components of \(L_m\) are in bijection with the level-\(m\)
sectors.
\end{lemma}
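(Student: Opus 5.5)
The plan is to describe \(L_m\) cell by cell using the local forms of \cref{thm:extension}, and then compare its connectivity with that of the induced subgraph \(G[\{v:b(v)=m\}]\).

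\emph{Step 1: local description.} Consider a closed cell \(\sigma\) of \(X\) (vertex, edge or square) and write \(\sigma\cap L_m\) for its intersection with the level set. On a vertex, \(f_p=b\) by \cref{thm:extension}(a). On an edge \(uv\), \(f_p\) is affine with endpoint values differing by at most one (\cref{lem:adjacent-levels}). So \(\sigma\cap L_m\) is one of: the whole edge (both labels \(m\)), a single endpoint (labels \(m\) and \(m\pm1\)), or empty. The edge can never meet \(L_m\) only in its interior, since an affine function taking consecutive integer values at the endpoints reaches an integer only at an endpoint. On a square, \cref{lem:nosaddle} gives the same conclusion. If the square is constant \(m\), the intersection is the whole square. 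If it is a stripe pattern \((m,m,m+1,m+1)\) or a rotation, the intersection is exactly the closed edge whose two corners carry value \(m\). Otherwise it is empty. Hence, in every case, \(\sigma\cap L_m\) is either empty or a closed subcell of \(\sigma\) all of whose vertices have label \(m\), and it is connected.

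\emph{Step 2: \(L_m\) is a subcomplex.} It follows that \(L_m\) is the union of those closed cells of \(X\) all of whose vertices have label \(m\). In particular, every point of \(L_m\) lies in a closed cell contained in \(L_m\), and that cell contains a label-\(m\) vertex.

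\emph{Step 3: the bijection.} Define \(\Phi\) by sending a level-\(m\) sector \(S\) to the component of \(L_m\) that contains \(S\). This is well defined: the vertices of \(S\) are joined by grid edges with both endpoints labelled \(m\), and by Step 1 each such edge lies entirely in \(L_m\), so \(S\) sits inside a single component.
\begin{itemize}
\item[] \emph{Surjectivity.} Every component of \(L_m\) contains a label-\(m\) vertex by Step 2, hence meets some sector.
\item[] \emph{Injectivity.} This is the main point. A component \(C\) of the subcomplex \(L_m\) is a finite CW subcomplex, so path-connectedness of \(C\) can be taken cellularly. Concretely, \(C\) is a union of closed cells each with vertex set inside \(\{b=m\}\), and two cells in \(C\) that meet share a face, hence share a vertex. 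Along a chain of cells of \(C\), consecutive cells therefore share a label-\(m\) vertex, and within each cell the label-\(m\) vertices are joined by that cell's edges, which lie in \(G[\{b=m\}]\). Thus any two vertices of \(C\) are connected in \(G[\{b=m\}]\), i.e.\ they lie in the same sector.
\end{itemize}

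\emph{Expected obstacle.} I expect the care to go into Step 1's claim that no square meets \(L_m\) in an interior segment or an isolated point. This is exactly where the no-saddle lemma and the integer-step property are used: in a stripe square the function ranges over \([m,m+1]\), so the level \(m\) is attained only on the extremal edge. For example, \(L_{m+1}\) in a square of pattern \((m,m,m+1,m+1)\) is the top edge, which is still a subcell. Once this is checked, the rest is standard combinatorial connectivity of subcomplexes.
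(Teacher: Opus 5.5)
Your proposal is correct and follows the paper's own argument. Both identify \(L_m\) with the level-\(m\) grid subgraph together with the filled constant-\(m\) squares, using \cref{lem:nosaddle} to see that a stripe square meets \(L_m\) only in a closed side, and then note that this filling neither merges nor splits components, which you check more explicitly via cellular connectivity. One small slip: in Step~1 your list of square cases sends the \((m-1)/m\) stripes to ``otherwise empty'', whereas \(L_m\) actually meets such a square in its high-valued side; your closing remark about \(L_{m+1}\) on an \((m,m+1)\) stripe covers exactly this case, and the conclusion of Step~1 is stated correctly.
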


\begin{proof}
On a constant-\(m\) square, all of the square lies in \(L_m\), and its
boundary is already a connected subgraph of level-\(m\) grid edges.  On a
stripe square with endpoint values \(m\) and \(m+1\), the intersection with
\(L_m\) is exactly its low-valued side; for an \((m-1,m)\) stripe the same
holds on its high-valued side.  Hence \(L_m\) is obtained from the
level-\(m\) grid subgraph by filling constant squares whose boundaries are
already connected.  Such fillings neither merge distinct grid components
nor split a component.
\end{proof}

\begin{lemma}[Open bands are products]
\label{lem:bands}
Let \(C\) be a connected component of \(U_m\).  Then:
\begin{enumerate}[label=(\alph*),leftmargin=2em]
  \item for every \(t\in(m,m+1)\), the intersection
  \(C\cap f_p^{-1}(t)\) is one connected component of \(f_p^{-1}(t)\);
  \item the closure \(\overline C\) meets exactly one component of \(L_m\)
  and exactly one component of \(L_{m+1}\).
\end{enumerate}
\end{lemma}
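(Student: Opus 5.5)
The plan is to construct an explicit product structure on each open band, namely a homeomorphism between $U_m$ and $f_p^{-1}(t_0)\times(m,m+1)$ that commutes with $f_p$. Both parts of the lemma then follow from it.

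\emph{Local structure.} By \cref{thm:extension}, a square $Q$ meets $U_m$ only if it is a stripe square with values $m$ and $m+1$. Constant squares have integer values, and squares with other stripe values have image disjoint from $(m,m+1)$. On such a stripe square, $f_p$ is a coordinate projection, suitably affinely rescaled. Hence $Q\cap U_m$ is an open strip $I\times(0,1)$ in stripe coordinates. Each level $f_p^{-1}(t)\cap Q$ is a single segment parallel to the interface axis, and the gradient flow along the stripe direction carries each level onto each other level. Its limits are the low side, which lies in $L_m$, and the high side, which lies in $L_{m+1}$.

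\emph{Gluing.} Two stripe squares that share an edge $e$ meeting $U_m$ must have $e$ transverse to the levels, with endpoint values $m$ and $m+1$. On $e$ both affine pieces restrict to the same linear function. So the squarewise straight-line flows, each being motion along the gradient direction in its own square, agree on $e$ as parametrized by $f_p$. Shared vertices carry integer values, so they never lie in $U_m$. The flows therefore glue to a continuous map $\Phi:U_m\to f_p^{-1}(t_0)\times(m,m+1)$, defined by $x\mapsto(\text{flow of }x\text{ to level }t_0,\ f_p(x))$. Its inverse is obtained by flowing back, and its continuity is checked square by square, since $U_m$ is covered by finitely many closed-in-$U_m$ pieces $Q\cap U_m$.

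\emph{Conclusion.} Under $\Phi$, every component $C$ of $U_m$ equals $K\times(m,m+1)$ for a component $K$ of $f_p^{-1}(t_0)$. For every $t$, the slice $C\cap f_p^{-1}(t)$ is homeomorphic to $K$, hence connected. It is a full component of $f_p^{-1}(t)$, because $f_p^{-1}(t)\subset U_m$ and the components of $U_m$ are disjoint open sets. This proves (a). For (b), $\overline C\setminus C$ consists of the limit points as $t\downarrow m$ and $t\uparrow m+1$. The set $\overline C\cap L_m$ is the image of the continuous limiting flow map $K\to L_m$, which sends each point to the low side of its square; being the image of a connected set, it is connected. It is nonempty because $C$ contains points of some stripe square whose low side lies in $L_m$. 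A connected subset meets exactly one component of $L_m$, and likewise for $L_{m+1}$. Through \cref{lem:integer}, these components correspond to single sectors.

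The main obstacle is continuity of the limiting maps $K\to L_m$ and $K\to L_{m+1}$ across square boundaries, together with the claim that the flows agree on shared transverse edges. I would handle this by parametrizing each stripe square by $(s,f_p)$, where $s$ is the interface coordinate. On a shared transverse edge the coordinate $s$ is constant, so the two parametrizations coincide there. The flow is then the identity in $s$ square by square, and the limits are the obvious extensions to $f_p=m$ and $f_p=m+1$. This gives continuity via the pasting lemma on the finite closed cover by squares.
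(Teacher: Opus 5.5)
Your proposal is correct and is essentially the paper's argument. The paper likewise writes each stripe square's piece of \(U_m\) as \(I_Q\times(m,m+1)\) and glues these products along varying edges, where the gluing is independent of \(t\), so that components have the form \(K_0\times(m,m+1)\); it proves (b) by propagating along chains of stripe squares, since squares glued along a varying edge share that edge's low and high endpoint vertices. Realizing \(K\) as the level set \(f_p^{-1}(t_0)\) through the gradient flow, and using the continuous limit map \(K\to L_m\) in place of that chain propagation, is a more explicit topological packaging of the same idea.
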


\begin{proof}
Only \(m/(m+1)\)-stripe squares meet \(U_m\).  Within each such square,
\((Q\cap U_m,f_p)\) is a product
\(I_Q\times(m,m+1)\), with the second coordinate equal to the function
value.  Two stripe squares glue across a varying edge, and that gluing is
the same at every \(t\in(m,m+1)\).  Gluing the intervals \(I_Q\) along these
varying edges produces a finite one-dimensional complex \(K\), and each
connected component of \(U_m\) has the form
\[
  K_0\times(m,m+1)
\]
for a connected component \(K_0\) of \(K\).  This proves (a).

When adjacent stripe squares glue along a varying edge, their low endpoints
are the same level-\(m\) grid vertex and their high endpoints are the same
level-\((m+1)\) grid vertex.  Propagating along a connected chain of stripe
squares shows that all low sides of \(C\) lie in one component of \(L_m\),
and all high sides lie in one component of \(L_{m+1}\).  Each side is
nonempty, proving (b).
\end{proof}

Define a finite multigraph \(\calR\) as follows.  Its vertices are all
components of all nonempty \(L_m\).  Every component \(C\) of an open band
\(U_m\) contributes one edge between the two integer contours met by
\(\overline C\), as specified by \cref{lem:bands}.  Different band
components may contribute parallel edges.

\begin{theorem}[Reeb realization and simple shadow]
\label{thm:shadow}
The geometric realization of \(\calR\) is homeomorphic to \(\Reeb\).
Under the bijection of \cref{lem:integer}, the simple sector graph
\(\GammaP\) is obtained from \(\calR\) by merging every nonempty family of
parallel edges into one edge.  Therefore
\[
  \betti(\GammaP)\leq \betti(\Reeb).
\]
\end{theorem}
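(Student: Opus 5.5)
The plan is to build an explicit continuous surjection from \(X\) onto the geometric realization \(|\calR|\) and to check that its fibres are exactly the points identified in \(\Reeb\). Each integer contour component \(Z\subset L_m\) goes to the vertex of \(\calR\) labelled \(Z\). Each band component \(C\cong K_0\times(m,m+1)\) of \(U_m\), using the product structure from the proof of \cref{lem:bands}, is sent by the projection to the second coordinate. That second coordinate equals \(f_p\), so \(C\) maps onto the open edge \(e_C\), parametrized by \((m,m+1)\). Call the resulting map \(\phi:X\to|\calR|\).

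First I check that \(\phi\) is continuous. It is continuous on each closed square, since on a constant square it is constant. On an \(m/(m+1)\) stripe square it is the affine function \(f_p\) composed with the parametrization of a single edge. As \(f_p\to m\) or \(f_p\to m+1\), the image tends to the endpoint vertex, and by \cref{lem:bands}(b) that endpoint is exactly the contour component met by the low or high side. Squares are finitely many and closed, so the pasting lemma gives continuity on \(X\).

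Next I check that the fibres of \(\phi\) are the level-set components. For integer \(t=m\), the fibre over a vertex is one component of \(L_m\), by \cref{lem:integer}. For non-integer \(t\), a point of \(e_C\) has preimage \(C\cap f_p^{-1}(t)\), which is one component of \(f_p^{-1}(t)\) by \cref{lem:bands}(a). Conversely, every component of \(f_p^{-1}(t)\) lies in a single \(C\), because the \(C\) are the components of \(U_m\). Hence \(\phi\) is constant exactly on the Reeb classes and induces a continuous bijection \(\bar\phi:\Reeb\to|\calR|\). Now \(\Reeb\) is compact as a quotient of the compact space \(X\), and \(|\calR|\) is Hausdorff as the realization of a finite multigraph. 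A continuous bijection from a compact space to a Hausdorff space is a homeomorphism. I expect the main technical care to be in the continuity check at the band ends. The concern is that the closure of one band could approach two different contour components on its low side; \cref{lem:bands}(b) excludes exactly this.

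For the shadow statement, I identify vertices of \(\calR\) with sectors via \cref{lem:integer}. Suppose \(\GammaP\) has an edge, i.e.\ a grid edge \(uv\) with \(b(u)=m\) and \(b(v)=m+1\) (\cref{lem:adjacent-levels}). Since \(X\) is pure, \(uv\) lies in some square \(Q\). By \cref{lem:nosaddle}, \(Q\) is an \(m/(m+1)\) stripe with \(uv\) a varying edge, so the interior of \(uv\) lies in a band component \(C\) whose closure meets the sectors of \(u\) and \(v\). This gives an edge of \(\calR\) between them. Conversely, an edge of \(\calR\) from a band \(C\) contains a stripe square, and the square's varying edge is a grid edge joining the two sectors. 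Thus two sectors are adjacent in \(\GammaP\) iff at least one edge of \(\calR\) joins them, and \(\calR\) has no loops since bands join levels \(m\neq m+1\). So \(\GammaP\) is the simplification of \(\calR\). Merging each parallel family removes edges without changing vertices or connectivity, so \(\betti\) cannot increase, and \(\betti(\GammaP)\le\betti(\calR)=\betti(\Reeb)\), the last equality by the homeomorphism.
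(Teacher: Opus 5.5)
Your proposal is correct and follows essentially the same route as the paper's proof. You build an explicit map \(X\to|\calR|\) whose fibres are the level-set components by \cref{lem:integer,lem:bands}, conclude with the compact-to-Hausdorff bijection argument, and compare adjacent vertex pairs in both directions via stripe squares. The only differences are cosmetic: you define the map pointwise and check continuity square by square, whereas the paper defines it cell by cell and checks agreement on shared faces, and you spell out the purity/no-saddle step and the absence of loops explicitly.
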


\begin{proof}
We first establish the topology rather than assuming in advance that the
Reeb quotient is a graph.  Let \(|\calR|\) be the geometric realization of
the finite multigraph just constructed.  Define
\[
  q:X\longrightarrow|\calR|
\]
cell by cell.  A constant integer-valued square is sent to the vertex
corresponding to its integer contour.  If a stripe square belongs to a
component \(C\subseteq U_m\), send its entire closed square to the closed
edge \(e_C\), with coordinate \(f_p(x)-m\in[0,1]\).  Two stripe squares
sharing a varying edge belong to the same band component, so their maps
agree there.  On a shared constant edge, both maps take the value of the
corresponding integer-contour vertex; the same is true at a shared integer
vertex.  The maps therefore agree on all intersections in the finite
closed-cell cover of \(X\).  The finite gluing lemma gives a continuous map
\(q\).

The fibers of \(q\) are exactly the connected components of level sets of
\(f_p\): this is \cref{lem:integer} at integer values and
\cref{lem:bands} at non-integer values.  Thus \(q\) factors through a
continuous bijection
\[
  \bar q:\Reeb\longrightarrow|\calR|.
\]
The Reeb quotient is compact because it is a quotient of the compact
space \(X\), while the finite graph \(|\calR|\) is Hausdorff.  Therefore
\(\bar q\) is a homeomorphism.  This proves at the same time, without
circularity, that \(\Reeb\) is a finite topological multigraph.

It remains to compare adjacencies.  Suppose a grid edge joins a level-\(m\)
sector to a level-\((m+1)\) sector.  It is a varying side of at least one
stripe square, so a component of \(U_m\) has those contours as its
endpoints.  Conversely, every component of \(U_m\) contains a stripe
square, whose varying side is a grid edge joining its two endpoint sectors.
Thus \(\calR\) and \(\GammaP\) have the same adjacent vertex pairs; only
multiplicity is forgotten in \(\GammaP\).  Merging parallel edges cannot
increase cycle rank.
\end{proof}

\begin{remark}
The word ``shadow'' in \cref{thm:shadow} is substantive.  Even when
\(\GammaP\) is a tree, \(\Reeb\) may have a cycle carried by two distinct
open-band components with the same endpoint sectors; see
\cref{ex:parallel}.
\end{remark}

\section{A hole bound and a computable feedback set}
\label{sec:bound}

With the Reeb realization in hand, the cycle bound follows from general
topology in three steps: Gelbukh's cycle-rank theorem, a corank
estimate, and Alexander duality.

Gelbukh proved that if \(Y\) is connected and locally path-connected and the
Reeb graph \(R_f\) of a continuous function is a finite graph, then
\[
  \betti(R_f)\leq\corank\pi_1(Y),
\]
where the corank is the maximum rank of a free-group quotient
\cite[Theorem 3.1]{gelbukh2019approximation}.  The hypotheses hold for the
finite cubical complex \(X\) and for \(f_p\) by \cref{thm:shadow}.

\begin{theorem}[Cycle rank is bounded by holes]
\label{thm:holebound}
For the augmented metric of every port \(p\) in a finite connected pure
planar cubical domain with \(h\) holes,
\[
  \boxed{\betti(\GammaP)\leq h.}
\]
More precisely,
\[
  \betti(\GammaP)
  \leq \betti(\Reeb)
  \leq \corank\pi_1(X)
  \leq \operatorname{rank}H_1(X;\Z)
  =h.
\]
\end{theorem}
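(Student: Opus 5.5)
The chain of inequalities is proved link by link, from left to right. The first inequality \(\betti(\GammaP)\le\betti(\Reeb)\) is already contained in \cref{thm:shadow}. There the Reeb quotient is identified with the finite multigraph \(\calR\), and \(\GammaP\) is its parallel-edge simplification. Merging a family of \(k\ge1\) parallel edges into one removes \(k-1\) edges and no vertices, and connectedness is preserved, so cycle rank does not increase.

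The second inequality \(\betti(\Reeb)\le\corank\pi_1(X)\) is Gelbukh's theorem \cite[Theorem 3.1]{gelbukh2019approximation}, and I only need to check its hypotheses.
\begin{itemize}
  \item \(X\) is a finite connected CW (cubical) complex, hence connected and locally path-connected.
  \item \(f_p\) is continuous by \cref{thm:extension}.
  \item \(\Reeb\) is a finite graph by \cref{thm:shadow}.
\end{itemize}
Since \(\Reeb\cong|\calR|\), the cycle rank used by Gelbukh (the first Betti number of the topological graph) agrees with our combinatorial \(|E|-|V|+1\), counting parallel edges.

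The third inequality \(\corank\pi_1(X)\le\operatorname{rank}H_1(X;\Z)\) is pure algebra. Suppose \(\pi_1(X)\to F_r\) is a surjection onto a free group of rank \(r\). Abelianizing gives a surjection \(H_1(X;\Z)=\pi_1(X)^{\mathrm{ab}}\to\Z^r\). Since \(H_1(X;\Z)\) is finitely generated (\(X\) is a finite complex), this forces \(r\le\operatorname{rank}H_1(X;\Z)\).

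The final equality \(\operatorname{rank}H_1(X;\Z)=h\) is where the real care lies, and it is the step I expect to be the main obstacle. I would use Alexander duality in \(S^2\). View the compact set \(X\) as a subset of \(S^2=\R^2\cup\{\infty\}\). Because \(X\) is a finite polyhedron, it is a compact locally contractible subset, so \v{C}ech and singular cohomology agree. Alexander duality then gives
\[
\tilde H^1(X)\cong\tilde H_0(S^2\setminus X).
\]
By the hole count fixed in \cref{sec:prelim}, \(S^2\setminus X\) has \(h+1\) components, so the right-hand side is \(\Z^h\). Since \(X\) is a finite planar 2-complex, \(H_1(X;\Z)\) is free: a planar complex has \(H_2=0\) and torsion-free \(H_1\), and universal coefficients give \(H^1\cong\operatorname{Hom}(H_1,\Z)\oplus\operatorname{Ext}(H_0,\Z)\) with the Ext term vanishing. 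Hence \(\operatorname{rank}H_1(X)=\operatorname{rank}H^1(X)=h\).

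The delicate point is that pure cubical complexes may have pinch vertices, where two squares meet only at a corner. At such a vertex \(X\) is not a manifold with boundary, so I deliberately avoid any surface-with-boundary argument and rely only on duality for compact polyhedra. As an alternative check, one can thicken \(X\) slightly to a regular neighbourhood \(N\) that deformation retracts onto \(X\) and has the same complementary components. \(N\) is a compact planar surface with \(h+1\) boundary circles, so its Euler characteristic gives \(\operatorname{rank}H_1(N)=h\).

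Combining the four links gives the boxed bound \(\betti(\GammaP)\le h\).
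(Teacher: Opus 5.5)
Your proposal is correct and follows the paper's proof link by link: \cref{thm:shadow} for the first inequality, Gelbukh's theorem for the second, abelianization of a surjection onto a free group for the third, and Alexander duality in \(S^2\) combined with the universal coefficient theorem for \(\operatorname{rank}H_1(X;\Z)=h\). Your additional remarks (finite generation of \(H_1\), vanishing of the Ext term because \(H_0\) is free, and the optional regular-neighbourhood check) only spell out details the paper leaves implicit.
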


\begin{proof}
The first inequality is \cref{thm:shadow}; the second is Gelbukh's theorem.
If \(\pi_1(X)\twoheadrightarrow F_r\), then abelianization gives a
surjection \(H_1(X;\Z)\twoheadrightarrow\Z^r\).  Hence
\(\corank\pi_1(X)\leq\operatorname{rank}H_1(X;\Z)\).

Finally, \(X\) is compact and locally contractible.  Alexander duality
\cite[Corollary 3.45]{hatcher2002algebraic} gives
\[
  \widetilde H^1(X;\Z)
  \cong \widetilde H_0(S^2\setminus X;\Z).
\]
The complement has \(h+1\) components, so the group on the right has rank
\(h\).  The universal coefficient theorem, together with the fact that
\(H_0(X;\Z)\) is free, gives
\[
  \operatorname{rank}H^1(X;\Z)
  =\operatorname{rank}H_1(X;\Z).
\]
Hence \(\operatorname{rank}H_1(X;\Z)=h\), as required.
\end{proof}

When \(h=0\), \cref{thm:holebound} recovers acyclicity of the single-port
sector graph.  By itself it does not recover the straight-baseline claims
used in the hole-free CMP proof; those are geometric rather than purely
graph-theoretic.  \Cref{sec:examples} shows that with one hole no such
geometric recovery is possible in general; \cref{sec:hf-single-port}
then shows that on cubical disks the multiplicity-sensitive Reeb tree
and state parity recover an exact one-sided extrusion theorem.

\begin{lemma}[Feedback vertices]
\label{lem:fvs}
Every connected finite simple graph \(J\) has a feedback vertex set \(D\)
with
\[
  |D|\leq \betti(J).
\]
Given \(J\), such a set can be found in linear time.
\end{lemma}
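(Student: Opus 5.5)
The plan is to take a spanning tree \(T\) of \(J\) and charge each non-tree edge to one of its endpoints. Since \(J\) is connected, \(|E(T)|=|V(J)|-1\), so the set \(F=E(J)\setminus E(T)\) of non-tree edges has exactly \(|E(J)|-|V(J)|+1=\betti(J)\) elements. For each \(e\in F\), choose one endpoint of \(e\), say the first one in a fixed ordering, and let \(D\) be the set of chosen endpoints. Then \(|D|\le|F|=\betti(J)\); the inequality may be strict because two non-tree edges can share a chosen endpoint.

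Next we check that \(D\) is a feedback vertex set. Deleting \(D\) removes every vertex that was chosen as an endpoint of a non-tree edge. Every edge of \(J-D\) therefore lies in \(T\). Hence \(J-D\) is a subgraph of the forest \(T-D\) and is acyclic. Because \(J\) is simple, there are no loops or parallel edges to worry about: every cycle of \(J\) has length at least three and must use some non-tree edge. That non-tree edge has an endpoint in \(D\), so the cycle is destroyed. The direct argument ``\(J-D\subseteq T\)'' already covers this.

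For the running time, we compute \(T\) by breadth-first or depth-first search in \(O(|V(J)|+|E(J)|)\) time and mark the tree edges. We then scan the edge list once: for each unmarked edge we insert one endpoint into \(D\), using a Boolean array indexed by vertices so that repeated insertions cost \(O(1)\) and produce no duplicates. The total is linear.

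No step here is a real obstacle. The only point needing care is the count \(|F|=\betti(J)\), which uses the connectivity hypothesis. This count is also what makes the lemma combine with \cref{thm:holebound}: applying the lemma to \(J=\GammaP\) gives at most \(h\) sector vertices, computable in linear time once \(\GammaP\) is built from the \(0\)-\(1\) BFS labels.
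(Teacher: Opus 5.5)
Your proof is correct and follows the paper's own argument exactly: choose a spanning tree \(T\), put one endpoint of each cotree edge into \(D\), and note that \(J-D\) is a subgraph of \(T\) and hence a forest, with \(|D|\le|E(J)\setminus E(T)|=\betti(J)\). Your explicit accounting of the linear running time (one BFS/DFS plus a single scan of the edges with a Boolean marker array) fills in the detail that the paper leaves implicit.
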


\begin{proof}
Choose a spanning tree \(T\).  For every cotree edge
\(e\in E(J)\setminus E(T)\), put one arbitrary endpoint of \(e\) into
\(D\), discarding duplicates.  Then
\[
  |D|\leq |E(J)\setminus E(T)|=\betti(J).
\]
Every edge outside \(T\) is incident to \(D\), so \(J-D\) is a subgraph of
\(T\) and hence a forest.
\end{proof}

\begin{corollary}[Computable topological defects]
\label{cor:algorithm}
For a fixed port, one can compute in
\(O(|V(G)|+|E(G)|)\) time the sector graph and a set
\(D_p\subseteq V(\GammaP)\) such that
\[
  |D_p|\leq h,
  \qquad
  \GammaP-D_p\text{ is a forest}.
\]
\end{corollary}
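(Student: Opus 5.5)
The plan is to assemble the algorithm from three linear-time stages: compute the labels, build the sector graph, and extract a feedback set. The size bound then comes from \cref{thm:holebound} and \cref{lem:fvs}.

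\emph{Stage 1 (labels).} Build the state graph \(\widehat G_p\). It has \(2|V(G)|\) vertices and \(|E(G)|+|V(G)|\) edges. Run \(0\)-\(1\) breadth-first search from \((a,d)\) with a deque, which takes \(O(|V(G)|+|E(G)|)\) time. By \cref{lem:state}, setting \(b(v)=\min\{\delta_{\HH}(v),\delta_{\VV}(v)\}\) gives the augmented labels.

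\emph{Stage 2 (sector graph).} Compute the connected components of the subgraph of \(G\) formed by the edges \(uv\) with \(b(u)=b(v)\). One graph search over \(G\) that ignores the other edges suffices, and it assigns each vertex a sector identifier. Each of the remaining edges joins two distinct sectors; by \cref{lem:adjacent-levels} their labels are consecutive. Each such edge gives a candidate edge between the two sector identifiers.

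The only technical point is producing the simple graph \(\GammaP\) without a hashing or sorting penalty. I will remove parallel copies by a two-pass radix/bucket sort of the pairs \((\min,\max)\) of sector identifiers. Both coordinates are bounded by the number of sectors, which is at most \(|V(G)|\), so this is linear. A scan then keeps one representative per pair. Alternatively, for each sector one can keep a "last-seen" marker array while scanning its incident candidate edges. Either way \(|V(\GammaP)|+|E(\GammaP)|=O(|V(G)|+|E(G)|)\), and the whole construction is linear. This deduplication is the step I expect to need the most care, since a naive set-based deduplication would only give expected linear time.

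\emph{Stage 3 (feedback set).} \(\GammaP\) is connected because \(G\) is. Apply \cref{lem:fvs}: compute a spanning tree by BFS in \(O(|V(\GammaP)|+|E(\GammaP)|)\) time, and put one endpoint of each cotree edge into \(D_p\), using a marker array to discard duplicates. Then \(\GammaP-D_p\) is a forest, and
\[
  |D_p|\le\betti(\GammaP)\le h
\]
by \cref{thm:holebound}.

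Note that \(h\) never needs to be computed. The bound is a consequence of the theorem, not an input to the algorithm, so the total running time is \(O(|V(G)|+|E(G)|)\).
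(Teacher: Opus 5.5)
Your proposal is correct and follows the paper's proof. Both run $0$-$1$ breadth-first search in $\widehat G_p$, compute components within each label, build $\GammaP$, and apply the spanning-tree feedback set of \cref{lem:fvs}, with $|D_p|\le\betti(\GammaP)\le h$ coming from \cref{thm:holebound}. Your explicit linear-time deduplication of parallel sector adjacencies fills in a detail the paper leaves implicit in ``all subsequent graphs have linear size,'' and it is needed: applying the cotree argument to the undeduplicated multigraph would not give the bound $h$.
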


\begin{proof}
Use \(0\)-\(1\) breadth-first search in \(\widehat G_p\), compute connected
components within each integer label, build \(\GammaP\), and apply
\cref{lem:fvs}.  The state graph and all subsequent graphs have linear size.
\end{proof}

\section{Three one-hole obstructions}
\label{sec:examples}

\Cref{thm:holebound} is what survives with holes.  This section shows
that it is essentially \emph{all} that survives: three coordinate-level
examples, each with a single hole, defeat the three geometric
strengthenings one might hope to import from the hole-free theory.  The
first shows that the simple sector graph can hide Reeb cycles, so the
topology cannot be read off the quotient alone; the second defeats
unique predecessors and port-side separators; the third defeats straight
baselines even when the quotient is a tree.

Except for the literal endpoint counterexample in
\cref{prop:endpoint}, all bend labels in this section use the augmented
metric from \cref{def:metrics}.

\subsection{The sector graph need not equal the Reeb graph}

\begin{example}[A Reeb cycle erased by simplification]
\label{ex:parallel}
Let
\[
  X=[0,3]^2\setminus
  \operatorname{int}\bigl([1,2]\times[1,2]\bigr),
  \qquad
  p=((0,1),\HH).
\]
Equivalently, \(X\) is the union of the eight cells in a \(3\times3\)
cell square other than \(Q_{1,1}\).  It is a pure cubical domain with one
hole; see \cref{fig:parallel}.  At grid vertices the bend values are
\[
\begin{array}{c|cccc}
  &x=0&x=1&x=2&x=3\\ \hline
y=3&1&1&1&1\\
y=2&1&1&1&1\\
y=1&0&0&0&0\\
y=0&1&1&1&1
\end{array}
\]
There are three sectors:
\[
\begin{aligned}
  S_0&=\{(x,1):0\leq x\leq3\},\\
  S_+&=\{(x,y):0\leq x\leq3,\ y\in\{2,3\}\},\\
  S_-&=\{(x,0):0\leq x\leq3\}.
\end{aligned}
\]
Thus \(\GammaP\) is the path \(S_+-S_0-S_-\).

For every \(t\in(0,1)\), however, the portion of \(f_p^{-1}(t)\) above
\(S_0\) has two components, one on each side of the missing cell.  They
give two parallel edges between \(S_0\) and \(S_+\) in \(\calR\).
The lower band gives one edge between \(S_0\) and \(S_-\).  Hence
\[
  \betti(\GammaP)=0,
  \qquad
  \betti(\Reeb)=1=h.
\]
\end{example}

\begin{proof}[Verification]
Every vertex on \(y=1\) reaches the port horizontally without a bend.
Every vertex above or below that row reaches it by one vertical segment
followed by a horizontal segment.  This proves the table and the sector
description.  The open upper band crosses the cells \(Q_{0,1}\) and
\(Q_{2,1}\), which cannot connect through the absent \(Q_{1,1}\).
Both components have the same integer endpoint sectors.  The lower band is
connected.
\end{proof}

\begin{figure}[t]
\centering
\begin{tikzpicture}[scale=0.85,dot/.style={circle,fill,inner sep=1.1pt}]
\begin{scope}
  \fill[blue!10]  (0,2) rectangle (3,3);
  \fill[orange!25] (0,1) rectangle (1,2);
  \fill[orange!25] (2,1) rectangle (3,2);
  \fill[green!18] (0,0) rectangle (3,1);
  \fill[gray!50]  (1,1) rectangle (2,2);
  \draw[gray!55] (0,0) grid (3,3);
  \draw[thick] (1,1) rectangle (2,2);
  \draw[very thick,blue!60!black] (0,1) -- (3,1);
  \draw[->,very thick,blue!60!black] (0,1) -- (-0.55,1);
  \node[blue!60!black,font=\scriptsize,above] at (-0.4,1.02) {$p$};
  \foreach \x in {0,1,2,3} {
    \node[font=\tiny,below=1pt] at (\x,0) {$1$};
    \node[font=\tiny,above=1pt] at (\x,3) {$1$};
  }
  \node[font=\tiny,below left=0pt] at (0,1) {$0$};
  \node[font=\tiny,below right=0pt] at (3,1) {$0$};
  \node[font=\tiny,above left=0pt] at (0,2) {$1$};
  \node[font=\tiny,above right=0pt] at (3,2) {$1$};
\end{scope}
\begin{scope}[xshift=5.4cm,yshift=0.35cm]
  \node[dot,label={[font=\scriptsize]above:$S_+$}] (sp) at (0.5,2.2) {};
  \node[dot,label={[font=\scriptsize]left:$S_0$}]  (s0) at (0.5,1.1) {};
  \node[dot,label={[font=\scriptsize]below:$S_-$}] (sm) at (0.5,0) {};
  \draw (s0) to[bend left=40] (sp);
  \draw (s0) to[bend right=40] (sp);
  \draw (s0) -- (sm);
  \node[font=\scriptsize] at (0.5,-0.7) {$\mathcal R_p$};
\end{scope}
\begin{scope}[xshift=7.6cm,yshift=0.35cm]
  \node[dot,label={[font=\scriptsize]above:$S_+$}] (sp) at (0.5,2.2) {};
  \node[dot,label={[font=\scriptsize]left:$S_0$}]  (s0) at (0.5,1.1) {};
  \node[dot,label={[font=\scriptsize]below:$S_-$}] (sm) at (0.5,0) {};
  \draw (s0) -- (sp);
  \draw (s0) -- (sm);
  \node[font=\scriptsize] at (0.5,-0.7) {$\GammaP$};
\end{scope}
\end{tikzpicture}
\caption{\Cref{ex:parallel}: a \(3\times3\) cell square with the middle
cell removed and \(p=((0,1),\HH)\).  Vertex labels show the augmented
bend distance; the level-zero contour is the thick row.  The open band
between levels \(0\) and \(1\) above that row has two components
(orange), one on each side of the hole, giving parallel edges of the
Reeb multigraph \(\mathcal R_p\); the lower band (green) is connected.
The simple sector graph \(\GammaP\) merges the parallel edges and hides
the cycle: \(\betti(\GammaP)=0<1=\betti(\Reeb)=h\).}
\label{fig:parallel}
\end{figure}

\subsection{A \texorpdfstring{\(C_4\)}{C4} and non-unique predecessors}

The next example uses the edge-discretization convention of
\cite{deligkas2026polygon}: retain grid vertices and unit grid edges fully
contained in a polygonal free domain.  It is deliberately not presented as
the one-skeleton of a pure unit-square complex, so the construction of
\cref{thm:extension} is not being invoked.  See \cref{fig:c4}.

\begin{example}[A one-hole \(C_4\)]
\label{ex:c4}
Let
\[
  P=
  \left[-\tfrac14,\tfrac{21}{4}\right]^2
  \setminus
  \left(\tfrac34,\tfrac54\right)^2.
\]
Its grid edge-discretization is the induced \(6\times6\) vertex grid
\[
  V(G_P)=\{0,\ldots,5\}^2\setminus\{(1,1)\}.
\]
Take the port \(p=((0,1),\VV)\).  The four sectors are
\[
\begin{aligned}
S_0&=\{(0,y):0\leq y\leq5\},
  &b(S_0)&=0,\\
S_1^-&=\{(x,0):1\leq x\leq5\},
  &b(S_1^-)&=1,\\
S_1^+&=\{(x,y):1\leq x\leq5,\ 2\leq y\leq5\},
  &b(S_1^+)&=1,\\
S_2&=\{(x,1):2\leq x\leq5\},
  &b(S_2)&=2.
\end{aligned}
\]
The sector labels, from \(y=5\) down to \(y=0\), are
\[
\begin{array}{c|cccccc}
 &0&1&2&3&4&5\\ \hline
5&S_0&S_1^+&S_1^+&S_1^+&S_1^+&S_1^+\\
4&S_0&S_1^+&S_1^+&S_1^+&S_1^+&S_1^+\\
3&S_0&S_1^+&S_1^+&S_1^+&S_1^+&S_1^+\\
2&S_0&S_1^+&S_1^+&S_1^+&S_1^+&S_1^+\\
1&S_0&\#&S_2&S_2&S_2&S_2\\
0&S_0&S_1^-&S_1^-&S_1^-&S_1^-&S_1^-
\end{array}
\]
The sector graph has exactly the edges
\[
  S_0S_1^-,
  \quad S_1^-S_2,
  \quad S_2S_1^+,
  \quad S_1^+S_0,
\]
and is therefore \(C_4\).
\end{example}

\begin{proof}[Verification]
The left column reaches \(p\) vertically with no bend.  A vertex in
\(S_1^-\) reaches \((0,0)\) horizontally and then reaches \(p\) vertically;
a vertex in \(S_1^+\) reaches the left column horizontally and then reaches
\(p\) vertically.  These routes have one bend, and zero bends are impossible
off the left column.

For \(v=(x,1)\) with \(x\geq2\), a route can go below or above the missing
vertex and then enter the left column; either route has two bends.  A
one-bend route whose final axis is vertical would have to reach the left
column by the horizontal segment along \(y=1\), which is blocked at
\((1,1)\).  Thus \(b(v)=2\).  The displayed grid then gives exactly the four
adjacencies.
\end{proof}

\begin{figure}[t]
\centering
\begin{tikzpicture}[scale=0.6,dot/.style={circle,fill,inner sep=0.9pt}]
\begin{scope}
  \fill[blue!14,rounded corners=2pt]  (-0.3,-0.3) rectangle (0.3,5.3);
  \fill[green!18,rounded corners=2pt] (0.7,-0.3) rectangle (5.3,0.3);
  \fill[orange!20,rounded corners=2pt](0.7,1.7)  rectangle (5.3,5.3);
  \fill[red!16,rounded corners=2pt]   (1.7,0.7)  rectangle (5.3,1.3);
  \draw[gray!45] (0,0) grid (5,5);
  \fill[white] (1,1) circle (0.22);
  \node[font=\scriptsize] at (1,1) {$\times$};
  \foreach \x in {0,...,5} \foreach \y in {0,...,5} {
    \ifnum\x=1 \ifnum\y=1 \else \node[dot] at (\x,\y) {}; \fi
    \else \node[dot] at (\x,\y) {}; \fi
  }
  \draw[->,very thick,blue!60!black] (-0.55,0.6) -- (-0.55,1.4);
  \node[blue!60!black,font=\scriptsize,left] at (-0.55,1) {$p$};
  \node[font=\scriptsize,blue!50!black]  at (0,5.75)   {$S_0$};
  \node[font=\scriptsize,green!40!black] at (5.9,0)    {$S_1^-$};
  \node[font=\scriptsize,orange!70!black]at (5.9,3.5)  {$S_1^+$};
  \node[font=\scriptsize,red!60!black]   at (5.9,1)    {$S_2$};
\end{scope}
\begin{scope}[xshift=9cm,yshift=1.2cm,scale=0.9]
  \node[circle,draw,fill=blue!14,inner sep=1.5pt,font=\scriptsize]
    (n0) at (0,1.5) {$S_0$};
  \node[circle,draw,fill=green!18,inner sep=1.5pt,font=\scriptsize]
    (n1) at (1.5,0) {$S_1^-$};
  \node[circle,draw,fill=red!16,inner sep=1.5pt,font=\scriptsize]
    (n2) at (3,1.5) {$S_2$};
  \node[circle,draw,fill=orange!20,inner sep=1.5pt,font=\scriptsize]
    (n3) at (1.5,3) {$S_1^+$};
  \draw (n0)--(n1)--(n2)--(n3)--(n0);
\end{scope}
\end{tikzpicture}
\caption{\Cref{ex:c4}: the edge-discretized \(6\times6\) grid with the
vertex \((1,1)\) removed and \(p=((0,1),\VV)\).  The four sectors are
shaded; the level-two sector \(S_2\) has the two distinct level-one
neighbours \(S_1^-\) and \(S_1^+\), and the sector graph is the cycle
\(C_4\).  Unique predecessors fail already with one hole.}
\label{fig:c4}
\end{figure}

The level-two sector \(S_2\) has two distinct level-one neighbours,
\(S_1^-\) and \(S_1^+\).  Hence the unique-predecessor statement from the
hole-free single-port theory is false already with one hole.  Choosing
either incoming side cannot yield a separator from the port: the other side
provides a bypass.

\subsection{A tree quotient with no straight baseline}

We now return to the pure cubical setting.  This example shows that even
acyclicity is not enough to recover the geometric sector structure.

\begin{definition}[One-sided straight generation]
\label{def:baseline}
Call a vertex set \(S\) \emph{one-sided baseline generated} if there are a
straight grid path \(P\subseteq S\) and a cardinal direction
\(e\in\{\pm(1,0),\pm(0,1)\}\), perpendicular to \(P\), such that every
\(v\in S\) lies on a grid segment contained in \(S\), parallel to \(e\),
whose first point is in \(P\).  Any sector with the stronger directional
baseline property used in the hole-free CMP decomposition is one-sided
baseline generated.
\end{definition}

If \(e\) is horizontal, every horizontal grid-line section of a one-sided
baseline-generated set is an interval.  If \(e\) is vertical, every
vertical grid-line section is an interval.

\begin{example}[Tree sector graph, no baseline]
\label{ex:nobaseline}
Let \(B=[0,10]^2\), tiled into unit cells.  Remove the interior of the
polyomino formed by the nine cells
\[
\begin{split}
\mathcal O=\{&
(3,4),(4,3),(4,4),(4,5),(4,6),\\
&(5,3),(5,4),(5,6),(6,6)\},
\end{split}
\]
where \((x,y)\) denotes \(Q_{x,y}\).  Let \(X\) be the cubical complex
formed by all remaining cells and their faces.  The obstacle boundary,
after suppressing collinear points, is
\[
\begin{split}
(3,4),(4,4),(4,3),(6,3),(6,5),(5,5),\\
(5,6),(7,6),(7,7),(4,7),(4,5),(3,5),
\end{split}
\]
and then back to \((3,4)\).  This is a simple rectilinear cycle, so \(X\)
has one hole.

Take \(p=((0,10),\HH)\).  There are exactly three sectors.  The level-zero
sector \(S_0\) is the top row.  The unique level-two sector is
\[
\begin{split}
C={}&\{4,5,6\}\times\{0,1,2,3\}\\
&\cup\{(4,4),(6,4),(5,5),(6,5),(5,6),(6,6)\}.
\end{split}
\]
All other vertices have label one and form one sector \(S_1\).  Thus
\[
  \GammaP=S_0-S_1-C
\]
is a tree.
\end{example}

\begin{proof}[Sector verification]
The port axis is horizontal, so the top row has bend distance zero.  A
vertex below the top row has bend distance one exactly when its upward
vertical segment to the top row is unobstructed: that segment followed by
the top-row segment to \(p\) has one bend, and every one-bend route must
have this form.  Inspecting the nine obstacle cells shows that the vertices
whose upward segment is blocked are precisely those in \(C\).  Each such
vertex can first move horizontally to one side of the obstacle, then
vertically to the top row, and finally horizontally to \(p\), using two
bends.  Hence the labels are as claimed.

The sets \(S_0,S_1,C\) are each connected.  Edges join \(S_0\) to \(S_1\)
and \(S_1\) to \(C\), while no edge can join levels zero and two.  Therefore
the sector graph is the displayed three-vertex path.
\end{proof}

\begin{proposition}[The last sector has no baseline]
\label{prop:nobaseline}
The sector \(C\) in \cref{ex:nobaseline} is not one-sided baseline
generated.  In particular, it has no directional baseline of the type used
in the hole-free sector decomposition.
\end{proposition}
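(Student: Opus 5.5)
The plan is to argue directly from \cref{def:baseline}, splitting on whether the generating direction \(e\) is vertical or horizontal. In each case I will exhibit two vertices of \(C\) whose admissible generating segments are so short that their first points must lie on two \emph{different} straight lines perpendicular to \(e\). A single straight path \(P\) cannot contain both, which gives the contradiction. The key geometric input is that \((5,4)\) is not a vertex of \(X\) at all. It is the common corner of the obstacle cells \(Q_{4,3},Q_{5,3},Q_{4,4},Q_{5,4}\), so it lies in the open hole, and no grid segment of \(X\) passes through it.

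\emph{Case \(e\) vertical, so \(P\) is horizontal.}
\begin{itemize}
\item Consider \(v=(5,5)\in C\). Any vertical grid segment through \(v\) cannot extend below \((5,5)\), because \((5,4)\notin V(G)\). Upward it can reach \((5,6)\in C\), and then \((5,7)\). The vertex \((5,7)\) lies on the top boundary of the obstacle and has an unobstructed upward segment, so by the sector verification it has label \(1\) and is not in \(C\).
\item Hence every vertical segment in \(C\) through \((5,5)\) lies in \(\{(5,5),(5,6)\}\). Its first point therefore lies in row \(y\in\{5,6\}\).
\item Now consider \((5,0)\in C\). A vertical segment through it inside \(C\) lies in \(\{(5,0),\dots,(5,3)\}\), again because \((5,4)\) is absent. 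Its first point lies in a row \(y\le 3\).
\item A horizontal straight path lies in a single row, so \(P\) cannot contain both first points. This contradicts \cref{def:baseline}.
\end{itemize}

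\emph{Case \(e\) horizontal, so \(P\) is vertical.}
\begin{itemize}
\item Consider \((4,4)\in C\). Its horizontal neighbours are \((5,4)\), which is not a vertex, and \((3,4)\). I will check that \((3,4)\) has label \(1\): its upward vertical segment runs along the left boundary edge of obstacle cell \(Q_{3,4}\) and then continues unobstructed to the top row.
\item So the only horizontal segment in \(C\) through \((4,4)\) is the single point \((4,4)\). That point is its own first point, forcing \((4,4)\in P\), so \(P\) lies in column \(x=4\).
\item Symmetrically, \((6,4)\in C\) has horizontal neighbours \((5,4)\), which is absent, and \((7,4)\). The vertex \((7,4)\) has label \(1\): its upward segment along \(x=7\) passes the obstacle, whose rightmost extent is \(x=7\) only for \(y\in[6,7]\), on its boundary, and is unobstructed.
\item This forces \((6,4)\in P\), so \(P\) lies in column \(x=6\). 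A vertical straight path lies in one column, which is a contradiction.
\end{itemize}

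The only step needing care is the membership claims: that \((3,4)\), \((7,4)\) and \((5,7)\) have label \(1\), i.e. lie outside \(C\), and that \((5,4)\) is genuinely not in \(V(G)\). I expect this to be the main (routine but fiddly) obstacle. Each membership claim is settled by the criterion in the sector verification of \cref{ex:nobaseline}: a vertex below the top row has label \(1\) exactly when its upward vertical segment to the top row is unobstructed. To apply it, I will check each upward segment against the boundary cycle of \(\mathcal O\), remembering that running along an obstacle edge is allowed since those edges belong to \(X\). The final sentence of \cref{prop:nobaseline} then follows immediately from the remark after \cref{def:baseline}: any directional baseline of the hole-free type would make \(C\) one-sided baseline generated.
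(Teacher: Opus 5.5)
Your proposal is correct and is essentially the paper's proof. The paper uses your witnesses, noting via the remark after \cref{def:baseline} that the row section \(C\cap(\Z\times\{4\})=\{(4,4),(6,4)\}\) and the column section \(C\cap(\{5\}\times\Z)\) are not intervals; you prove that interval property inline by showing that the first points would have to lie in two different columns, respectively rows. One harmless slip: \((5,6)(5,7)\) is not an edge of \(G\), because both adjacent cells \(Q_{4,6}\) and \(Q_{5,6}\) are obstacle cells, so the vertical segment through \((5,5)\) is confined to \(\{(5,5),(5,6)\}\) without needing the label of \((5,7)\).
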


\begin{proof}
The horizontal section at \(y=4\) is
\[
  C\cap(\Z\times\{4\})=\{(4,4),(6,4)\},
\]
which is not an interval because \((5,4)\notin V(G)\).  Thus \(C\) cannot
be generated in either horizontal direction.  The vertical section at
\(x=5\) is
\[
  C\cap(\{5\}\times\Z)
  =\{(5,y):y\in\{0,1,2,3,5,6\}\},
\]
which is also disconnected.  Thus \(C\) cannot be generated in either
vertical direction.
\end{proof}

\cref{ex:nobaseline} has \(\betti(\GammaP)=0\), so the feedback set in
\cref{cor:algorithm} may be empty even though a geometric defect is present.
Topological defects and baseline defects are therefore different objects.

\section{Exact one-sided geometry in the hole-free case}
\label{sec:hf-single-port}

The previous section showed which statements a single hole destroys.  We
now reverse the lens and show that in the hole-free full-cell setting
those statements are not merely true but exact.
The cycle bound becomes substantially more rigid when the cubical domain is
a disk.  In this section we assume, in addition to the standing purity
hypothesis, that \(\lvert X\rvert\) is homeomorphic to a closed disk.  Thus
there are no point pinches.  We continue to use the augmented, virtual-stub
metric of \cref{def:metrics}.  The result below is deliberately restricted
to this full-cell setting; it does not assert the same geometry for a
discretized-polygon graph with dangling one-dimensional edges.

For an axis \(\tau\in\{\HH,\VV\}\), write \(\bar\tau\) for the other
axis.  For \(m\geq0\), define the \emph{preferred axis}
\[
  \eta_m=
  \begin{cases}
    d,&m\text{ even},\\
    \bar d,&m\text{ odd}.
  \end{cases}
\]

\begin{lemma}[Preferred axes and cross-level edges]
\label{lem:hf-preferred-axis}
If \(b(v)=m\), then
\[
  \delta_{\eta_m}(v)=m,
  \qquad
  \delta_{\bar\eta_m}(v)=m+1.
\]
If a grid edge has endpoint labels \(m-1\) and \(m\), its axis is
\(\eta_m\).
\end{lemma}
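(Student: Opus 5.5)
The whole statement should follow from the parity structure in \cref{lem:state}, with no use of the disk hypothesis.

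The source state \((a,d)\) has even distance to every state of orientation \(d\). It has odd distance to every state of orientation \(\bar d\), because a walk to a state of the same orientation crosses an even number of switch edges and a walk to the opposite orientation crosses an odd number. So \(\delta_d(v)\) is even, \(\delta_{\bar d}(v)\) is odd, and by part (a) of \cref{lem:state} the two differ by exactly one.

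Now let \(b(v)=m=\min\{\delta_\HH(v),\delta_\VV(v)\}\). The minimum is attained by the state whose distance has the parity of \(m\). If \(m\) is even this is the \(d\)-state, and if \(m\) is odd it is the \(\bar d\)-state; in either case it is the \(\eta_m\)-state. Hence \(\delta_{\eta_m}(v)=m\), and the other state has distance \(m\pm1\). Since \(m\) is the minimum, that distance is \(m+1\).

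For the edge statement, take a grid edge \(uv\) with \(b(u)=m-1\) and \(b(v)=m\), and suppose its axis is \(\tau\).

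\begin{itemize}
  \item By part (b) or (c) of \cref{lem:state}, \(\delta_\tau(u)=\delta_\tau(v)\).
  \item By the first part, \(\delta_\tau(u)\in\{m-1,m\}\) and \(\delta_\tau(v)\in\{m,m+1\}\). The only common value is \(m\).
  \item So \(\delta_\tau(v)=m=b(v)\), which forces \(\tau=\eta_m\).
\end{itemize}

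This agrees with \(\delta_\tau(u)=m=(m-1)+1\), which by the first part means \(\tau=\bar\eta_{m-1}\). That is consistent, because \(\eta_{m-1}\) and \(\eta_m\) are always opposite axes.

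The only step that needs care is the parity bookkeeping: the parity of each \(\delta\) value must be tied to orientation relative to \(d\), not to the axes \(\HH\) and \(\VV\) in absolute terms. This is just a restatement of the parity argument already given in the proof of \cref{lem:state}, so I expect no real obstacle.
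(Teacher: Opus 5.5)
Your proposal is correct and follows the paper's proof. The parity of the two state distances identifies the minimizing axis as \(\eta_m\), and constancy of the edge-axis state distance across \(uv\) then forces the axis. Your intersection \(\{m-1,m\}\cap\{m,m+1\}=\{m\}\) is simply the direct form of the paper's contradiction argument, which assumes the axis is \(\eta_{m-1}\) and obtains the incompatible values \(m-1\) and \(m+1\).
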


\begin{proof}
By \cref{lem:state}, the two state distances at \(v\) differ by one.
Moreover, the distance to the source axis has even parity and the distance
to the other axis has odd parity.  The smaller state distance is \(m\), so
its parity uniquely determines its axis and gives the first assertion.

Let \(uv\) be a cross-level edge, with \(b(u)=m-1\) and \(b(v)=m\).
If its axis were the preferred axis at level \(m-1\), the corresponding
state distance would be constant across \(uv\), by
\cref{lem:state}.  The displayed formula would instead give the two values
\(m-1\) at \(u\) and \(m+1\) at \(v\), a contradiction.  Hence the edge
axis is \(\eta_m\).
\end{proof}

\begin{corollary}[The Reeb multigraph is a tree]
\label{cor:hf-reeb-tree}
The Reeb multigraph \(\calR\) of \cref{thm:shadow} is a tree.  It has no
parallel edges, and hence it is identical, as an abstract graph, to the
single-port sector graph \(\GammaP\).
\end{corollary}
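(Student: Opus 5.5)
The plan is to read this off \cref{thm:holebound} combined with the homeomorphism of \cref{thm:shadow}.

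\emph{Step 1: \(\calR\) is a tree.} Since \(|X|\) is a closed disk, \(h=0\). The chain in \cref{thm:holebound} reads
\[
  \betti(\Reeb)\le\corank\pi_1(X)\le\operatorname{rank}H_1(X;\Z)=0 .
\]
By \cref{thm:shadow}, \(\Reeb\) is homeomorphic to the geometric realization \(|\calR|\). The multigraph \(\calR\) is connected, because it is a quotient of the connected space \(X\). Its cycle rank, counted with multiplicity, is a homotopy invariant of the realization; a pair of parallel edges or a loop would contribute an independent cycle. Hence \(\betti(\calR)=\betti(\Reeb)=0\), and a connected multigraph of cycle rank zero is a tree. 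We could bypass Gelbukh's theorem entirely here, because \(X\) is contractible and \(q\) has connected fibers. The cleanest argument, however, is the one already packaged in \cref{thm:holebound}, so we cite it.

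\emph{Step 2: no parallel edges, and identification with \(\GammaP\).} A tree has no loops. Two parallel edges between the same pair of vertices would form a cycle of length two, raising the cycle rank to at least \(1\); this is excluded by Step 1. Loops are also excluded directly: by \cref{lem:bands}(b), every band edge joins a level-\(m\) contour to a level-\((m+1)\) contour, so its endpoints are distinct. Hence every family of parallel edges in \(\calR\) is a singleton. The parallel-edge merging described in \cref{thm:shadow} is then the identity, and under the bijection of \cref{lem:integer} the simple sector graph \(\GammaP\) equals \(\calR\) as an abstract graph.

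\emph{Main obstacle.} The only delicate point is the equality \(\betti(\calR)=\betti(\Reeb)\), counted with multiplicity. This requires that the homeomorphism of \(|\calR|\) with \(\Reeb\) respects the multigraph's edge multiplicities, i.e.\ that \(\operatorname{rank}H_1(|\calR|)=|E|-|V|+1\) counts parallel edges. That identity is standard for the realization of a finite multigraph, so I expect it to go through with a one-line remark.
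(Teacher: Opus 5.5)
Your proposal is correct and follows essentially the paper's argument. You apply the chain of \cref{thm:holebound} to the multigraph before parallel edges are merged, obtaining \(\betti(\calR)=\betti(\Reeb)\le 0\); connectivity then makes \(\calR\) a tree, and the merger in \cref{thm:shadow} is vacuous. The only cosmetic difference is that you pass through \(h=0\) and \(\operatorname{rank}H_1(X;\Z)=0\), where the paper uses the triviality of \(\pi_1(X)\) directly.
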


\begin{proof}
Because \(X\) is a disk, \(\pi_1(X)\) is trivial.  The proof of
\cref{thm:holebound}, before parallel edges are merged, gives
\[
  \betti(\calR)=\betti(\Reeb)
  \leq\corank\pi_1(X)=0.
\]
The Reeb quotient is connected, so the finite multigraph \(\calR\) is a
tree.  In particular it has no parallel edges.  The simple-shadow
identification in \cref{thm:shadow} therefore performs no merger.
\end{proof}

\begin{lemma}[Every open band is a straight strip]
\label{lem:hf-straight-band}
Let \(C\) be a connected component of
\(f_p^{-1}((m-1,m))\), where \(m>0\).  There are two equal-length,
parallel straight grid paths
\[
  P^-\subseteq f_p^{-1}(m-1),
  \qquad
  P^+\subseteq f_p^{-1}(m),
\]
and a cardinal unit direction \(e\), of axis \(\eta_m\), such that
\[
  P^+=P^-+e.
\]
The stripe squares meeting \(C\) form exactly the unit-width straight
strip between \(P^-\) and \(P^+\).  Its cross edges are the
order-preserving perfect matching between the two shores.
\end{lemma}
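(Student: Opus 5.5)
The plan is to combine the axis constraint of \cref{lem:hf-preferred-axis} with the product description of open bands from the proof of \cref{lem:bands}. By \cref{lem:bands}, $C$ is $K_0\times(m-1,m)$, where $K_0$ is a connected component of the one-dimensional complex obtained by gluing the intervals $I_Q$ of the $(m-1)/m$-stripe squares $Q$ meeting $C$ along their shared varying edges. So the task reduces to describing which stripe squares can be glued to which.

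First I fix the direction. By exchanging the roles of the axes I may assume $\eta_m=\VV$. In a stripe square with values $m-1,m$, the two varying edges are grid edges whose endpoints have labels $m-1$ and $m$. By \cref{lem:hf-preferred-axis}, such edges have axis $\eta_m=\VV$. So every stripe square meeting $C$ is a horizontal stripe: its bottom side lies at one constant level and its top side at the other. Its varying sides are its left and right edges.

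Next I describe the gluing. Two stripe squares contribute to the same $K_0$ only by sharing a varying edge, so only across a vertical edge, that is, as horizontal neighbours in the same row of cells. A shared vertical edge has its low endpoint in common, so the two neighbours have their low sides on the same horizontal line. Squares that meet only at a grid vertex are not glued inside $U_{m-1}$, because that vertex has an integer value and so lies outside the open band. Two horizontally adjacent cells share exactly one vertical edge, and each vertical edge lies in at most two cells of the plane. Hence $K$ is a disjoint union of arcs, and $K_0$ is a chain of consecutive cells $Q_{x,y},Q_{x+1,y},\dots,Q_{x'-1,y}$ in a single row $y$ with a common low side. Such a chain cannot close up into a circle inside one row, and it has no gaps because $K_0$ is connected.

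From this the conclusion follows. Let the low side be the bottom (otherwise reflect). Set $P^-=[x,x']\times\{y\}$, $P^+=[x,x']\times\{y+1\}$ and $e=(0,1)$, which has axis $\eta_m$. Then $P^-\subseteq f_p^{-1}(m-1)$ and $P^+=P^-+e\subseteq f_p^{-1}(m)$, the two paths have equal length, and the stripe squares meeting $C$ are exactly the cells of this strip. The cross edges of the strip are its vertical edges $\{(i,y),(i,y+1)\}$ for $x\le i\le x'$, which is the order-preserving perfect matching between the two shores. The disk hypothesis is not needed beyond the standing purity assumption and \cref{lem:hf-preferred-axis}.

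The step I expect to need the most care is the gluing analysis. It must rule out three things: a square joining $C$ through a horizontal edge, an inconsistent low/high orientation across a shared varying edge, and connection through pinch vertices. All three reduce to the facts that varying edges have axis $\eta_m$ and that shared endpoints carry equal labels.
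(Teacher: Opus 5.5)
Your proposal is correct and follows essentially the same route as the paper: \cref{lem:hf-preferred-axis} forces every varying edge to have axis \(\eta_m\), open stripe pieces can be glued only across shared varying edges because corners and constant sides have integer value, the shared low endpoint fixes a common low-to-high direction, and propagation yields one contiguous row or column whose constant sides are \(P^-\) and \(P^+\). Your remark that the disk hypothesis is not needed also matches the paper's proof, which uses only purity and the parity facts behind \cref{lem:hf-preferred-axis}.
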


\begin{proof}
Every square meeting \(C\) is an \((m-1)/m\)-stripe square.  Each of its
varying grid edges is a cross-level edge and therefore has axis \(\eta_m\)
by \cref{lem:hf-preferred-axis}.  Two open stripe pieces can meet only
across a varying edge: a constant side, and every corner, has integer
function value and is absent from the open band.

Consequently, two adjacent stripe squares in \(C\) lie in the same unit
coordinate slab and sit side by side in the direction orthogonal to
\(\eta_m\).  Their low-to-high directions agree on their shared varying
edge.  Propagating along a connected chain shows that all squares meeting
\(C\) lie in one row or one column, with one common low-to-high direction.
A finite connected set of unit squares in that row or column is a
contiguous interval.  Its two constant sides are the required paths
\(P^-\) and \(P^+\), and the product structure gives the stated matching.

Purity is used here: every changing grid edge belongs to a square, so
there is no additional one-dimensional open-band case.
\end{proof}

\begin{theorem}[Augmented single-port geometry on a cubical disk]
\label{thm:hf-osp}
The sector graph \(\GammaP\), rooted at its level-zero sector, has the
following properties.
\begin{enumerate}[label=(\alph*),leftmargin=2em]
  \item
  The root is the maximal straight grid path through \(a\) of axis \(d\).
  Every level-\(m\) sector has rooted depth \(m\).
  \item
  Every level-\(m\) sector \(S\), \(m>0\), has a unique predecessor
  \(A\) of level \(m-1\).
  \item
  There are equal-length straight paths
  \(\overline P_S\subseteq A\) and \(P_S\subseteq S\), and a cardinal
  direction \(e_S\) of axis \(\eta_m\), such that all \(A\)--\(S\) grid
  edges are precisely the order-preserving matching
  \[
    x(x+e_S),
    \qquad
    x\in\overline P_S,
    \qquad
    P_S=\overline P_S+e_S.
  \]
  \item
  The sector is exactly the one-sided directional extrusion
  \[
    \boxed{
    S=\operatorname{Ext}_{e_S}(P_S)
    :=
    \{x+re_S:
      x\in P_S,\ r\in\Z_{\geq0},\
      x,x+e_S,\ldots,x+re_S\text{ is a path in }G\}.
    }
  \]
  \item
  Let \(\mathcal T_S\) be the rooted subtree of \(\GammaP\) below \(S\),
  and let \(W_S\) be the union of the vertices in its sectors.  Every grid
  path from \(W_S\setminus P_S\) to \(V(G)\setminus W_S\) meets \(P_S\).
  Thus \(P_S\) is a root-side gate; if
  \(W_S\setminus P_S\neq\varnothing\), it is also a nontrivial vertex
  separator.
\end{enumerate}
\end{theorem}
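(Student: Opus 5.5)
We prove (a)--(e) in order, working from \cref{cor:hf-reeb-tree}, \cref{lem:hf-straight-band}, and \cref{lem:hf-preferred-axis}.

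\textbf{Part (a).} By \cref{prop:layering}, the level-zero set is \(L_0\), the maximal straight path of axis \(d\) through \(a\). This set is connected, so it is a single sector, which we take as the root. By \cref{lem:adjacent-levels}, adjacent sectors have consecutive labels, and every sector at level \(m>0\) has a neighbour at level \(m-1\), namely the predecessor along a shortest state path. Since \(\calR=\GammaP\) is a tree (\cref{cor:hf-reeb-tree}), labels form a path-monotone function of the tree. Along the unique root path, labels change by exactly one at each step. We show that they never decrease: following level-decreasing neighbours from a level-\(m\) sector gives a walk to the root of length \(m\). In a tree that walk is the geodesic, so the depth equals \(m\).

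\textbf{Part (b).} Suppose \(S\) had two level-\((m-1)\) neighbours. Each of them connects to the root by a strictly decreasing chain. Together with \(S\), the two chains would close a cycle, or force a revisit that contradicts the tree property. Hence the predecessor \(A\) is unique.

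\textbf{Part (c).} Because \(\calR\) has no parallel edges, exactly one band component joins \(A\) and \(S\). \cref{lem:hf-straight-band} then gives equal-length straight paths \(P^-\subseteq f_p^{-1}(m-1)\) and \(P^+=P^-+e\), together with the order-preserving matching. By \cref{lem:integer} these paths lie in \(A\) and \(S\), so we set \(\overline P_S=P^-\), \(P_S=P^+\), and \(e_S=e\). Every \(A\)--\(S\) grid edge is a varying side of a stripe square in this band, by purity and the argument of \cref{thm:shadow}. So the \(A\)--\(S\) edges are exactly the matching edges.

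\textbf{Part (d).} This is the main step. For \(\supseteq\): \(e_S\) has axis \(\eta_m\), and by \cref{lem:state} the \(\eta_m\)-state distance is constant along straight runs of that axis. Starting from \(x\in P_S\), where \(\delta_{\eta_m}=m\), every vertex \(x+re_S\) reached inside \(G\) satisfies \(\delta_{\eta_m}=m\). Hence \(b\le m\) there. We also need \(b\ge m\). Suppose some \(x+re_S\) had \(b\le m-1\). Its preferred axis would then force either a cross-level edge of axis \(\eta_m\) going down, or a second connection to a lower level. That would create a second \(A\)-type neighbour, or a cycle through \(A\), contradicting the tree property.

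For \(\subseteq\): take \(v\in S\). By \cref{lem:hf-preferred-axis}, \(\delta_{\eta_m}(v)=m\). A shortest state path to \(v\) ends with a maximal zero-weight run in the \(\eta_m\)-layer, and that run starts at a switch from a vertex \(w\) with \(\delta_{\bar\eta_m}(w)=m-1\), so \(b(w)=m-1\). Walking back one step from the run's start along \(-e\) should land in a level-\((m-1)\) vertex, hence in \(A\). This would show that \(v\) lies on an \(\eta_m\)-ray starting at \(P_S\). The hard point is arguing that this ray really begins on \(P_S\) and that the run passes from \(A\) into \(S\) across a cross edge. If the switch vertex lies in \(S\) itself, we instead use the disk hypothesis. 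A run in \(S\) that does not emanate from \(P_S\) would meet a different level-\((m-1)\) component, or, together with \(P_S\), would enclose a region. We would rule out the first with (b) and the second with simple connectivity, via a Jordan-curve argument on the rectilinear boundary of the strip. The same reasoning forces the direction to be \(+e_S\) rather than \(-e_S\): the opposite side of \(P_S\) is \(A\).

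\textbf{Part (e).} Removing the tree edge \(AS\) disconnects \(\GammaP\), and its \(S\)-side is exactly \(\mathcal T_S\). A grid path that leaves \(W_S\) must use an edge joining a sector in \(\mathcal T_S\) to one outside. The only such sector adjacency is \(A\)--\(S\), and by (c) all of its grid edges have their \(S\)-endpoint in \(P_S\). So every such path meets \(P_S\). This makes \(P_S\) a gate, and a separator whenever \(W_S\setminus P_S\) is nonempty.
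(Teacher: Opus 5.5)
Your parts (a), (b), (c), (e) and the inclusion $\operatorname{Ext}_{e_S}(P_S)\subseteq S$ follow the paper's proof. Your lower-neighbour argument in (a) is sound: a shortest state path to the minimizing state at $v$ projects to a walk whose labels are all at most $m$, so its last entry into $S$ comes from level $m-1$.

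In the $\supseteq$ step, pin the contradiction to (c). The first label-$(m-1)$ vertex on the ray is joined to the preceding ray vertex by an $S$--$A$ edge, where $A$ is forced by (b). That edge has low-to-high direction $-e_S$, so it is not one of the matching edges of (c).

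The genuine gap is the inclusion $S\subseteq\operatorname{Ext}_{e_S}(P_S)$. You call the decisive step ``the hard point'' and replace it with a Jordan-curve/simple-connectivity sketch that you never carry out. Your case ``the switch vertex lies in $S$'' cannot occur, since you have just shown $b(w)=m-1$. ``Walking back one step from the run's start'' also does not locate where the run enters $S$. Nor does it explain why the entering edge points along $+e_S$ rather than $-e_S$.

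The missing idea is the same observation you already use for $\supseteq$; no further topology is needed. Choose, among shortest state paths to $(v,\eta_m)$, one with the fewest zero-weight movement edges, so that its final $\eta_m$-run is a straight segment from $w$ to $v$. Along this run $\delta_{\eta_m}\equiv m$, so every scalar label on it is $m-1$ or $m$.

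Let $xy$ be the last edge of the run passing from label $m-1$ to label $m$. Such an edge exists because $b(w)=m-1$ and $b(v)=m$. Then:
\begin{enumerate}[label=(\roman*),leftmargin=2.2em]
  \item The suffix from $y$ to $v$ has constant label $m$ and is connected, so it lies in $S$.
  \item The vertex $x$ is a level-$(m-1)$ neighbour of $S$, so $x\in A$ by (b).
  \item Hence $xy$ is an $A$--$S$ edge, and by (c) it is a matching edge. This gives $y\in P_S$ and $y-x=e_S$.
  \item The suffix never returns to $x$, so it stays on the ray $y,y+e_S,y+2e_S,\ldots$, and $y,y+e_S,\ldots,v$ is a grid path. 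Thus $v\in\operatorname{Ext}_{e_S}(P_S)$.
\end{enumerate}
The disk hypothesis enters only through \cref{cor:hf-reeb-tree}, which already underlies (b) and (c).
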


\begin{proof}
Zero state cost permits movement only along the stub axis, so the
level-zero vertices form exactly the maximal straight \(d\)-path through
\(a\).  This gives the unique root.

We first prove that every positive sector has a lower-level neighbor.
Take \(v\) in a level-\(m\) sector \(S\), and among the shortest
state-graph paths from \((a,d)\) to the preferred state
\((v,\eta_m)\), choose one using the fewest zero-weight movement edges.
Its last switch
raises the path cost from \(m-1\) to \(m\), after which it follows a
straight \(\eta_m\)-run.  Prefixes immediately before and immediately
after the switch are shortest paths to their respective states; hence the
switch vertex has scalar label \(m-1\).  The final run has positive
length, since otherwise its endpoint would have scalar label at most
\(m-1\).  Along this zero-weight run, the \(\eta_m\)-state distance is
constantly \(m\);
therefore the scalar labels on the run are only \(m-1\) and \(m\).
The last transition from label \(m-1\) into the component containing
\(v\) supplies a lower-level neighbor of \(S\).

By \cref{cor:hf-reeb-tree}, \(\GammaP\) is a tree.  Repeatedly choosing a
lower neighbor reaches the root in exactly \(m\) steps.  The root path in
a tree is unique, so the lower neighbor is unique and the rooted depth is
\(m\).  This proves (a) and (b).

Let \(A\) be the predecessor of \(S\).  Every \(A\)--\(S\) grid edge lies
on an \((m-1)/m\)-stripe square and hence in an open band joining the two
integer contours.  There can be only one such band, since two would be
parallel \(A\)--\(S\) edges in the Reeb multigraph, contradicting
\cref{cor:hf-reeb-tree}.  Applying \cref{lem:hf-straight-band} to this
unique band gives \(\overline P_S,P_S\), and \(e_S\), and proves (c).

For the first inclusion in (d), again take \(v\in S\), and among all
shortest state paths to \((v,\eta_m)\), choose one with the fewest
zero-weight movement edges.  Its final run is a simple monotone segment
of one grid line.  On that run, take the last edge \(xy\) directed from
label \(m-1\) to label \(m\) before \(v\).  The
suffix from \(y\) to \(v\) has constant label \(m\), so it lies in \(S\).
The lower endpoint \(x\) lies in the unique predecessor \(A\), and hence
\(xy\) belongs to the unique interface of (c).  Thus \(y\in P_S\), the
edge \(xy\) is directed along \(e_S\), and the simple straight suffix from
\(y\) to \(v\) continues in the same direction.  Therefore
\[
  S\subseteq\operatorname{Ext}_{e_S}(P_S).
\]

Conversely, start at \(y\in P_S\) and follow any grid ray in direction
\(e_S\).  Its \(\eta_m\)-state distance remains \(m\), so every vertex on
the ray has label \(m\) or \(m-1\).  If the ray first returned to label
\(m-1\), the preceding label-\(m\) vertices would lie in \(S\), and the
new lower vertex would lie in the unique predecessor \(A\).  This would
be an \(S\)-to-\(A\) cross edge directed along \(e_S\), so its
low-to-high direction would be \(-e_S\).  It cannot belong to the unique
straight band in (c), and would instead create a second \(A\)--\(S\)
Reeb band, contradicting \cref{cor:hf-reeb-tree}.  The ray therefore
remains in label
\(m\) and, being connected to \(P_S\subseteq S\), remains in \(S\).
This proves the reverse inclusion and (d).

Finally, compress any grid path from \(W_S\setminus P_S\) to
\(V(G)\setminus W_S\) into its sequence of sector runs.  In the rooted
tree, the unique edge leaving the descendant subtree \(\mathcal T_S\) is
\(SA\).  By (c), the \(S\)-endpoint of every grid edge realizing that
sector edge belongs to \(P_S\).  The original path therefore meets
\(P_S\), proving (e).
\end{proof}

\begin{remark}[Why ``gate'' is the exact universal statement]
\label{rem:hf-gate-not-cut}
One should not strengthen \cref{thm:hf-osp}(e) to the unqualified claim
that \(G-P_S\) is disconnected.  On one unit square, with a horizontal
stub at a lower corner, the root sector is the lower edge and the unique
positive sector is the upper edge.  Here \(P_S=S\), and deleting it leaves
the connected lower edge.  The gate statement is valid, and it is the
form needed below; a genuine vertex cut follows whenever the descendant
side contains a vertex outside its baseline.
\end{remark}

\begin{figure}[t]
\centering
\begin{tikzpicture}[scale=0.8,dot/.style={circle,fill,inner sep=1.0pt}]
\begin{scope}
  \fill[green!15] (0,0) rectangle (5,1);
  \fill[orange!18] (1,1) rectangle (2,3);
  \fill[orange!18] (3,1) rectangle (4,3);
  \draw[gray!50] (0,0) grid (5,1);
  \draw[gray!50] (1,1) grid (2,3);
  \draw[gray!50] (3,1) grid (4,3);
  \draw[thick] (0,0) -- (5,0) -- (5,1) -- (4,1) -- (4,3) -- (3,3)
    -- (3,1) -- (2,1) -- (2,3) -- (1,3) -- (1,1) -- (0,1) -- cycle;
  \draw[ultra thick,blue!65!black] (0,1) -- (5,1);
  \draw[->,very thick,blue!65!black] (0,1) -- (-0.6,1);
  \node[blue!65!black,font=\scriptsize,above] at (-0.4,1.05) {$p$};
  \draw[ultra thick,red!70!black] (1,2) -- (2,2);
  \draw[ultra thick,red!70!black] (3,2) -- (4,2);
  \node[red!70!black,font=\scriptsize,left] at (0.95,2.3) {$P_S$};
  \draw[->,red!70!black] (1.5,2.1) -- (1.5,2.8);
  \draw[->,red!70!black] (3.5,2.1) -- (3.5,2.8);
  \draw[->,green!45!black] (2.5,0.85) -- (2.5,0.15);
  \node[font=\tiny] at (5.35,1) {$0$};
  \node[font=\tiny] at (5.35,0) {$1$};
  \node[font=\tiny] at (2.35,3.2) {$1$};
  \node[font=\tiny] at (4.35,3.2) {$1$};
\end{scope}
\begin{scope}[xshift=8.4cm,yshift=0.9cm]
  \node[dot,label={[font=\scriptsize]left:$S_0$}] (r) at (0,0) {};
  \node[dot,label={[font=\tiny]above left:bottom}] (u) at (-0.9,1.1) {};
  \node[dot,label={[font=\tiny]above:tooth}] (v) at (0,1.4) {};
  \node[dot,label={[font=\tiny]above right:tooth}] (w) at (0.9,1.1) {};
  \draw (r)--(u); \draw (r)--(v); \draw (r)--(w);
  \node[font=\scriptsize] at (0,-0.8) {$\GammaP\cong K_{1,3}$};
\end{scope}
\end{tikzpicture}
\caption{\Cref{thm:hf-osp} on a comb-shaped cubical disk with
\(p=((0,1),\HH)\).  The root sector (thick blue row) has level \(0\);
the bottom row and the two teeth are the level-one sectors.  Each
positive sector is exactly the one-sided extrusion of its straight
baseline \(P_S\) (red) in the arrowed direction, and \(P_S\) gates its
descendant side.  With \(m\) teeth the sector graph is \(K_{1,m}\):
the family used in \cref{thm:comb-tabs} to rule out
bounded suppressed cores.}
\label{fig:extrusion}
\end{figure}

\section{Two-port common refinements}
\label{sec:joint}

For ports \(\mathcal P=(p_1,\ldots,p_r)\), set
\[
  B_{\mathcal P}(v)
  =
  \bigl(b^+_{p_1}(v),\ldots,b^+_{p_r}(v)\bigr).
\]
A \emph{common-refinement sector} is a connected component of
\(G[B_{\mathcal P}^{-1}(z)]\) for some
\(z\in\mathbb Z_{\ge0}^r\).  Let \(\Gamma_{\mathcal P}\) be the simple
graph whose vertices are these sectors and whose edges record grid
adjacency.  Equivalently, this is the final partition obtained by refining
with the ports one at a time; hence it is independent of their order.

\begin{observation}[Planarity and the one-port boundary case]
\label{obs:joint-basic}
The graph \(\Gamma_{\mathcal P}\) is planar.  Moreover,
\[
  \operatorname{tw}(\Gamma_p)\le h+1
\]
for one port \(p\).
\end{observation}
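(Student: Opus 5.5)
Both claims are elementary consequences of earlier results, and I would prove them separately.

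\emph{Planarity.} I would exhibit \(\Gamma_{\mathcal P}\) as a contraction of a planar graph. The grid graph \(G=X^{(1)}\) is planar, since it comes with its straight-line embedding in \(\R^2\). Every common-refinement sector is by definition a connected induced subgraph \(G[S]\). The sectors partition \(V(G)\). Contracting each sector to a single vertex therefore realizes \(\Gamma_{\mathcal P}\) as a minor of \(G\): remove loops, merge parallel edges, and an edge between two contracted vertices then survives exactly when some grid edge joins the corresponding sectors. Minors of planar graphs are planar, so \(\Gamma_{\mathcal P}\) is planar. This argument uses only that sectors are connected vertex classes. It therefore applies verbatim to any number of ports, and it needs no purity hypothesis.

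\emph{The one-port bound.} Here I would combine \cref{thm:holebound} with the feedback set of \cref{lem:fvs}. By \cref{thm:holebound}, \(\betti(\Gamma_p)\le h\). By \cref{lem:fvs} there is a set \(D\subseteq V(\Gamma_p)\) with \(|D|\le h\) such that \(\Gamma_p-D\) is a forest. A forest has a tree decomposition of width at most \(1\). Adding the vertices of \(D\) to every bag keeps it a valid tree decomposition: each edge incident to \(D\) is covered because \(D\) sits in every bag, and each vertex of \(D\) occupies the whole, connected decomposition tree. The resulting width is at most \(1+|D|\le h+1\). The only mild care needed is the degenerate case where \(\Gamma_p-D\) is empty; there I would take a single bag \(D\) of width \(|D|-1\le h+1\).

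I do not expect a genuine obstacle. The one point to check is that \cref{thm:holebound} is stated for the augmented metric on a pure cubical domain, which is the standing setting for \(\Gamma_p\) here. The planarity claim holds in the broader edge-discretized model as well, but the treewidth bound inherits the purity hypothesis through \cref{thm:holebound}.
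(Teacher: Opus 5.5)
Your proposal is correct and matches the paper's proof: planarity by contracting each connected sector to obtain \(\Gamma_{\mathcal P}\) as a minor of the planar grid graph, and the one-port bound by adding a feedback vertex set of size at most \(h\) to every bag of a width-one decomposition of the remaining forest, with the same treatment of the empty case. The only difference is packaging. You combine \cref{thm:holebound} with \cref{lem:fvs} directly, whereas the paper cites their combination as \cref{cor:algorithm}.
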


\begin{proof}
Contract a spanning tree inside every common-refinement sector and then
delete loops and merge parallel edges.  This obtains
\(\Gamma_{\mathcal P}\) as a minor of the planar graph \(G\).  For one
port, \cref{cor:algorithm} gives a feedback vertex set \(D\) of size at
most \(h\).  Put \(F=\Gamma_p-D\).  If \(F\) is nonempty, take width-one
tree decompositions of its components, join those decomposition trees
arbitrarily, and add \(D\) to every bag.  If \(F\) is empty, use the
single bag \(D\).  Thus every bag has size at most
\(|D|+2\le h+2\), so
\(\operatorname{tw}(\Gamma_p)\le h+1\).
\end{proof}

The simple-polygon theorem of
\cite[full version, Theorem~16]{deligkas2026polygon}
additionally states
\(\operatorname{tw}(\Gamma_{\mathcal P})\le 7^{r-1}\) when \(h=0\),
under its discretization and port conventions.  The next two subsections
explain why neither that theorem nor a possible hole-aware extension
should be rephrased as a bound on the number of sectors or on cycle
rank; \cref{subsec:two-port-sharp} then determines the filled-disk
constant exactly.  The broader edge-discretized scope is deferred to the
discussion in \cref{sec:edge-discussion}.

\subsection{Corridor suppression does not give a bounded core}
\label{subsec:comb}

Suppressing a degree-two vertex with two distinct neighbors means replacing
its two incident edges by one edge.  A \emph{topological core} retains
parallel edges created by different corridors, while a \emph{simple core}
merges them.  The distinction matters for cycle rank, but not for the two
disk constructions below.

\begin{theorem}[Unbounded suppressed size and branching on a disk]
\label{thm:comb-tabs}
There is no function of \(r+h\) that bounds either
\begin{enumerate}[label=(\roman*),leftmargin=2.2em]
  \item the number of vertices remaining after degree-two suppression, or
  \item the number of remaining vertices of degree at least three
\end{enumerate}
in \(\Gamma_{\mathcal P}\).  Both failures occur for \(r=1\) and \(h=0\),
for the augmented metric on pure cubical domains.
\end{theorem}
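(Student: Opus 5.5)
The plan is to exhibit two explicit families of hole-free pure cubical disks with a single port ($r=1$, $h=0$). By \cref{cor:hf-reeb-tree}, $\GammaP$ is a tree for every member of each family. The first family defeats (i) and the second defeats (ii). In both families the labels will be computed with the layering procedure of \cref{prop:layering}, and the sectors will be read off as one-sided extrusions via \cref{thm:hf-osp}(d), so each verification reduces to a short check of which vertices see which straight lines.

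For (i) I use the comb of \cref{fig:extrusion} with $m$ teeth. The spine is the strip $[0,2m+1]\times[0,1]$, the teeth are the columns $[2i-1,2i]\times[1,3]$ for $i=1,\dots,m$, and the port is $p=((0,1),\HH)$. By \cref{prop:layering}, $L_0$ is the top row $y=1$ of the spine. Every remaining vertex reaches $L_0$ by one vertical segment, so it has label $1$. The level-one vertices split into the bottom row and the $m$ tooth columns, and these are pairwise non-adjacent because they are separated by the level-zero row. Hence $\GammaP\cong K_{1,m+1}$. Leaves and the center have degree $\neq 2$, so degree-two suppression leaves $m+2$ vertices, and this is unbounded in $m$.

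For (ii) a star is useless, since it has only one branch vertex. I therefore build an \emph{alternating tab staircase}. This is a corridor of unit width that turns alternately up and right, say $k$ times. By \cref{lem:hf-preferred-axis} each turn forces the label to rise by exactly one: the straight run of the corridor after a turn is orthogonal to the run before it, so it is reached from the previous run with one more bend. Along each straight run of length at least $3$ I glue a short unit tab on the side opposite to the turn, that is, a $1\times2$ pocket orthogonal to the run. By \cref{lem:hf-straight-band} and \cref{thm:hf-osp}(c,d), each run sector is the extrusion of its entry interface and extends into the tab only through an interface of the other axis. The tab therefore forms a separate child sector of the next level. Thus each intermediate run sector has three neighbors in $\GammaP$: its parent run, the next run, and its tab. This gives $k-1$ vertices of degree $3$ that survive suppression, which is unbounded in $k$.

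The main obstacle is the coordinate verification for the staircase. I must confirm that a tab really receives a higher label than the run it hangs from, and is not reachable with the same number of bends through some shortcut. I must also confirm that the corridor turns do not create extra adjacencies between non-consecutive runs; such adjacencies are excluded anyway, since $\GammaP$ is a tree. To control this I will place the tab pockets far from the corners, with a gap of at least two cells from each turn, and check the layering stage by stage. At stage $i$ the new vertices are exactly the $i$-th run together with the tab attached to run $i-1$. The key point is that the tab's axis is orthogonal to the run's axis, so no straight segment from $L_{i-1}$ enters the tab beyond its mouth edge. Finally, I will check that the union is a pure cubical disk: it is a simply connected union of unit cells meeting only along edges, with no point pinches.
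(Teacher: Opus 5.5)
Your proposal is correct and follows the paper's route. For (i) you use the comb with a horizontal port at the upper-left corner of the base; for (ii) you use a monotone up/right staircase of one-cell-wide rectangles, each carrying a short leaf, which is the paper's alternating tab construction except that you attach the continuation at the run's end and put the leaf on the opposite side. As you anticipate, the labels in (ii) must be certified directly, either by your stage-by-stage layering or by the paper's state-distance attachment invariant, which shows that gluing never lowers old distances. Citing \cref{lem:hf-preferred-axis} and \cref{thm:hf-osp} is not enough here, because those results only describe the labels once they are already known.
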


\begin{proof}
For (i), take a one-cell-high horizontal base rectangle, put a horizontal
port \(p\) at its upper-left corner, and attach \(m\) pairwise separated
one-cell-wide vertical rectangular teeth along its upper boundary.  The
entire upper boundary of the base has label zero.  All new vertices of
each tooth have label one; different teeth lie in different connected
components of that level.  Thus \(\Gamma_p\) contains a
\(K_{1,m}\) whose centre and leaves are not removed by degree-two
suppression.  Each tooth is attached along one boundary edge and has no
other contact, so the resulting pure cubical domain is a disk.

For (ii), we record an attachment invariant.  Suppose a level-\(i\)
sector \(S_i\) contains a boundary unit edge \(e\) of axis
\(\alpha_i\), and at both endpoints of \(e\) the orientation-state
distances satisfy
\[
  \delta_{\alpha_i}=i,
  \qquad
  \delta_{\bar\alpha_i}=i+1.
\]
Here \(\bar\alpha_i\) denotes the other axis.  Glue outside \(e\) a
one-cell-wide rectangle whose long axis is \(\bar\alpha_i\), with no
other contact with the old domain.  First observe that this attachment
does not decrease any old orientation-state distance.  Every state walk
that enters the new rectangle and later returns to the old domain can be
replaced by a walk supported on \(e\), using at most one switch at each
endpoint and no greater total weight.

Every edge from the old domain to a new vertex has axis
\(\bar\alpha_i\).  Hence reaching a new vertex in state
\(\bar\alpha_i\) costs at least \(i+1\), and reaching it in state
\(\alpha_i\) requires one further switch and costs at least \(i+2\).
Both bounds are attained by following a long side of the rectangle.
The orientation-state representation of \cref{lem:state} therefore gives
\[
  \delta_{\bar\alpha_i}=i+1,
  \qquad
  \delta_{\alpha_i}=i+2
\]
on the new long sides.  Because the rectangle is one cell wide, all its
new vertices lie on those two sides, which are joined by transverse grid
edges.  They form one level-\(i+1\) sector and provide new boundary edges
satisfying the invariant with the axes exchanged.

Start by attaching a sufficiently long vertical continuation \(S_1\) to
a label-zero top edge of the base.  We place the main continuations
monotonically, alternating upward and rightward.  Before branching from
\(S_i\), let \(U_i\) denote the union of the base, the earlier
continuations \(S_1,\ldots,S_{i-1}\), and the earlier leaves
\(L_1,\ldots,L_{i-1}\).  Maintain the following exposed-terminal
invariant: if \(i\) is odd, \(S_i\) is vertical and its top eight cells
lie strictly above \(\max_y U_i\); if \(i\) is even, \(S_i\) is
horizontal and its rightmost eight cells lie strictly to the right of
\(\max_x U_i\).

Suppose first that \(i\) is odd.  On the right long side of the terminal
eight-cell portion of \(S_i\), choose two vertical unit edges
\(e_i^L,e_i^C\) separated by at least two unused unit edges.  Attach a
\(2\times1\) leaf \(L_i\) to \(e_i^L\) and a one-cell-high horizontal
continuation \(S_{i+1}\) to \(e_i^C\), both extending to the right.
Their closed horizontal strips are disjoint, and these strips lie above
the bounding box of \(U_i\); hence the new rectangles have no contact with
one another or with the old domain except at their prescribed attachment
edges.  Choose the length of \(S_{i+1}\) so that its rightmost eight cells
lie strictly to the right of the bounding box of
\(U_i\cup S_i\cup L_i\).

If \(i\) is even, use the symmetric construction: choose two separated
horizontal unit edges on the top long side of the terminal portion of
\(S_i\), attach a \(1\times2\) leaf and a one-cell-wide vertical
continuation upward, and choose the latter long enough that its top eight
cells lie strictly above the bounding box of
\(U_i\cup S_i\cup L_i\).  Thus the exposed-terminal invariant is restored
with the axes exchanged.

The attachment invariant shows that \(L_i\) and \(S_{i+1}\) are distinct
level-\((i+1)\) sectors: their rectangles are disjoint and every path
between them meets the level-\(i\) sector \(S_i\).  Repeated gluing along
the prescribed boundary edges preserves the disk topology.  After
branching stages \(i=1,\ldots,N\), every
\(S_i\) has the three distinct neighbors
\(S_{i-1},L_i,S_{i+1}\).  Because \(h=0\), the one-port sector graph is a
tree; suppressing degree-two vertices therefore creates no parallel
identification among these branches.  Hence all \(S_i\) survive either
suppression convention.
\end{proof}

\subsection{Two fixed ports have unbounded joint cycle rank}
\label{subsec:stair}

The preceding trees have treewidth one.  The following family rules out a
stronger but still tempting argument: one cannot iterate the single-port
feedback-set bound and hope that the common refinement has cycle rank
bounded by the number of ports and holes.

For \(t\ge3\), let \(X_t\) be the union of the unit squares
\begin{equation}
  \{\Qcell{0}{0}\}
  \cup
  \{\Qcell{j}{j-1},\Qcell{j}{j}:1\le j\le t\}.
\label{eq:stair-cells}
\end{equation}
These squares form an alternating edge-glued staircase and hence a pure
cubical disk.  Fix, independently of \(t\), the two ports
\begin{equation}
  a=((0,1),\HH),
  \qquad
  c=((2,3),\HH),
\label{eq:stair-ports}
\end{equation}
and write \(Q_t=\Gamma_{(a,c)}\).  For a finite connected graph \(H\),
write \(B(H)=|\{v\in V(H):\deg_H(v)\ge3\}|\).

\begin{theorem}[Exact two-port staircase]
\label{thm:staircase}
For every \(t\ge3\),
\begin{equation}
  \betti(Q_t)=\left\lfloor\frac{2t}{3}\right\rfloor+2,
  \qquad
  B(Q_t)=\left\lfloor\frac{2t+1}{3}\right\rfloor+3,
  \qquad
  \operatorname{tw}(Q_t)=2.
\label{eq:stair-counts}
\end{equation}
Consequently, the raw common-refinement cycle rank and branch count are
unbounded for the fixed parameters \(r=2,h=0\).
\end{theorem}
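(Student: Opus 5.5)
The plan is a completely explicit computation. I would first compute both single-port labelings \(b^+_a\) and \(b^+_c\) on \(X_t\) through the state graph of \cref{lem:state}, then read off the common-refinement sectors and their adjacencies, and finally count. The geometry of \(X_t\) is rigid enough for this to be done in closed form. Every maximal horizontal straight grid path lies at a height \(y\) and spans only the \(x\)-range \(\{y-1,y,y+1\}\cap[0,t+1]\), up to the boundary cells. Symmetrically, every maximal vertical path at abscissa \(x\) spans \(y\in\{x-1,x,x+1\}\). So the zero-weight components of the \(\HH\)-layer and of the \(\VV\)-layer of \(\widehat G_p\) are short segments indexed by one coordinate. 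The state graph therefore collapses to a path-like "ladder" of alternating \(\HH\)-segments and \(\VV\)-segments, and \(\delta_\HH,\delta_\VV\) can be written down by counting switches along it. I would state the resulting formulas as a lemma:
\[
b^+_p(x,y)=\text{an explicit affine function of }x+y\text{, corrected by a bounded term depending only on }(x-y)\bmod 3\text{ and the side of the port}.
\]
The proof is an induction along the ladder, verified against \cref{lem:hf-preferred-axis} and \cref{thm:hf-osp}.

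Second, I would intersect the two labelings. Port \(a\) sits at the start of the staircase, so \(b^+_a\) increases monotonically along it. Port \(c\) sits at \((2,3)\), so \(b^+_c\) increases in both directions away from it, and the two directions have different local phases. Far from both ports (for \(j\) beyond a fixed threshold), the level boundaries of \(a\) and \(c\) are both straight interfaces of the kind in \cref{lem:hf-straight-band}. Because of the parity of the preferred axes, they are offset by a fixed phase, and the joint pattern should be periodic with period three cells along the staircase. In each period I expect exactly two new independent cycles in \(Q_t\): two sectors are adjacent both directly and through a common neighbour. I also expect the two new cycles to create the stated number of new degree-\(\ge3\) sectors. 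This periodic behaviour accounts for the terms \(\lfloor 2t/3\rfloor\) and \(\lfloor (2t+1)/3\rfloor\). The additive constants \(+2\) and \(+3\) come from the bounded region near \(a\) and \(c\) and from the terminal cell \(Q_{t,t}\).

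I would organize the counting as an induction \(t\mapsto t+1\). I would show that appending the cells \(Q_{t+1,t},Q_{t+1,t+1}\) changes only labels near the end of the staircase; this holds because the labels are defined by distances from the ports and the new cells lie far away. The increment of \(|E|-|V|\) and of \(B\) then depends only on \(t\bmod 3\), and I would tabulate it in the three residue cases. The base cases \(t=3,4,5\) I would settle by direct tabulation of both label tables and the resulting sector graph. For the treewidth, \(\operatorname{tw}(Q_t)\ge2\) because \(\betti(Q_t)>0\). For the upper bound, I would show that \(Q_t\) is built by repeatedly gluing cycles along a single vertex or a single edge; this is visible from the periodic block structure, since consecutive period blocks share at most one sector-edge. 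Such a graph is series–parallel, so it has no \(K_4\) minor and \(\operatorname{tw}(Q_t)\le2\).

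The main obstacle is the first step: obtaining the exact closed form of \(b^+_c\) near the port \(c\), including its two-sided behaviour, and then pinning down precisely which refinement sectors are adjacent in the transient region. Off-by-one errors there would change the constants \(+2\) and \(+3\). I would address this by first computing the labels for \(t=6\) by hand or by \(0\)-\(1\) BFS. From that table I would read off the periodic pattern and the transient, and then prove the general formula by the ladder induction.
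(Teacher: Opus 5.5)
Your outline matches the paper's proof: closed-form labels proved by induction along the staircase, increments of \((|V|,|E|,\betti,B)\) tabulated by \(j\bmod 3\), a directly checked base case, and \(\operatorname{tw}\ge2\) from \(\betti>0\). As written, though, the counts are not established. The one structural step you argue is argued incorrectly, and your predictions about the deferred computation are false.

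\emph{Label stability.} It does not follow from the new cells lying ``far'' from the ports. Distance is irrelevant, since cells meeting the old domain along more than one edge can create shortcuts anywhere. The correct reason is that the lobe \(Q_{t+1,t}\cup Q_{t+1,t+1}\) meets \(X_t\) in exactly the edge \(\{t+1\}\times[t,t+1]\). Any old-to-old state walk through the lobe can therefore be replaced by one supported on that edge, with at most one switch at each end. Hence no old state distance, and so no old label, changes at all; only the sectors at that edge can absorb new vertices.

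\emph{Geometry and label formulas.} Your geometry is off by one. In the interior, the maximal horizontal segment at height \(k\) has \(x\in\{k-1,\ldots,k+2\}\), and the vertical one at \(x=j\) has \(y\in\{j-2,\ldots,j+1\}\). So each \(\HH\)-segment meets four \(\VV\)-segments: the segment graph is a band, not a path-like ladder, and two switches advance three layers. For the same reason the labels are not ``affine in \(x+y\) plus a term in \((x-y)\bmod 3\)''. On the diagonal \(x-y=0\) one has \(b^+_a(j,j)=\lfloor(2j-1)/3\rfloor\), which is not affine. The correct bookkeeping is by layer \(j\bmod3\) and the four types \(x-y\in\{-1,0,1,2\}\), two of which \((x-y)\bmod3\) conflates.

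\emph{Cycle mechanism.} The mechanism you predict cannot occur. Every edge of \(Q_t\) changes exactly one of \(b^+_a,b^+_c\), by one. Hence \(Q_t\) is bipartite, with the parity of \(b^+_a+b^+_c\), and has no triangles. What actually happens at step \(j\) is:
\begin{itemize}
\item for \(j\equiv0\), a \(4\)-cycle is attached at a single old sector;
\item for \(j\equiv1\), only a pendant vertex is added, and this is what creates the new branch vertex;
\item for \(j\equiv2\), a three-edge path is added parallel to an old edge.
\end{itemize}
The resulting increments are \((3,4,1,1)\), \((1,1,0,1)\), \((2,3,1,0)\).

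The paper makes this mechanical with an auxiliary port \(((1,0),\VV)\). With it, each of the six frontier edges changes exactly one of three labels, cycling with \(j\bmod3\). The adjacencies that change only the auxiliary label are then contracted. Your direct two-port computation can work, but only once these local quotients are actually derived. Computing a \(t=6\) table would expose the errors above, but then that computation is the proof, and the proposal does not contain it.

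\emph{Treewidth.} For \(\operatorname{tw}(Q_t)\le2\) you do not need the block structure of \(Q_t\) at all. \(Q_t\) is a minor of \(X_t^{(1)}\), obtained by contracting each connected sector. In turn, \(X_t^{(1)}\) is a chain of \(2\)-clique-sums of \(4\)-cycles, since each square meets its predecessors along one edge. This bound needs no label information.
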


\begin{proof}
We give an arithmetic certificate for the quotient.  Temporarily add the
fixed auxiliary port \(b=((1,0),\VV)\).  For each extension
\(X_{j-1}\subset X_j\), \(j\ge4\), put
\[
  D_j=(j,j),\quad E_j=(j+1,j-1),\quad
  F_j=(j+1,j),\quad G_j=(j,j+1).
\]
The formulas below are evaluated in \(X_j\); the no-shortcut argument below
shows that they remain valid in every later \(X_t\) containing the listed
vertices.  A direct state-graph calculation in \(X_4\) verifies the
certificate for
\[
 F_3,D_4,E_4,F_4,G_4,D_5,
\]
with three-coordinate scalar value
\[
  (f_a(F_3),f_b(F_3),f_c(F_3))=(2,2,0)
\]
and state pairs
\(((2,3),(3,2),(0,1))\) in port order \((a,b,c)\).  This is the induction
base.

In coordinate order \((a,b,c)\), the three augmented bend labels are
\begin{equation}
\begin{array}{c|ccc}
v&f_a(v)&f_b(v)&f_c(v)\\ \hline
D_j&
\left\lfloor(2j-1)/3\right\rfloor&
\left\lfloor2j/3\right\rfloor&
\left\lfloor(2j+1)/3\right\rfloor-2\\
E_j&
2\left\lfloor j/3\right\rfloor&
2\left\lfloor(j-1)/3\right\rfloor+1&
2\left\lfloor(j-2)/3\right\rfloor\\
F_j&
\left\lfloor2j/3\right\rfloor&
\left\lfloor(2j+1)/3\right\rfloor&
\left\lfloor(2j+2)/3\right\rfloor-2\\
G_j&
2\left\lfloor(j-1)/3\right\rfloor+1&
2\left\lfloor(j+1)/3\right\rfloor&
2\left\lfloor j/3\right\rfloor-1 .
\end{array}
\label{tab:stair-labels}
\end{equation}
These formulas follow by induction from the two-state shortest-path
recurrence encoded by \(\widehat G_p\) in \cref{sec:state}.  To make the
induction check explicit,
the minimizing axes for \(j\equiv0,1,2\pmod3\), respectively, are
\begin{equation}
\begin{array}{c|ccc}
 &0&1&2\\ \hline
D_j&(\VV,\VV,\HH)&(\HH,\VV,\VV)&(\VV,\HH,\VV)\\
E_j&(\HH,\HH,\HH)&(\HH,\HH,\HH)&(\HH,\HH,\HH)\\
F_j&(\HH,\VV,\HH)&(\HH,\HH,\VV)&(\VV,\HH,\HH)\\
G_j&(\VV,\VV,\VV)&(\VV,\VV,\VV)&(\VV,\VV,\VV).
\end{array}
\label{tab:stair-axes}
\end{equation}
For each coordinate the other state has value one larger.  For a port
\(q\in\{a,b,c\}\) and a non-source vertex \(v\), the state distances obey
\[
\begin{aligned}
 \delta^q_{\HH}(v)
 &=\min\!\left\{1+\delta^q_{\VV}(v),
     \min_{uv\ {\rm horizontal}}\delta^q_{\HH}(u)\right\},\\
 \delta^q_{\VV}(v)
 &=\min\!\left\{1+\delta^q_{\HH}(v),
     \min_{uv\ {\rm vertical}}\delta^q_{\VV}(u)\right\}.
\end{aligned}
\]
The two-square lobe \(X_j\setminus X_{j-1}\) meets the old staircase
exactly in the vertical edge \(F_{j-1}D_j\).  Any old-to-old state walk
that enters the lobe can be replaced by a walk supported on this edge,
using at most one switch at each endpoint and no greater weight.  Thus old
state distances do not decrease.  Substituting the old boundary values
into the two displayed recurrences for the four new vertices, separately
for \(j\bmod3=0,1,2\), gives the scalar and preferred-axis tables and
completes the induction.

The six frontier edges added at step \(j\) are
\[
 F_{j-1}E_j,\ E_jF_j,\ D_jF_j,\ D_jG_j,\
 G_jD_{j+1},\ F_jD_{j+1}.
\]
Substituting \cref{tab:stair-labels} shows that exactly one of
\((a,b,c)\) changes on each such edge.  For residues \(0,1,2\), the six
changing coordinates are, respectively,
\begin{equation}
\begin{array}{c|cccccc}
j\bmod3&
F_{j-1}E_j&E_jF_j&D_jF_j&D_jG_j&G_jD_{j+1}&F_jD_{j+1}\\ \hline
0&a&b&a&c&a&c\\
1&b&c&b&a&b&a\\
2&c&a&c&b&c&b .
\end{array}
\label{tab:stair-colors}
\end{equation}
Dropping the \(b\)-coordinate identifies exactly the connected components
of the subgraph of \(\Gamma_{(a,b,c)}\) formed by adjacencies whose
endpoints have the same \((a,c)\)-label.  Equivalently, one contracts all
\(b\)-only adjacencies and then deletes loops and merges parallel edges.
Indeed, a constant-\((a,c)\) path in the original grid becomes a walk of
\(b\)-only quotient edges, and every such quotient walk lifts to a
constant-\((a,c)\) grid path.

Applying this fact to the six frontier edges gives three local quotients.
If \(j\equiv0\pmod3\), then
\([F_{j-1}]=[D_j]=P\) and \([E_j]=[F_j]=X\); with
\(Y=[G_j]\) and \(Z=[D_{j+1}]\), the new edges are
\[
 PX,\quad PY,\quad XZ,\quad YZ.
\]
If \(j\equiv1\pmod3\), then
\[
 [E_j]=[F_{j-1}],\qquad
 [F_j]=[D_j],\qquad
 [G_j]=[D_{j+1}],
\]
and deleting loops and merging edges parallel to the old attachment leaves
one new vertex and one new edge.  If \(j\equiv2\pmod3\), then
\([G_j]=[D_j]\) and \([F_j]=[D_{j+1}]\); a three-edge path
\[
 [F_{j-1}]-[E_j]-[F_j]-[D_j]
\]
is added in parallel with the old edge
\([F_{j-1}][D_j]\).  Tracking the endpoint degrees from the base
\(t=3\) gives the increments
\begin{equation}
\begin{array}{c|rrrr}
j\bmod3&
\Delta|V|&\Delta|E|&\Delta\betti&\Delta B\\ \hline
0&3&4&1&1\\
1&1&1&0&1\\
2&2&3&1&0 .
\end{array}
\label{tab:stair-recurrence}
\end{equation}
In the first case the old vertex \(P\) changes from degree two to degree
four.  In the second, the class containing \(D_j\) changes from degree two
to degree three.  In the third, the old endpoints are respectively already
a branch vertex and a leaf, and become degree four and degree two; no new
branch vertex is created.  These assertions hold in the \(t=3\) base and
are preserved by the same three local updates, proving the
\(\Delta B\) column together with the other three columns.
The directly checked base \(t=3\) is
\((|V|,|E|,\betti,B)=(11,14,4,5)\).  Summing the three-periodic recurrence
gives the first two identities in \cref{eq:stair-counts}.

Finally, the grid graph \(X_t^{(1)}\) is obtained by repeatedly taking
two-clique-sums of four-cycles along the shared square edges, so it has
treewidth at most two.  Since \(Q_t\) is a minor of this graph,
\(\operatorname{tw}(Q_t)\le2\).  Its positive cycle rank shows that it is
not a forest, hence its treewidth is at least two.
\end{proof}

\begin{corollary}[Unbounded joint feedback number]
\label{cor:stair-fvs}
If \(\tau_{\rm FVS}(Q_t)\) denotes the minimum size of a feedback vertex
set, then
\[
 \tau_{\rm FVS}(Q_t)
 \ge\left\lfloor\frac{t-2}{3}\right\rfloor.
\]
\end{corollary}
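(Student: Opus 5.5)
We exhibit \(\lfloor (t-2)/3\rfloor\) pairwise vertex-disjoint cycles in \(Q_t\). Since any feedback vertex set must meet every cycle, a family of disjoint cycles forces that many vertices, which gives the bound.

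The cycles come from the residue-\(0\) steps of the staircase induction in the proof of \cref{thm:staircase}. When \(j\equiv0\pmod3\), the local update adds the four edges
\[
  PX,\quad PY,\quad XZ,\quad YZ,
\]
where \(P=[F_{j-1}]=[D_j]\), \(X=[E_j]=[F_j]\), \(Y=[G_j]\) and \(Z=[D_{j+1}]\). Together these form a \(4\)-cycle \(C^{(j)}=PXZY\) in \(Q_t\). If the cycle uses \(Z=[D_{j+1}]\), we must have \(j+1\le t\), and the vertices listed in the proof must exist in \(X_t\).

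Next we count these steps. The relevant indices are \(j\in\{3,6,\ldots\}\) with \(j\le t-1\); this range includes the base structure, which contains the cycle for \(j=3\). There are at least \(\lfloor (t-1)/3\rfloor\ge\lfloor (t-2)/3\rfloor\) such indices. We then keep only those \(j\) for which the cycles are visibly separated. If adjacent residue-\(0\) steps \(j\) and \(j+3\) could share a class, we pass to a safe subfamily. The target count \(\lfloor (t-2)/3\rfloor\) leaves slack of at most one cycle for the boundary at the start and the end.

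The main obstacle is disjointness. The vertices of \(C^{(j)}\) are equivalence classes in the \((a,c)\)-quotient, and a single class could in principle absorb frontier vertices from several steps. For example, \([D_{j+1}]\) is identified with \([F_{j+1}]\) at the next step, since \(j+1\equiv1\). To handle this we use \cref{tab:stair-labels}. The \(a\)-label and \(c\)-label of \(D_j,E_j,F_j,G_j\) are nondecreasing in \(j\) and grow by \(2\) per period of three. So the \((a,c)\)-labels of the classes \(P,X,Y,Z\) at step \(j\) lie in a window that differs from the window at step \(j+3\), except possibly at the shared boundary class \(Z=[D_{j+1}]\) versus the next step's \(P=[D_{j+3}]\).

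We then compare \(f_a(D_{j+1})\) with \(f_a(D_{j+3})\), and likewise for \(f_c\). Both increase by at least \(1\) because \(\lfloor (2j+1)/3\rfloor\) increases over two steps. So the two classes carry different labels, and distinct labels already force distinct sectors.

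If this label comparison fails for some small residue pattern, we fall back on geometry. The lobe \(X_j\setminus X_{j-1}\) attaches to the rest of the staircase only along \(F_{j-1}D_j\), so the quotient has a cut structure along the staircase, and a class can spread over at most a bounded window of consecutive lobes. Using every other residue-\(0\) step, or every step, still leaves \(\lfloor (t-2)/3\rfloor\) cycles once the window is at most three.
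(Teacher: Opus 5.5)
Your strategy is the paper's: one $4$-cycle per period of three lobes, with cycles from different periods separated by their $(a,c)$-labels. The difference is the residue class. Your cycle $PXZY$ at $j=3q$ is just the boundary of the square $Q_{j,j}$, with corners $D_j,F_j,D_{j+1},G_j$. The paper instead takes $j=3q+2$ and the boundary of $Q_{j,j-1}$, with labels in $\{2q,2q+1\}\times\{2q-1,2q\}$. Your choice also works, but as written the disjointness step is not proved. The claim that ``the windows differ except possibly at the shared boundary class'' does not follow from monotonicity alone. The fallback of passing to a thinner subfamily is not a valid repair either: it would drop below the target count $\lfloor (t-2)/3\rfloor$.

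The missing step is a direct substitution of $j=3q$ into \eqref{tab:stair-labels}. It gives $D_j,F_j,D_{j+1},G_j$ the $(a,c)$-labels $(2q-1,2q-2)$, $(2q,2q-2)$, $(2q,2q-1)$, $(2q-1,2q-1)$. These are pairwise distinct, so the four corners stay four distinct sectors of $Q_t$ however later lobes enlarge the classes. Every $a$-label lies in $\{2q-1,2q\}$, so cycles for different $q$ are vertex-disjoint outright. The $Z$-versus-next-$P$ comparison and the geometric fallback then become unnecessary.

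The second problem is the step $j=3$, which you cannot drop. The label table is only asserted for $j\ge4$, and for $t\equiv2\pmod 3$ the residue-$0$ steps with $6\le j\le t$ give only $\lfloor (t-2)/3\rfloor-1$ cycles. At $j=3$ your identifications $[F_2]=[D_3]$ and $[E_3]=[F_3]$ are false:
\begin{itemize}
\item $F_2=(3,2)$ has $(a,c)$-label $(1,1)$, while $D_3=(3,3)$ has $(1,0)$;
\item $E_3=(4,2)$ has $(2,1)$, while $F_3=(4,3)$ has $(2,0)$.
\end{itemize}
The cycle nevertheless exists, because it uses only the corners of $Q_{3,3}$. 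Their labels $(1,0),(2,0),(2,1),(1,1)$ do agree with the formulas (the case $q=1$ above). This has to be verified directly in the base, however, not inherited from the residue-$0$ local quotient. The paper's residue-$2$ family starts at $j=5$ and so never leaves the range of the table. Finally, your restriction $j\le t-1$ is unnecessary, since $D_{j+1}$ is a corner of $Q_{j,j}\subseteq X_j$.
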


\begin{proof}
For \(j=3q+2\), with
\(1\le q\le\lfloor(t-2)/3\rfloor\), the four sectors containing
\[
 F_{j-1},\qquad E_j,\qquad
 F_j\sim D_{j+1},\qquad D_j\sim G_j
\]
have respective two-port labels
\[
 (2q,2q-1),\quad(2q,2q),\quad
 (2q+1,2q),\quad(2q+1,2q-1)
\]
and form the four-cycle described by the
\(j\equiv2\pmod3\) local quotient.  Different values of \(q\) use
disjoint label sets, so these cycles are vertex-disjoint.  Every feedback
vertex set must meet each of them.
\end{proof}

\begin{remark}[What the staircase does and does not refute]
\label{rem:stair-scope}
If degree-two suppression retains parallel corridors, it preserves the
\(\Theta(t)\) cycle rank in \cref{thm:staircase}.  If it subsequently
merges parallel edges, the diamond chain may collapse to a path; the comb
and tab constructions in \cref{thm:comb-tabs} independently handle simple
core size and branching.  None of these families has unbounded treewidth.
Together with \cref{cor:stair-fvs}, they invalidate bounded-size and
bounded-joint-feedback-set proof strategies, not the conjecture
\(\operatorname{tw}(\Gamma_{\mathcal P})\le f(r,h)\).
\end{remark}

\subsection{The exact constant for two ports on cubical disks}
\label{subsec:two-port-sharp}

We now specialize to the filled class for which the single-port geometry
of \cref{thm:hf-osp} is available.  Let \(\mathcal D_\square\) be the class
of finite pure planar cubical complexes whose realizations are closed
disks.  For \(X\in\mathcal D_\square\), \(G=X^{(1)}\), and two augmented
ports \(p,q\), define
\[
  c_2^\square
  =
  \sup_{X\in\mathcal D_\square,\ p,q}
  \operatorname{tw}(\Gamma_{p,q}).
\]
This constant is distinct from a supremum over more general
edge-discretized polygons with lower-dimensional appendages.

We first record the lower-bound example (\cref{fig:twobytwo}); the
remainder of the subsection proves the matching upper bound.

\begin{proposition}[A \(2\times2\) cubical disk gives the \(3\times3\)
grid]
\label{prop:two-port-grid}
Let \(X=[0,2]\times[0,2]\) with its unit cubical subdivision, and take
\[
 p=((0,1),\HH),
 \qquad
 q=((1,0),\VV).
\]
For the augmented port metric,
\[
 \Gamma_{p,q}\cong P_3\square P_3,
 \qquad
 \operatorname{tw}(\Gamma_{p,q})=3.
\]
\end{proposition}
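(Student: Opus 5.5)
The plan is to compute the two augmented labellings explicitly, observe that the common refinement splits all nine grid vertices into singleton sectors, and then use the standard treewidth of the \(3\times3\) grid.

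\emph{Labels.} For \(p=((0,1),\HH)\), zero state cost allows movement only horizontally from \((0,1)\). So, as in \cref{thm:hf-osp}(a), the level-zero set is the middle row \(y=1\). Every vertex on \(y=0\) or \(y=2\) has a one-bend route: a vertical edge to the middle row, then horizontally to \((0,1)\). By \cref{lem:adjacent-levels} these vertices cannot have label zero, so \(b^+_p(x,y)=\mathbf 1\{y\neq1\}\). Symmetrically, for \(q=((1,0),\VV)\) we get \(b^+_q(x,y)=\mathbf 1\{x\neq1\}\). This mirrors the computation in \cref{ex:parallel}, and \cref{lem:state} can also be used directly.

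\emph{Sectors.} The joint label \(B(x,y)=(\mathbf 1\{y\ne1\},\mathbf 1\{x\ne1\})\) has four classes:
\begin{itemize}
\item \((0,0)\) at the centre \((1,1)\) only;
\item \((0,1)\) at \((0,1)\) and \((2,1)\);
\item \((1,0)\) at \((1,0)\) and \((1,2)\);
\item \((1,1)\) at the four corners.
\end{itemize}
No two vertices within a class are grid-adjacent, since any two of them are at distance at least two. Hence every common-refinement sector is a single vertex. Thus \(\Gamma_{p,q}\) is the quotient of \(G\) by the trivial partition, i.e.\ \(G\) itself, which is \(P_3\square P_3\). Its grid edges all join distinct sectors, so no adjacency is lost in passing to the simple graph.

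\emph{Treewidth.} For the upper bound, use the path decomposition with bags
\[
\{(0,y):y\le2\}\cup\{(1,0)\},\quad
\{(0,1),(0,2),(1,0),(1,1)\},\quad
\{(0,2),(1,0),(1,1),(1,2)\},\quad
\ldots
\]
That is, sweep the vertices in column-major order, keeping a window of size four. Each bag has four vertices, so the width is \(3\).

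For the lower bound, exhibit a \(K_4\) minor, since treewidth at most \(2\) excludes \(K_4\) minors. Take the branch sets
\[
A=\{(0,0),(1,0),(2,0)\},\ B=\{(0,1),(0,2)\},\ C=\{(2,1),(2,2),(1,2)\},\ D=\{(1,1)\}.
\]
Each is connected, and the six required adjacencies are witnessed by the edges
\[
(0,0)(0,1),\ (2,0)(2,1),\ (1,0)(1,1),\ (0,2)(1,2),\ (0,1)(1,1),\ (2,1)(1,1).
\]

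The only step needing care is the label computation. It is a routine application of the state-graph formulas, so I expect no real obstacle here; the substantive difficulty of the subsection lies in the matching upper bound, not in this example.
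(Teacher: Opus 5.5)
Your proof is correct and follows the paper's route: the same labels $b^+_p=\mathbf 1\{y\neq1\}$ and $b^+_q=\mathbf 1\{x\neq1\}$, all nine joint sectors singletons so that $\Gamma_{p,q}\cong P_3\square P_3$, a $K_4$ minor for the lower bound, and an explicit width-three decomposition for the upper bound. Your sliding-window path decomposition and different branch sets (in place of the paper's elimination order and branch sets) are equally valid. One small slip: vertices off the middle row have label $\ge1$ because of your characterization of the level-zero set, not because of \cref{lem:adjacent-levels}; that lemma only gives the matching bound $\le1$, since each such vertex has a neighbour on the middle row.
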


\begin{proof}
At every \((x,y)\in\{0,1,2\}^2\), the two augmented bend distances are
\[
 b_p^+(x,y)=\mathbf1_{\{y\ne1\}},
 \qquad
 b_q^+(x,y)=\mathbf1_{\{x\ne1\}}.
\]
Indeed, the middle row is reachable from the horizontal virtual stub
without a switch, while every other row requires exactly one switch; the
second formula is the vertical analogue.  Every horizontal grid edge
changes the second indicator and every vertical grid edge changes the
first.  Consequently, no two adjacent grid vertices have the same joint
label.  Equal labels at nonadjacent vertices lie in different connected
components, so all nine common-refinement sectors are singletons and the
sector adjacency graph is exactly \(P_3\square P_3\).

For the lower bound, use the four connected, pairwise disjoint branch sets
\[
\begin{aligned}
A&=\{(1,1)\},\\
B&=\{(1,2),(0,2),(0,1),(0,0)\},\\
C&=\{(2,2),(2,1)\},\\
D&=\{(1,0),(2,0)\}.
\end{aligned}
\]
The centre \(A\) is adjacent to each other branch set, and the edges
\[
 (1,2)(2,2),\qquad (2,1)(2,0),\qquad (0,0)(1,0)
\]
witness \(BC,CD,DB\).  Thus these branch sets form a \(K_4\)-minor.
Conversely, eliminating the vertices in the order
\[
(0,0),(0,2),(2,0),(2,2),(0,1),(1,0),(1,1),(1,2),(2,1)
\]
has filled later-neighborhood sizes
\[
2,2,2,2,3,3,2,1,0.
\]
It therefore gives a width-three chordal completion.
\end{proof}

\begin{figure}[t]
\centering
\begin{tikzpicture}[scale=1.15,dot/.style={circle,fill,inner sep=1.2pt}]
\begin{scope}
  \draw[gray!50] (0,0) grid (2,2);
  \draw[->,very thick,blue!60!black] (0,1) -- (-0.6,1);
  \node[blue!60!black,font=\scriptsize,below] at (-0.45,0.95) {$p$};
  \draw[->,very thick,red!60!black] (1,0) -- (1,-0.6);
  \node[red!60!black,font=\scriptsize,left] at (0.95,-0.45) {$q$};
  \foreach \x/\y/\l/\pos in {
    0/0/{(1,1)}/{below left},
    1/0/{(1,0)}/{below right},
    2/0/{(1,1)}/{below right},
    0/1/{(0,1)}/{above left},
    1/1/{(0,0)}/{above right},
    2/1/{(0,1)}/{right},
    0/2/{(1,1)}/{above left},
    1/2/{(1,0)}/{above},
    2/2/{(1,1)}/{above right}}
  {
    \node[dot] at (\x,\y) {};
    \node[font=\tiny,\pos] at (\x,\y) {$\l$};
  }
\end{scope}
\begin{scope}[xshift=4.6cm]
  \draw[gray!70] (0,0) grid (2,2);
  \foreach \x in {0,1,2} \foreach \y in {0,1,2}
    \node[dot] at (\x,\y) {};
  \node[font=\scriptsize] at (1,-0.6)
    {$\Gamma_{p,q}\cong P_3\square P_3$};
\end{scope}
\end{tikzpicture}
\caption{\Cref{prop:two-port-grid}: the \(2\times2\) cubical disk with
ports \(p=((0,1),\HH)\) and \(q=((1,0),\VV)\).  Every vertex carries its
joint label \((b^+_p,b^+_q)\); no two adjacent vertices agree, so all
nine joint sectors are singletons and the common refinement is the
\(3\times3\) grid, of treewidth three.}
\label{fig:twobytwo}
\end{figure}

We first record an elementary rigidity fact about hyperplanes of planar
grid complexes, which the convexity argument below uses.

\begin{lemma}[Hyperplanes of a planar grid complex are straight strips]
\label{lem:sharp-hyperplane-straight}
Let \(X\) be a finite pure planar cubical domain, and consider the
equivalence relation on grid edges generated by declaring the two
opposite sides of every unit square of \(X\) equivalent.  Every
equivalence class \(\mathfrak h\) consists of edges of one axis lying in
one unit coordinate slab: after exchanging axes, say vertical edges
\(\{x\}\times[y_0,y_0+1]\) for \(x\) in a set \(I\subseteq\Z\).  The set
\(I\) is an integer interval, the squares of \(X\) joining consecutive
edges of \(\mathfrak h\) form the contiguous straight strip
\([\min I,\max I]\times[y_0,y_0+1]\subseteq X\), and no square of \(X\)
outside this strip contains two edges of \(\mathfrak h\).  In
particular, the hyperplane dual to \(\mathfrak h\) neither bends nor
branches, and its carrier is that straight strip.
\end{lemma}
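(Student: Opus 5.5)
The plan is to work entirely with the generating relation and track what one elementary step can do to an edge.

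\emph{One step preserves axis and slab.} Two opposite sides of a unit square \(\Qcell{x}{y}\) are either \(\{x\}\times[y,y+1]\) and \(\{x+1\}\times[y,y+1]\) (vertical) or \([x,x+1]\times\{y\}\) and \([x,x+1]\times\{y+1\}\) (horizontal). So one elementary identification keeps the axis and, for vertical edges, keeps the slab \(\R\times[y_0,y_0+1]\) while shifting the abscissa by \(\pm1\). By induction along a chain of generating steps, every class \(\mathfrak h\) has one axis, say vertical after exchanging axes, and lies in one slab. Write \(\mathfrak h=\{\{x\}\times[y_0,y_0+1]:x\in I\}\).

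\emph{\(I\) is an interval and the strip lies in \(X\).} Fix \(x_0\in I\) and any \(x\in I\). A chain of elementary steps from the edge at \(x_0\) to the edge at \(x\) visits abscissas changing by \(\pm1\). A step from abscissa \(u\) to \(u+1\) is witnessed by the square \(\Qcell{u}{y_0}\in X\), since that is the only unit square having both edges as opposite sides. A walk on \(\Z\) with unit steps from \(x_0\) to \(x\) passes through every integer between them, so every intermediate abscissa lies in \(I\). Every consecutive pair \(u,u+1\) between \(\min I\) and \(\max I\) is traversed by the chain from \(\min I\) to \(\max I\), so its witnessing square \(\Qcell{u}{y_0}\) lies in \(X\). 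Hence \(I=[\min I,\max I]\cap\Z\), and \([\min I,\max I]\times[y_0,y_0+1]\subseteq X\).

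\emph{No square outside the strip contains two edges of \(\mathfrak h\).} Any square containing two distinct vertical edges of \(\mathfrak h\) has two vertical sides in the slab. A unit square has exactly two vertical sides, at abscissas \(u\) and \(u+1\), so the square is \(\Qcell{u}{y_0}\) with \(u,u+1\in I\), which lies in the strip. Horizontal sides of such a square are not in \(\mathfrak h\), because the class has a single axis. This also shows that the squares joining consecutive edges are exactly the strip squares, so the carrier is the strip.

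\emph{Main obstacle.} The only delicate point is the interpretive claim that the dual hyperplane "neither bends nor branches." Bending would require identifying an edge with an edge of the other axis, which the first step rules out. Branching would require a square outside the strip, or a third edge of \(\mathfrak h\) in some square, which the last step rules out since each square has only two sides of each axis. Planarity and purity are not needed beyond the fact that squares are axis-parallel unit cells.
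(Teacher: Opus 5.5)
Your proof is correct and takes essentially the same approach as the paper. A generating step preserves axis and slab and moves the abscissa by one through a square \(Q_{u,y_0}\) of \(X\), so \(I\) is a unit-step-connected set of integers whose witnessing squares fill the strip; and since a unit square has only one pair of vertical sides, only strip squares contain two edges of \(\mathfrak h\). Your explicit walk argument that every \(Q_{u,y_0}\) with \(\min I\le u<\max I\) lies in \(X\) simply spells out the paper's one-line justification.
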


\begin{proof}
The generating relation pairs a vertical edge
\(\{x\}\times[y_0,y_0+1]\) only with the opposite side
\(\{x\pm1\}\times[y_0,y_0+1]\) of a square \(Q_{x,y_0}\) or
\(Q_{x-1,y_0}\) of \(X\): opposite sides of a unit square are parallel
and lie in the same unit slab.  Hence the axis and the slab are
invariants of the class, and within the slab one generating step changes
the \(x\)-coordinate by exactly one, through a square of \(X\) in that
slab.  The class is therefore a connected set of integers under steps
through present squares, so \(I\) is an interval and the connecting
squares form the stated contiguous strip.  A unit square has exactly one
pair of vertical sides, so it contains at most two edges of
\(\mathfrak h\), and precisely the strip squares contain two.
\end{proof}

\begin{lemma}[Directional extrusions are convex]
\label{lem:sharp-extrusion-convex}
Let \(X\in\mathcal D_\square\), let \(P\) be a straight grid path, and let
\(e\) be a cardinal direction orthogonal to \(P\).  Then
\(\operatorname{Ext}_e(P)\) is graph-convex in \(G=X^{(1)}\).
\end{lemma}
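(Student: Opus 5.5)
We will use the standard local-to-global criterion for convexity in median graphs. Since \(X\in\mathcal D_\square\) is a CAT(0) square complex, \(G=X^{(1)}\) is a median graph \cite{chepoi2000graphs}. In a median graph, a connected vertex set \(S\) is convex as soon as it is \emph{locally convex}: whenever \(u,v\in S\) satisfy \(\dist_G(u,v)=2\), every common neighbour of \(u\) and \(v\) lies in \(S\). So we first show that \(S=\operatorname{Ext}_e(P)\) is connected, and then verify local convexity by a short case analysis. If we prefer not to quote the local-to-global criterion, we can instead write \(S\) as an intersection of halfspaces. The hyperplanes to use are the straight strips of \cref{lem:sharp-hyperplane-straight} that bound the columns. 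The case analysis is the same.

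After a symmetry, \(P=\{(x,y_0):x_1\le x\le x_2\}\) and \(e=(0,1)\). Then \(S\) is the union of the vertical runs \(R_x=\{(x,y_0),\dots,(x,y_0+r_x)\}\), where \(r_x\) is the largest value for which all vertical edges from \((x,y_0)\) up to \((x,y_0+r_x)\) lie in \(G\). Each run is a contiguous initial segment, and consecutive runs are joined through \(P\), so \(S\) is connected. Every \(u\in S\) has \(x\in[x_1,x_2]\) and \(y\ge y_0\).

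For local convexity, let \(u,v\in S\) with \(\dist_G(u,v)=2\), and let \(w\) be a common neighbour. There are three cases.

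\emph{Vertical case.} If \(v=u\pm(0,2)\), then \(w\) is the midpoint. It lies in the same run, because runs are contiguous.

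\emph{Diagonal case.} Say \(u=(x,y)\) and \(v=(x+1,y+1)\). If \(w=(x,y+1)\), then the edge \(uw\) is a vertical edge continuing \(R_x\), so \(w\in R_x\). If \(w=(x+1,y)\), then \(y_0\le y<y+1\le y_0+r_{x+1}\), so \(w\in R_{x+1}\). The case of the other diagonal orientation is symmetric.

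\emph{Horizontal case.} Say \(u=(x,y)\), \(v=(x+2,y)\), and \(w=(x+1,y)\). We must show that every vertical edge of column \(x+1\) between heights \(y_0\) and \(y\) belongs to \(G\), so that \(r_{x+1}\ge y-y_0\). This is the one place where the disk hypothesis enters, and it is the main obstacle.

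For the horizontal case, form the closed grid curve
\[
  (x,y_0)\to(x,y)\to(x+1,y)\to(x+2,y)\to(x+2,y_0)\to(x+1,y_0)\to(x,y_0).
\]
It runs up \(R_x\), across through \(w\), down \(R_{x+2}\), and back along \(P\). All of these edges are in \(X\), and the curve is the boundary of the rectangle \([x,x+2]\times[y_0,y]\), so it is a Jordan curve. Because \(X\) is a closed disk, \(\R^2\setminus X\) has no bounded component. Hence the bounded complementary region of this curve lies in \(X\), and the whole rectangle \([x,x+2]\times[y_0,y]\) is contained in \(|X|\). The open segment \(\{x+1\}\times(y_0,y)\) then lies in \(|X|\). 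Moreover, each of its unit subsegments lies in the interior of the rectangle, and that interior is covered by unit cells of \(X\), which are exactly the squares \(Q_{x,k}\) and \(Q_{x+1,k}\) for \(y_0\le k<y\). It follows that these vertical unit edges are edges of \(X\), and therefore of \(G\). So \(w\in R_{x+1}\subseteq S\).

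Degenerate instances are harmless. When \(y=y_0\), the vertex \(w\) lies on \(P\) directly. When \(u,v\) are at distance \(2\) in \(G\) but are not at \(\ell^1\)-distance \(2\), there is no common neighbour to check. This completes the local convexity check, and the criterion then gives graph-convexity of \(\operatorname{Ext}_e(P)\).
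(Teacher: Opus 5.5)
Your proof is correct, but it takes a different route from the paper.

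\textbf{The paper's route.} The paper never checks pairs at distance two. It shows directly that every boundary edge \(uv\) of \(H=\operatorname{Ext}_e(P)\), with \(u\in H\), is dual to a hyperplane having all of \(H\) on the \(u\)-side. For an edge parallel to \(e\), this is the hyperplane behind the baseline. For an edge orthogonal to \(e\), the straight carriers of \cref{lem:sharp-hyperplane-straight} show that the hyperplane crosses no internal edge of \(H\). A nearest-point geodesic from any \(z\notin H\) crosses each hyperplane at most once. This exhibits \(H\) as an intersection of Sageev--Farley halfspaces.

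\textbf{Your route.} You import the local-to-global convexity theorem for median (indeed weakly modular) graphs: a connected, locally convex set is convex. That theorem is true and follows quickly from the quadrangle condition. However, it is an external result that the paper does not use, so you should state it precisely and cite it (Chepoi's theorem for weakly modular graphs) rather than call it standard.

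\textbf{What each route buys.} Your remaining argument is entirely elementary. It also makes the role of the disk hypothesis explicit. Apart from supplying median-ness of \(G=X^{(1)}\), the hypothesis enters exactly once: in the Jordan-curve filling of the rectangle \([x,x+2]\times[y_0,y]\) in the horizontal case. That is precisely where a missing square between columns \(x\) and \(x+2\) would break convexity. In the paper, this role is hidden inside the fact that hyperplanes of a CAT(0) complex separate. The paper's route instead yields an explicit halfspace description of the extrusion, and it reuses the straightness lemma it needs anyway.

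\textbf{Minor point.} Your closing remark about pairs at \(G\)-distance two but \(\ell^1\)-distance not two is vacuous: \(G\)-distance two always gives a common neighbour, hence \(\ell^1\)-distance two. It does no harm.
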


\begin{proof}
The cubical disk \(X\) is simply connected, and every vertex link is a
subgraph of \(C_4\), hence flag.  Thus \(X\) is a CAT(0) cube complex.
Cubical hyperplanes separate its one-skeleton into graph-convex
halfspaces \cite{farley2009sageev}.

Put \(H=\operatorname{Ext}_e(P)\).  We show that every boundary grid edge
\(uv\), with \(u\in H\) and \(v\notin H\), is dual to a hyperplane having
all of \(H\) on the \(u\)-side.  If \(uv\) is parallel to \(e\), it cannot
leave \(H\) in the forward direction, since the same fibre would reach
\(v\).  It therefore points backward.  Writing \(u=x+re\) for
\(x\in P\), a backward neighbor could fail to lie in \(H\) only when
\(r=0\); hence \(u\in P\).  No edge internal to the forward extrusion is
dual to this behind-the-baseline hyperplane.  Since \(H\) is connected,
it lies entirely in the \(u\)-side halfspace.

Suppose instead that \(uv\perp e\).  If its dual hyperplane also crossed
an edge internal to \(H\), then its carrier would connect the two edges by
a straight strip of squares (\cref{lem:sharp-hyperplane-straight}).
After rotating the picture, take
\(e=+\mathsf y\) and \(P\) horizontal at height \(y_0\).  An internal
dual edge has both endpoints on the \(+\mathsf y\)-fibres from two
adjacent vertices of \(P\).  The boundary edge \(uv\) and that internal
edge are both at heights at least \(y_0\).  The straight carrier contains
every intervening square at their fixed pair of column coordinates, so
its vertical sides extend both fibres square by square to the height of
\(uv\).  Both endpoints of \(uv\) would then lie in \(H\), a contradiction.
Since \(H\) is connected, it cannot have vertices on both sides of this
hyperplane without an internal edge crossing it.

For each \(z\notin H\), take a distance-minimizing graph path from \(z\)
to the set \(H\), and use the hyperplane dual to its last edge.  That edge
is a boundary edge of \(H\).  A graph geodesic in a CAT(0) cube complex
crosses a hyperplane at most once, so the path did not cross this
hyperplane earlier; consequently it separates \(z\) from all of \(H\).
Thus \(H\) is an intersection of cubical halfspaces and is graph-convex.
\end{proof}

\begin{corollary}[Convex single-port sectors]
\label{cor:sharp-sector-convex}
Every augmented single-port sector in \(X\in\mathcal D_\square\) is
graph-convex.  Hence the intersection of a \(p\)-sector and a \(q\)-sector
is empty or connected.
\end{corollary}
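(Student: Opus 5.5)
The plan is to treat the root sector and the positive sectors separately, since \cref{thm:hf-osp} gives an explicit extrusion description only for positive sectors.

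\emph{Root sector.} By \cref{thm:hf-osp}(a), the level-zero sector is the maximal straight grid path \(L_0\) of axis \(d\) through \(a\). Any geodesic between two of its vertices can be compared with the straight segment joining them. The hyperplanes separating the two endpoints are exactly the duals of the edges of that segment. By \cref{lem:sharp-hyperplane-straight} each such dual is a straight strip transverse to \(L_0\), so a geodesic crosses each one exactly once and no other hyperplane. Hence the geodesic changes only the coordinate along \(d\), monotonically. It also cannot leave the line: a first step off the line would cross a hyperplane parallel to \(L_0\), which it would then have to cross a second time to return, which is impossible for a geodesic. So \(L_0\) is convex. Alternatively, \(L_0\) is itself the extrusion of a single vertex, which can be handled by the same halfspace argument used in \cref{lem:sharp-extrusion-convex}.

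\emph{Positive sectors.} For a positive sector \(S\), \cref{thm:hf-osp}(d) gives \(S=\operatorname{Ext}_{e_S}(P_S)\), where \(P_S\) is a straight path and \(e_S\) has axis \(\eta_m\). The displayed formula in (c), \(P_S=\overline P_S+e_S\), shows that \(e_S\) is the direction of the cross edges, and these are orthogonal to the parent interface. So the hypotheses of \cref{lem:sharp-extrusion-convex} hold, and that lemma gives graph-convexity directly. The one case needing care is a degenerate \(P_S\) consisting of a single vertex. There orthogonality is vacuous, but the proof of \cref{lem:sharp-extrusion-convex} never uses the length of \(P\), so it applies unchanged.

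\emph{Intersections.} Convex subsets of a median graph, here the one-skeleton of the CAT(0) cube complex \(X\), are closed under intersection, since the intersection of two intersections of halfspaces is again one. So a \(p\)-sector \(S\) meets a \(q\)-sector \(S'\) in a graph-convex set. A nonempty convex set is connected, because it contains a geodesic between any two of its points. Therefore \(S\cap S'\) is empty or connected; in particular every nonempty such intersection is a single common-refinement sector.

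The main obstacle is the root case if one does not route it through the extrusion lemma. I would first try to present \(L_0\) as a degenerate extrusion, and fall back to the direct hyperplane argument above otherwise.
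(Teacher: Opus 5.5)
Your proof is correct and is essentially the paper's argument. The root sector is handled directly, each positive sector is an extrusion by \cref{thm:hf-osp}(d) so \cref{lem:sharp-extrusion-convex} applies, and intersections of graph-convex sets are graph-convex, hence connected. Two small differences: the paper settles the root case with the coordinate (\(\ell_1\)) lower bound on graph distance, which forces every geodesic between two collinear vertices to be the straight segment and is lighter than your hyperplane argument; and your degenerate single-vertex \(P_S\) case never occurs, since \(P_S\) is a long side of a band strip containing at least one square.
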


\begin{proof}
The root sector is a straight path and is convex by the coordinate
distance lower bound.  Every other sector is a directional extrusion by
\cref{thm:hf-osp}, so \cref{lem:sharp-extrusion-convex} applies.
Intersections of graph-convex sets are graph-convex.
\end{proof}

Let \(T_p,T_q\) denote the two rooted single-port sector trees.  By the
last corollary, every joint sector is precisely one nonempty intersection
of a \(p\)-sector with a \(q\)-sector; no such pair splits into two joint
sectors.  Let
\[
  \phi_p:V(\Gamma_{p,q})\to V(T_p),
  \qquad
  \phi_q:V(\Gamma_{p,q})\to V(T_q)
\]
record the containing single-port sectors.

\begin{lemma}[A straight path meets at most three sectors]
\label{lem:sharp-three-sectors}
For one augmented port, every straight grid path meets at most three
single-port sectors.  The sectors met lie on a path of length at most two
in the rooted sector tree.
\end{lemma}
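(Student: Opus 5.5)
The plan is to show that the labels along the path take only two consecutive values, and then use the tree structure of \cref{thm:hf-osp} together with the convexity of \cref{cor:sharp-sector-convex} to cut the path into at most three runs. Throughout, \(X\in\mathcal D_\square\), as in the rest of this subsection.

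\emph{Step 1: two consecutive labels.} Let \(L\) be a straight grid path of axis \(\tau\). By \cref{lem:state}(b) or~(c), the state distance \(\delta_\tau\) is constant along \(L\); call its value \(c\). By \cref{lem:state}(a), every \(v\in L\) has \(\delta_{\bar\tau}(v)\in\{c-1,c+1\}\). Hence
\[
  b(v)=\min\{\delta_\tau(v),\delta_{\bar\tau}(v)\}\in\{c-1,c\}.
\]
If all vertices of \(L\) carry the same label, then \(L\) is connected inside one level and lies in a single sector. So assume both labels \(c-1\) and \(c\) occur.

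\emph{Step 2: one lower sector, met in an interval.} Project \(L\) to \(\GammaP\). Because \(L\) is connected, the sectors it meets span a connected subgraph of the tree \(\GammaP\). This subgraph uses only the depths \(c-1\) and \(c\), since depth equals label by \cref{thm:hf-osp}(a). Suppose it contained two distinct level-\((c-1)\) sectors. The tree path between them stays inside this connected subgraph, so it alternates between depths \(c-1\) and \(c\). It would therefore pass through a level-\(c\) sector with two distinct level-\((c-1)\) neighbours, contradicting the unique predecessor of \cref{thm:hf-osp}(b). Hence exactly one level-\((c-1)\) sector \(A\) meets \(L\), and every level-\(c\) sector meeting \(L\) is adjacent to \(A\), hence a child of \(A\).

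Next, \(L\) is graph-convex in \(G\), because a straight path is a geodesic by the coordinate distance lower bound. The sector \(A\) is graph-convex by \cref{cor:sharp-sector-convex}. So \(A\cap L\) is a nonempty convex subset of a path, that is, a contiguous subpath.

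\emph{Step 3: at most two upper sectors.} The label-\(c\) vertices of \(L\) are exactly \(L\setminus A\). Removing a subpath from a path leaves at most two components, each a straight subpath. Each component is connected and lies entirely at level \(c\), so it lies in a single sector. Thus \(L\) meets \(A\) and at most two children \(S_1,S_2\) of \(A\). These sectors lie on the tree path \(S_1-A-S_2\), which has length at most two.

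\emph{Main obstacle.} The only delicate point is Step 2, ruling out two lower sectors or a fragmented intersection with \(A\). I expect this to be handled cleanly by the unique-predecessor property together with convexity. Convexity is also used implicitly in Step 3: two separate runs of \(L\) at level \(c\) could lie in the same child sector, but that only lowers the count, so the bound of three sectors still holds.
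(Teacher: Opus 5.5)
Your proof is correct. It uses the same three ingredients as the paper (two consecutive labels along the line, depth equals label in the sector tree, convexity of sectors), but arranges them differently.

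The paper first invokes convexity of \emph{every} sector (\cref{cor:sharp-sector-convex}). This shows that the sectors met by the path occur in non-repeating runs, so the run sequence is a simple path in the sector tree. It then obtains the labels \(\{m,m+1\}\) by appending the straight path to an optimal state path at a minimizer. It finishes with the tree fact that a simple path confined to two consecutive depths is child--parent--child.

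You instead read the two-label fact directly off \cref{lem:state}(b)--(c): the state \(\delta_\tau\) is constant along the line. This is a purely local derivation that also tells you which label is the upper one. You then apply the unique-predecessor property to the connected subtree spanned by the sectors met. That step does not need the projected walk to be simple, and it isolates a single lower sector \(A\). Finally, you use convexity only for \(A\) and count the at most two runs of \(L\setminus A\).

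What each route buys:
\begin{itemize}
\item Your version uses convexity more sparingly and makes the count elementary.
\item The paper's version also shows, in the same step, that no sector recurs along the path. This ordered, non-repeating sector sequence is what \cref{cor:sharp-adhesion} and the monotone-interface step of \cref{thm:sharp-two-port-upper} rely on. In your route, two runs of \(L\setminus A\) could a priori lie in the same child. As you note, this only lowers the count, but ruling it out needs convexity of the children too. That is available, just not part of your argument.
\end{itemize}

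A small wording point: being a geodesic does not by itself make \(L\) convex; the uniqueness of the straight geodesic does. What you actually need is only that each subsegment of \(L\) is a geodesic. Together with convexity of \(A\), that makes \(A\cap L\) an interval.
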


\begin{proof}
Convexity makes the intersection of each sector with the straight path an
interval, so no sector repeats.  Projecting consecutive intervals gives a
simple path in the sector tree.

Let \(z\) have minimum bend label \(m\) on the straight path.  Append either
side of that path to an optimal state path ending at \(z\).  At \(z\) this
requires at most one additional switch, so every label on the straight
path is \(m\) or \(m+1\).  By \cref{thm:hf-osp}, these are also the rooted
depths.  A simple path in a rooted tree using only two consecutive depths
has the form child--parent--child and has at most three vertices.
\end{proof}

\begin{lemma}[Fibre trees]
\label{lem:sharp-fibre-tree}
For every \(A\in V(T_p)\), the induced graph
\[
  H_A=\Gamma_{p,q}[\phi_p^{-1}(A)]
\]
is a tree.  The symmetric statement holds for every \(q\)-sector.
\end{lemma}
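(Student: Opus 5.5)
The plan is to identify \(H_A\) with the sector graph of the restriction of the \(q\)-metric to the vertex set of \(A\), and then show that this restricted graph is a tree.

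First, by \cref{cor:sharp-sector-convex}, \(A\) is graph-convex in \(G\), so \(G[A]\) is connected. By \cref{thm:hf-osp}, \(A\) is either the root straight path or a one-sided extrusion \(\operatorname{Ext}_{e_A}(P_A)\). Every joint sector in \(\phi_p^{-1}(A)\) is a nonempty intersection \(A\cap S'\) with \(S'\) a \(q\)-sector, and it is connected. Two such joint sectors are adjacent in \(\Gamma_{p,q}\) exactly when a grid edge inside \(A\) joins them. So \(H_A\) is the quotient of the connected graph \(G[A]\) by the partition induced by \(b_q^+\), and in particular \(H_A\) is connected. It remains to rule out cycles.

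Suppose \(H_A\) contains a cycle \(Z_1,\dots,Z_k\), with \(k\ge3\). Since \(\phi_q\) is injective on \(\phi_p^{-1}(A)\) (each \(q\)-sector meets \(A\) in at most one joint sector), the images \(\phi_q(Z_i)\) are distinct and consecutive ones are adjacent in \(T_q\). A closed walk of distinct vertices of length at least three cannot exist in a tree, so this is a contradiction. Hence \(H_A\) is acyclic, and it is therefore a tree.

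Concretely, the steps are as follows.
\begin{enumerate}[label=(\roman*),leftmargin=2.2em]
  \item Use convexity to get that \(H_A\) is connected and that \(\phi_q\) is injective on the fibre, since intersections of convex sets are convex and a convex set is connected.
  \item Check that adjacency in \(H_A\) maps to adjacency in \(T_q\) and not to equality. Adjacent joint sectors inside \(A\) share the \(p\)-label, so their \(q\)-labels differ. Hence the \(q\)-sectors are distinct and grid-adjacent, and so adjacent in \(\Gamma_q=T_q\).
  \item Conclude that \(\phi_q\) embeds \(H_A\) as a subgraph of the tree \(T_q\), by injectivity together with edge preservation. A connected subgraph of a tree is a tree.
\end{enumerate}

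The symmetric statement, for a \(q\)-sector, follows by exchanging the roles of \(p\) and \(q\).

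The main point needing care is injectivity in step (i): that each \(q\)-sector meets \(A\) in a connected set. This is exactly the second clause of \cref{cor:sharp-sector-convex}, so I expect no real obstacle beyond invoking it carefully. In particular, the argument must use the tree property of \(T_q\) coming from \(h=0\) (\cref{cor:hf-reeb-tree}); the straight-path lemma \cref{lem:sharp-three-sectors} is not needed here.
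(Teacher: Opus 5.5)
Your proposal is correct and follows the paper's proof essentially step for step. You get injectivity of \(\phi_q\) on the fibre from connectedness of \(A\cap B\) (\cref{cor:sharp-sector-convex}), connectivity of \(H_A\) from that of \(G[A]\), and an edge-preserving embedding into the tree \(T_q\). You also check explicitly that adjacent joint sectors in one fibre have distinct \(q\)-labels; the paper leaves this implicit. Convexity is not needed for connectivity of \(G[A]\), since \(A\) is connected by the definition of a sector.
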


\begin{proof}
The map \(\phi_q\) is injective on \(\phi_p^{-1}(A)\), since each nonempty
intersection \(A\cap B\) is connected.  A path in the connected graph
\(G[A]\), projected through its successive \(q\)-sectors, proves that
\(H_A\) is connected.  Every edge of \(H_A\) maps to an edge of \(T_q\),
so injectivity identifies \(H_A\) with a connected subgraph of the tree
\(T_q\).
\end{proof}

\begin{lemma}[The interface between adjacent old sectors]
\label{lem:sharp-convex-bridge}
If \(A,A'\) are adjacent in \(T_p\), all grid edges between them form an
order-preserving perfect matching between two equal-length parallel
straight paths \(Q_A\subseteq A\) and \(Q_{A'}\subseteq A'\).
\end{lemma}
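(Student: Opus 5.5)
This is essentially a repackaging of \cref{thm:hf-osp}(c), and we derive it from that theorem. Since \(T_p\) is the rooted sector tree of \cref{thm:hf-osp}, two adjacent sectors \(A,A'\) have distinct levels by \cref{lem:adjacent-levels}. The deeper of the two, say \(A'\) at level \(m>0\), has \(A\) as its unique predecessor by \cref{thm:hf-osp}(b). Indeed, any neighbour of \(A'\) at level \(m-1\) is its parent in the rooted tree, because depth equals level by \cref{thm:hf-osp}(a). The other orientation is symmetric after exchanging names.

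We then invoke \cref{thm:hf-osp}(c) for \(S=A'\). It gives equal-length straight paths \(\overline P_{A'}\subseteq A\) and \(P_{A'}\subseteq A'\), with \(P_{A'}=\overline P_{A'}+e_{A'}\), and it says that the \(A\)--\(A'\) grid edges are exactly the edges \(x(x+e_{A'})\) for \(x\in\overline P_{A'}\). Setting \(Q_A=\overline P_{A'}\) and \(Q_{A'}=P_{A'}\), this is precisely an order-preserving perfect matching between two parallel straight paths. The order is preserved because the matching is a translation. Parallelism holds because \(e_{A'}\) is orthogonal to the paths: it has axis \(\eta_m\), while the straight strip of \cref{lem:hf-straight-band} has its constant sides orthogonal to the varying edges.

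The one point needing care is that \emph{all} grid edges between \(A\) and \(A'\) are captured, not just those in one band. This is already part of \cref{thm:hf-osp}(c): every such edge is the varying side of an \((m-1)/m\)-stripe square, because purity puts every edge in a square and \cref{lem:nosaddle} excludes other patterns. The edge therefore lies in the closure of a band of \(f_p^{-1}((m-1,m))\) joining the contours \(A\) and \(A'\). By \cref{cor:hf-reeb-tree} there is only one such band, since two would give parallel Reeb edges. \Cref{lem:hf-straight-band} then identifies that band's cross edges as the matching.

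No step is a genuine obstacle. The only thing to verify is this "every cross edge lies in the unique band" point, which we quote from the proof of \cref{thm:hf-osp}(c) rather than reprove.
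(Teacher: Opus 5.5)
Your proposal is correct and matches the paper's proof: recognize the adjacent pair as parent and child in the rooted tree (since depth equals level), apply \cref{thm:hf-osp}(c) to the child \(A'\), and set \(Q_A=\overline P_{A'}\), \(Q_{A'}=P_{A'}\). Your recap of why every \(A\)--\(A'\) edge lies in the unique band is already part of statement (c), so it is harmless but not needed.
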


\begin{proof}
Root \(T_p\) at its level-zero sector.  Two sectors adjacent in the
rooted tree \(T_p\) are in a parent--child relation; say \(A'\) is the
child, at level \(m\).  \cref{thm:hf-osp}(c) states that the
\(A\)--\(A'\) grid edges are precisely the order-preserving perfect
matching
\[
  x(x+e_{A'}),\qquad x\in\overline P_{A'},
\]
between the equal-length parallel straight paths
\(\overline P_{A'}\subseteq A\) and
\(P_{A'}=\overline P_{A'}+e_{A'}\subseteq A'\).  Setting
\(Q_A=\overline P_{A'}\) and \(Q_{A'}=P_{A'}\) gives the statement.
\end{proof}

For an edge \(e=AA'\in E(T_p)\), let
\(\mathcal A_{A,e}\subseteq V(H_A)\) be the joint sectors containing an
\(A\)-endpoint of an \(A\)--\(A'\) grid edge, and define
\(\mathcal A_{A',e}\) symmetrically.

\begin{corollary}[Three-vertex adhesions]
\label{cor:sharp-adhesion}
Each set \(\mathcal A_{A,e}\) has size at most three and is contained in
a path of length at most two in \(H_A\).
\end{corollary}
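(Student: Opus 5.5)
By \cref{lem:sharp-convex-bridge}, the \(A\)-endpoints of all \(A\)--\(A'\) grid edges form one straight grid path \(Q_A\subseteq A\). So \(\mathcal A_{A,e}\) is exactly the set of joint sectors of the form \(A\cap B\) that meet \(Q_A\). By \cref{cor:sharp-sector-convex}, each such joint sector is the single nonempty intersection \(A\cap B\), and \(\phi_q\) is injective on \(\phi_p^{-1}(A)\). Hence \(\mathcal A_{A,e}\) is in bijection, via \(\phi_q\), with the set of \(q\)-sectors that meet the straight path \(Q_A\).

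\Cref{lem:sharp-three-sectors}, applied to port \(q\) and the straight path \(Q_A\), gives the size bound. The \(q\)-sectors meeting \(Q_A\) are at most three, so \(|\mathcal A_{A,e}|\le3\).

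For the path structure I would argue directly in \(H_A\) rather than in \(T_q\). Walk along \(Q_A\) vertex by vertex; every vertex lies in \(A\), and consecutive vertices are grid-adjacent. The successive \(q\)-sectors met therefore give a walk through the joint sectors \(A\cap B\) in which consecutive distinct terms are adjacent in \(\Gamma_{p,q}\), hence in the induced graph \(H_A\). By convexity of each \(q\)-sector (\cref{cor:sharp-sector-convex}), each \(q\)-sector meets \(Q_A\) in an interval, so no joint sector recurs along the walk. The walk is therefore a simple path in \(H_A\), and its vertex set is exactly \(\mathcal A_{A,e}\). With at most three vertices, it has length at most two. The symmetric statement for \(\mathcal A_{A',e}\) follows the same way using \(Q_{A'}\).

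The only delicate point is identifying \(\mathcal A_{A,e}\) with the joint sectors meeting \(Q_A\), and not a larger set. This needs the fact from \cref{lem:sharp-convex-bridge} that every \(A\)--\(A'\) edge has its \(A\)-endpoint on \(Q_A\), and that every vertex of \(Q_A\) is such an endpoint because the matching is perfect. Both facts are given directly by that lemma, so I expect no real obstacle.
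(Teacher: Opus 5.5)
Your proof is correct and follows the paper's argument. You identify \(\mathcal A_{A,e}\) with the joint sectors meeting the straight shore \(Q_A\) of \cref{lem:sharp-convex-bridge}, bound their number by \cref{lem:sharp-three-sectors} for port \(q\), and use convexity to rule out repeats. The only difference is that you build the path directly in \(H_A\) from grid edges of \(Q_A\subseteq A\), whereas the paper first obtains the path in \(T_q\) and transfers it to \(H_A\) through the injective identification of \cref{lem:sharp-fibre-tree}; your version makes the \(H_A\)-adjacencies explicit.
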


\begin{proof}
Apply \cref{lem:sharp-three-sectors} for port \(q\) to the straight shore
\(Q_A\) of \cref{lem:sharp-convex-bridge}.  Convexity makes the successive
\(q\)-sectors a path in \(T_q\); the injective identification in
\cref{lem:sharp-fibre-tree} transfers that path to \(H_A\).
\end{proof}

We use two elementary width lemmas.

\begin{lemma}[Planar distance-two completion of a tree]
\label{lem:sharp-planar-tree}
Let \(T\) be a tree and let \(L\) be a planar graph obtained from \(T\) by
adding edges whose endpoints have distance two in \(T\).  Then
\(\operatorname{tw}(L)\leq3\).
\end{lemma}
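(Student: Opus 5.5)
The plan is to build a tree decomposition of width at most \(3\) from local pieces, one for each vertex of \(T\), and glue the pieces along \(T\). Root \(T\) at an arbitrary vertex \(\rho\). An added edge joins two vertices at distance two in \(T\), so it is of one of two kinds: a \emph{grandparent edge} \(cg\), where \(g\) is the parent of the parent of \(c\), or a \emph{sibling edge} \(cc'\), where \(c\) and \(c'\) have a common parent. For each vertex \(u\) with children \(C(u)\) and parent \(g=\pi(u)\) (absent when \(u=\rho\)), consider the local graph
\[
  N_u=L\bigl[\{u\}\cup\{\pi(u)\}\cup C(u)\bigr].
\]
Every edge of \(L\) lies in some \(N_u\). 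A tree edge \(uc\) lies in \(N_u\). A sibling edge among children of \(u\) lies in \(N_u\). A grandparent edge \(cg\) with \(c\in C(u)\) and \(g=\pi(u)\) lies in \(N_u\).

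The key local step is that \(N_u-u\) is outerplanar. Indeed, \(N_u\) is an induced subgraph of the planar graph \(L\), so it is planar, and \(u\) is adjacent in \(T\subseteq L\) to every other vertex of \(N_u\). In a plane drawing of \(N_u\), deleting \(u\) therefore leaves all remaining vertices on the single face that formerly contained \(u\). Hence \(N_u-u\) is outerplanar and has treewidth at most \(2\). Take such a decomposition, with bags of size at most \(3\), and add \(u\) to every bag. The result is a tree decomposition \(\mathcal D_u\) of \(N_u\) with bags of size at most \(4\) in which every bag contains \(u\). If \(N_u=\{u\}\), use the single bag \(\{u\}\).

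Next, glue the pieces along \(T\). For each child \(c\) of \(u\), the tree edge \(cu\) lies in \(N_c\), because \(u=\pi(c)\). So \(\mathcal D_c\) has a bag containing both \(c\) and \(u\). Join that bag by a decomposition-tree edge to an arbitrary bag of \(\mathcal D_u\) containing \(c\); such a bag exists and also contains \(u\), since every bag of \(\mathcal D_u\) does. These joins connect the disjoint decomposition trees in the pattern of \(T\), so the result is a tree. It covers every edge of \(L\) by the first paragraph, and every bag has size at most \(4\).

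The step I expect to need the most care is the running-intersection property. A vertex \(w\) occurs only in \(\mathcal D_w\) (all bags), in \(\mathcal D_{\pi(w)}\) (a subtree, since \(w\in C(\pi(w))\) is a vertex of \(N_{\pi(w)}\)), and in \(\mathcal D_c\) for each child \(c\) of \(w\) (a subtree, since \(w=\pi(c)\)). The join between \(\mathcal D_w\) and \(\mathcal D_{\pi(w)}\) uses a bag of \(\mathcal D_{\pi(w)}\) containing \(w\), which lies in \(w\)'s subtree there. The join between \(\mathcal D_c\) and \(\mathcal D_w\) uses a bag of \(\mathcal D_c\) containing \(w\) and lands in \(\mathcal D_w\), where \(w\) occurs everywhere. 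Hence the bags containing \(w\) form a connected subtree. This gives \(\operatorname{tw}(L)\le3\).
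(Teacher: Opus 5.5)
Your proof is correct and follows the paper's argument essentially step for step. It uses the same local blocks \(L[\{x\}\cup N_T(x)]\) (your \(N_u\)), the same universal-vertex and outerplanarity bound of width three, and the same gluing of the local decompositions along the edges of \(T\), with running intersection coming from the star of blocks containing each vertex; rooting \(T\) and putting \(u\) in every bag of \(\mathcal D_u\) are only bookkeeping that make the coverage and connectivity checks a little more explicit than in the paper.
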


\begin{proof}
For \(x\in V(T)\), let \(L_x=L[\{x\}\cup N_T(x)]\), assigning each added
edge to its unique middle vertex.  The vertex \(x\) is universal in the
planar graph \(L_x\).  Deleting it leaves an outerplanar graph, and hence
\(\operatorname{tw}(L_x)\leq3\).

Take width-three decompositions of all \(L_x\).  For every tree edge
\(xy\), join a bag containing the clique \(\{x,y\}\) on the \(L_x\) side
to such a bag on the \(L_y\) side.  These joins follow \(T\).  A vertex
\(v\) occurs in the block \(L_v\) and in the blocks indexed by its
neighbors, which form a star, so running intersection holds.
\end{proof}

\begin{lemma}[A three-by-three monotone interface]
\label{lem:sharp-monotone-interface}
Let \(A=\{a_1,\ldots,a_r\}\) and \(B=\{b_1,\ldots,b_s\}\) be ordered
sets, where \(r,s\leq3\).  Make both sets cliques, and add cross edges
whose index pairs form a subset of a monotone lattice path.  The
resulting graph has treewidth at most three.
\end{lemma}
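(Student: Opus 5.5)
The plan is to exhibit an explicit tree decomposition of width at most three, that is, with every bag of size at most four. Write \(W\) for the graph in the statement. First reduce to the extremal case: adding edges never lowers treewidth, so it suffices to treat \(r=s=3\) and a full monotone lattice path. Sets with \(r<3\) or \(s<3\) can be padded with dummy vertices, and any monotone path in the smaller index box extends to one in \(\{1,2,3\}^2\). A monotone lattice path from \((1,1)\) to \((3,3)\) in which each step increases exactly one index has five index pairs \((i_1,j_1),\dots,(i_5,j_5)\), with both coordinates nondecreasing and their sum increasing by one per step. If the intended notion allows diagonal steps as well, the path is a chain in the product order and can be refined to a path of the first kind, so we may assume that form.

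The decomposition is a path decomposition that sweeps along the lattice path. For \(k=1,\dots,5\), take the bag
\[
  X_k=\{a_{i_k},b_{j_k}\}\cup\{a_i:i>i_k\}\cup\{b_j:j<j_k\}.
\]
Its size is \(2+(3-i_k)+(j_k-1)=4+(j_k-i_k)\). That is at most four only when \(j_k\le i_k\), so this symmetric choice fails in general.

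The better approach is to choose bags that maintain the cliques. Order the six vertices as they are released by the path and use a sweep of the form
\[
  X=\{a_1,a_2,a_3,b_1\},\quad \{a_2,a_3,b_1,b_2\},\quad \{a_3,b_1,b_2,b_3\},
\]
adjusted to the path. The rule is that \(a_i\) leaves the sweep once the path has moved past row \(i\), and \(b_j\) enters when the path reaches column \(j\). Every bag then has four vertices, namely the remaining \(a\)'s together with the \(b\)'s that have been entered so far.

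The condition to check is that clique edges inside \(A\) are covered. That requires all three \(a\)'s to sit together in one bag, which is why the first bag above is \(\{a_1,a_2,a_3,b_{j_1}\}\), and symmetrically the last bag must be \(\{a_{i_5},b_1,b_2,b_3\}\). The middle bags interpolate between these two. The hard step is to confirm that every cross edge \(a_{i_k}b_{j_k}\) appears in some bag while a vertex is released only after all its path edges have been used. Monotonicity of the path is exactly what guarantees this, since the pairs \((i_k,j_k)\) visit rows and columns in order. Running intersection then holds because each vertex occupies a contiguous interval of bags.

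For the case \(r=s=3\), the plan is to write the interpolating bags explicitly for the generic step pattern and then check the finitely many patterns. The case split is into six shapes: each shape is a monotone path with two right steps and two up steps. The obstacle I expect is ensuring that the middle bags never need five vertices. A fallback is to use the following fact. Deleting \(a_1\), together with the first path pair, leaves a graph whose remaining cliques have sizes two and three. One then eliminates in the order \(a_1,\dots\) along the path and checks that the later neighbourhoods have size at most three, exactly as in the elimination-order computation of \cref{prop:two-port-grid}.
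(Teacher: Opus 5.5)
Your reductions are sound. Padding with dummies is fine, and so is completing the subset to a full path. Refining each diagonal step into two unit steps only adds cross edges, so it is legitimate and cuts the paper's thirteen step words down to the six staircase words.

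The gap is in the core construction. Your sweep rule keeps \(a_i\) until the path passes row \(i\) and admits \(b_j\) when the path reaches column \(j\). At the \(k\)-th pair this gives exactly the bag \(\{a_i:i\ge i_k\}\cup\{b_j:j\le j_k\}\) that you had just rejected, of size \(4+j_k-i_k\). For instance, for \(\mathsf{BBAA}\), with pairs \((1,1),(1,2),(1,3),(2,3),(3,3)\), the bag at \((1,3)\) contains all six vertices. So ``every bag then has four vertices'' is false, monotonicity of the path does not rescue it, and ``adjusted to the path'' is never specified.

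The fallback order \(a_1,a_2,a_3\) fails as well:
\begin{itemize}
\item In \(\mathsf{BBAA}\), the vertex \(a_1\) is adjacent to \(a_2,a_3,b_1,b_2,b_3\), so it has five later neighbours.
\item In \(\mathsf{ABBA}\), with pairs \((1,1),(2,1),(2,2),(2,3),(3,3)\), eliminating \(a_1\) fills \(a_3b_1\). Then \(a_2\) has the four later neighbours \(a_3,b_1,b_2,b_3\).
\end{itemize}

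The missing idea is that the elimination order must depend on the shape of the path. This is exactly what the paper's proof records: each step word gets its own \(A\)-elimination order, followed by the \(B\)-triangle. For \(\mathsf{AABB},\mathsf{ABAB},\mathsf{ABBA},\mathsf{BAAB},\mathsf{BABA},\mathsf{BBAA}\) these orders are \(123,123,132,213,321,231\).

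A uniform way to state the criterion, writing \(N_B(a)\) for the cross-neighbours of \(a\):
\begin{itemize}
\item First eliminate some \(a_x\) with \(|N_B(a_x)|=1\); its later neighbours are two \(a\)'s and one \(b\).
\item Next eliminate some \(a_y\) with \(|N_B(a_x)\cup N_B(a_y)|\le2\); after fill its later neighbours are one \(a\) and at most two \(b\)'s.
\item Then eliminate the last \(a_z\), whose later neighbours lie in \(B\), and finally the triangle \(B\).
\end{itemize}
Each of the six staircase words admits such \(x,y\), as the orders above witness. With this finite check in place of your sweep, your reduction yields a complete proof with six cases rather than the paper's thirteen.
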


\begin{proof}
Deleting cross edges does not increase treewidth, so complete the subset
to a maximal monotone path.  For \(r=s=3\), its step word from
\((1,1)\) to \((3,3)\) uses
\(\mathsf A=(1,0)\), \(\mathsf B=(0,1)\), and
\(\mathsf X=(1,1)\), and is one of the following thirteen words:
\[
\begin{array}{c|c@{\qquad}c|c}
\text{word}&\text{\(A\)-elimination order}&
\text{word}&\text{\(A\)-elimination order}\\ \hline
\mathsf{AABB}&123&\mathsf{ABAB}&123\\
\mathsf{ABBA}&132&\mathsf{ABX}&123\\
\mathsf{AXB}&123&\mathsf{BAAB}&213\\
\mathsf{BABA}&321&\mathsf{BAX}&213\\
\mathsf{BBAA}&231&\mathsf{BXA}&231\\
\mathsf{XAB}&123&\mathsf{XBA}&132\\
\mathsf{XX}&123&&
\end{array}
\]
A direct elimination check shows that every eliminated \(a_i\) has at
most three later neighbors after fill; the three \(B\)-vertices then form
a triangle.  If \(r<3\) or \(s<3\), add dummy vertices and extend the path
monotonically to a \(3\times3\) instance.  The original graph is a
subgraph of that instance.
\end{proof}

\begin{theorem}[Sharp two-port upper bound]
\label{thm:sharp-two-port-upper}
For every \(X\in\mathcal D_\square\) and every two augmented ports \(p,q\),
\[
  \operatorname{tw}(\Gamma_{p,q})\leq3.
\]
\end{theorem}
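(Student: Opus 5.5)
The plan is to build a tree decomposition of \(\Gamma_{p,q}\) of width at most three by gluing local decompositions along the single-port tree \(T_p\), in the spirit of \cref{lem:sharp-planar-tree}. The pieces are already in place. Every joint sector lies in exactly one \(p\)-sector, so \(V(\Gamma_{p,q})\) is partitioned into the fibres \(\phi_p^{-1}(A)\). By \cref{lem:sharp-fibre-tree}, each fibre induces a tree \(H_A\). Every edge of \(\Gamma_{p,q}\) either lies inside a fibre or joins two fibres over a tree edge \(e=AA'\) of \(T_p\). Moreover, \cref{cor:sharp-adhesion} says that the cross edges over \(e\) have both endpoints in the adhesion sets \(\mathcal A_{A,e}\) and \(\mathcal A_{A',e}\), each of size at most three and lying on a path of length at most two in its fibre tree.

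\textbf{Step 1: the local completion.} For each \(A\in V(T_p)\), form \(L_A\) from \(H_A\) by turning each adhesion set \(\mathcal A_{A,e}\), for every \(T_p\)-edge \(e\) at \(A\), into a clique. Because such a set lies on a path of length at most two in \(H_A\), the only new edges join vertices at distance two in \(H_A\). If \(L_A\) is planar, \cref{lem:sharp-planar-tree} gives \(\operatorname{tw}(L_A)\le3\), and the decomposition can be chosen so that every adhesion clique lies in some bag. To prove planarity, I would contract the grid graph \(G[A]\) along joint sectors, which realizes \(H_A\) as a planar minor. Each adhesion set corresponds to consecutive \(q\)-sectors along the straight shore \(Q_A\) of \cref{lem:sharp-convex-bridge}. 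That shore lies on the boundary of the region \(A\) facing \(A'\), so the distance-two chord can be drawn through the face outside \(A\) on the \(A'\) side. Distinct neighbours \(A'\) use distinct shores and distinct faces, so the chords do not cross.

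\textbf{Step 2: the interface pieces.} For each edge \(e=AA'\), consider the graph \(M_e\) on \(\mathcal A_{A,e}\cup\mathcal A_{A',e}\). Make both sides cliques and add the \(\Gamma_{p,q}\)-cross edges. Along the matched shores \(Q_A,Q_{A'}\), the \(q\)-sectors appear in the same order on both sides, since matched vertices \(x\) and \(x+e_{A'}\) are adjacent and by \cref{lem:sharp-three-sectors} sectors are intervals along each shore. Hence the cross edges follow a monotone lattice path, and \cref{lem:sharp-monotone-interface} gives \(\operatorname{tw}(M_e)\le3\) with bags containing each side clique.

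\textbf{Step 3: gluing.} Glue the decompositions following \(T_p\). For each \(e=AA'\), attach the decomposition of \(M_e\) to that of \(L_A\) at a bag containing the clique \(\mathcal A_{A,e}\), and to that of \(L_{A'}\) at a bag containing \(\mathcal A_{A',e}\). Since \(T_p\) is a tree, the result is a tree of bags. Every edge of \(\Gamma_{p,q}\) is covered, either inside some \(H_A\) or by some \(M_e\). A joint vertex \(v\) in fibre \(A\) appears only in \(L_A\) and in the \(M_e\) with \(v\in\mathcal A_{A,e}\); these pieces form a star around \(L_A\), and the occurrences are connected through the shared clique bags. So running intersection holds and the width is at most three.

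I expect the main obstacle to be the planarity of \(L_A\) in Step 1. I would argue it carefully through the planar embedding of \(X\): the shores of different children, and the parent shore, are disjoint straight boundary segments of the convex region \(A\) (\cref{cor:sharp-sector-convex}). A secondary concern is checking that adjacent-sector monotonicity in Step 2 really yields a lattice path, even when one shore meets fewer sectors than the other.
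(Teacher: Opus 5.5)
Your architecture is the paper's: local completions \(L_A\) of the fibre trees bounded by \cref{lem:sharp-planar-tree}, interface blocks on \(\mathcal A_{A,e}\cup\mathcal A_{A',e}\) bounded by \cref{lem:sharp-monotone-interface}, and gluing along the barycentric subdivision of \(T_p\) at bags containing the adhesion cliques. Steps~2 and~3 work as written. In Step~2 you only need that each shore's \(q\)-index is nondecreasing in the common matching order; whether the same \(q\)-sectors occur on both shores does not matter.

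You depart from the paper only in the planarity of \(L_A\), and there your stated justification is false. Shores of distinct neighbours of \(A\) need not be disjoint, and the neighbours need not lie in distinct faces of \(G[A]\). Take \(X=[-1,1]\times[0,2]\) with port \(((0,0),\VV)\). The root sector is the middle column, and its two children are the left and right columns. Both children have the entire middle column as their shore, and \(G[A]\) is a path with a single face. A child shore (running along \(\eta_m\)) can also share a corner vertex with the parent shore \(P_A\) (running along \(\eta_{m+1}\)).

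Non-crossing must come from disjointness of the regions on the \(A'\) sides, not of the shores. Route the chord for \(A'\) through the open unit-width strip between \(Q_A\) and \(Q_{A'}\) given by \cref{lem:hf-straight-band}. These open strips are pairwise disjoint for distinct neighbours, are disjoint from the drawing of \(G[A]\), and have \(Q_A\) on their boundary. So adding all chords keeps \(G[A]\) plane, and contracting the joint sectors then yields \(L_A\).

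This is the geometric form of the paper's argument. The paper contracts each neighbouring fibre \(H_{A'}\) to a hub whose neighbourhood is exactly \(\mathcal A_{A,AA'}\), of size at most three. It then replaces the hubs one at a time by triangles via planar \(Y\)--\(\Delta\) moves.
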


\begin{proof}
For every \(A\in V(T_p)\), start with the fibre tree \(H_A\).  For each
incident edge \(e\in E(T_p)\), complete
\(\mathcal A_{A,e}\) to a clique; call the resulting local graph \(L_A\).
Every added edge has endpoints at distance two in \(H_A\), by
\cref{cor:sharp-adhesion}.

The graph \(L_A\) is planar.  To see this, in the planar joint-sector graph
contract, for every neighbor \(A'\) of \(A\), the connected fibre
\(H_{A'}\) to a hub, and delete all edges not incident with \(H_A\).
The hub neighborhood is exactly \(\mathcal A_{A,AA'}\) and has size at
most three.  Deleting each hub and completing its neighborhood is a planar
\(Y\)-\(\Delta\) operation.  Perform these replacements sequentially:
each step preserves planarity and does not change the degrees of the
remaining hubs.  The result is \(L_A\).  Thus
\cref{lem:sharp-planar-tree} gives
\[
  \operatorname{tw}(L_A)\leq3.
\]

For \(e=AA'\), form an interface block \(K_e\) on
\(\mathcal A_{A,e}\cup\mathcal A_{A',e}\), retaining the original cross
edges and completing each side to a clique.  Scan the two straight shores
of \cref{lem:sharp-convex-bridge} in their matching order.  On each shore,
the successive \(q\)-sector indices are nondecreasing along a path of at
most three sectors.  Hence the cross-edge index pairs form a monotone
lattice-path subset, and
\cref{lem:sharp-monotone-interface} gives
\(\operatorname{tw}(K_e)\leq3\).

Take width-three decompositions of every \(L_A\) and \(K_e\).  Each
adhesion is a clique of size at most three, so some bag on either side
contains it.  Join those bags according to the barycentric subdivision
\[
  L_A-K_{AA'}-L_{A'}
\]
of the tree \(T_p\).  A joint-sector vertex appears in its unique local
block and in precisely the incident interface blocks that contain it;
these blocks form a star.  Running intersection therefore holds.  Local
and cross-interface edges are all covered, and no bag has more than four
vertices.
\end{proof}

\begin{corollary}[The sharp filled two-port constant]
\label{cor:sharp-two-port-constant}
\[
  \boxed{c_2^\square=3.}
\]
\end{corollary}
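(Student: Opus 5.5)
The corollary follows by sandwiching the supremum defining \(c_2^\square\) between the two results just proved; no new argument is needed. I will establish the lower bound and the upper bound separately and then combine them.

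For the lower bound \(c_2^\square\ge3\), I use the instance of \cref{prop:two-port-grid}. First I check that it is admissible for the supremum. The complex \(X=[0,2]^2\) with its unit cubical subdivision is finite, pure, and planar. Its realization is a closed disk, so \(X\in\mathcal D_\square\). The ports \(p=((0,1),\HH)\) and \(q=((1,0),\VV)\) are legitimate augmented ports, since both base points are grid vertices. \Cref{prop:two-port-grid} then gives \(\Gamma_{p,q}\cong P_3\square P_3\) with \(\operatorname{tw}(\Gamma_{p,q})=3\). Hence the supremum is at least \(3\).

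For the upper bound \(c_2^\square\le3\), I invoke \cref{thm:sharp-two-port-upper}. It holds uniformly over every \(X\in\mathcal D_\square\) and every pair of augmented ports. So every term in the supremum is at most \(3\), and therefore the supremum is at most \(3\).

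Combining the two bounds gives \(c_2^\square=3\). The supremum is in fact attained, namely by the \(2\times2\) disk above.

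No step is a genuine obstacle here. All substantive work lies in the two cited results, above all in the adhesion and gluing argument of \cref{thm:sharp-two-port-upper}. The only point requiring care is to confirm that the lower-bound example belongs to exactly the class \(\mathcal D_\square\) over which the supremum is taken, with the augmented metric. This rules out any mismatch with the broader edge-discretized class, which the paper deliberately keeps separate.
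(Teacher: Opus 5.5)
Your proposal is correct and is exactly the paper's argument: the upper bound is \cref{thm:sharp-two-port-upper}, and the lower bound comes from the \(2\times2\) cubical disk of \cref{prop:two-port-grid}, which lies in \(\mathcal D_\square\) and has refinement graph \(P_3\square P_3\) of treewidth three. Your explicit check that this example belongs to the filled class \(\mathcal D_\square\), rather than the broader edge-discretized class, is the same point the paper's short proof records.
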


\begin{proof}
The upper bound is \cref{thm:sharp-two-port-upper}.  The \(2\times2\)
example of \cref{prop:two-port-grid} belongs to
\(\mathcal D_\square\) and has refinement graph
\(P_3\square P_3\) of treewidth three.
\end{proof}

\begin{remark}[Scope of the equality]
\label{rem:sharp-two-port-scope}
The equality concerns filled cubical disks.  The coarse
\(3\leq c_2^{\mathrm{edge}}\leq11\) statement of
\cref{cor:two-port-eleven} has a different, edge-discretized source
scope and is deferred to the discussion in \cref{sec:edge-discussion}.  In particular, the equality above is
not silently extended to thin appendages, point pinches, holed domains, or
three or more ports.
\end{remark}

\section{Discussion: the edge-discretized source scope}
\label{sec:edge-discussion}

Every result so far is unconditional.  This section has a different
status, and we separate it deliberately: it records what can currently
be said in the broader edge-discretized scope of
\cite{deligkas2026polygon}, where the extrusion geometry of
\cref{thm:hf-osp} is not available and where our own machinery does not
apply because purity fails.  The outcome is an upper bound of eleven,
derived from two explicitly stated structural inputs of the full-version
refinement argument of \cite{deligkas2026polygon}; the bound is
conditional on those inputs holding under the operational augmented
convention.

The next statement isolates the graph theory needed for an upper bound.
It deliberately makes the geometric six-child input a hypothesis.

\begin{theorem}[Bridge/core refinement bound]
\label{thm:two-port-bridge-core}
Let \(\Gamma\) be a connected graph, let \(T\) be a tree, and let
\(\phi:V(\Gamma)\to V(T)\) satisfy
\[
 AB\in E(\Gamma)
 \quad\Longrightarrow\quad
 \phi(A)=\phi(B)
 \ \text{or}\
 \phi(A)\phi(B)\in E(T).
\]
Suppose \(N\subseteq V(\Gamma)\) has
\[
 |N\cap\phi^{-1}(S)|\le6
 \qquad(S\in V(T)),
\]
and every edge with an endpoint in \(V(\Gamma)\setminus N\) is a bridge
of \(\Gamma\).  Then
\[
 \operatorname{tw}(\Gamma)\le11.
\]
\end{theorem}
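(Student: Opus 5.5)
First reduce to the core.  Treewidth of a graph is the maximum treewidth over its blocks (biconnected components), and a bridge is a block of treewidth one.  Every edge incident to a vertex outside \(N\) is a bridge, so a vertex of \(V(\Gamma)\setminus N\) lies only in bridge blocks.  Hence every block with at least one cycle has all its vertices in \(N\).  It therefore suffices to show that \(\operatorname{tw}(\Gamma[N])\le 11\): each non-bridge block is a subgraph of \(\Gamma[N]\), and the bridge blocks contribute width one.

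Next, bound \(\Gamma[N]\) using the tree \(T\).  Build a tree decomposition indexed by \(T\), placing at node \(S\) the bag
\[
  \beta(S)=\bigl(N\cap\phi^{-1}(S)\bigr)\cup\bigl(N\cap\phi^{-1}(\pi(S))\bigr),
\]
where \(\pi(S)\) is the parent of \(S\) in a fixed rooting of \(T\).  At the root, use only the root fibre.  Each bag then has at most \(6+6=12\) vertices, so the width is at most \(11\).

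The decomposition axioms should follow as below.
\begin{enumerate}[label=(\roman*),leftmargin=2.2em]
  \item \emph{Edge coverage.}  An edge \(AB\) of \(\Gamma[N]\) has \(\phi(A)=\phi(B)\), covered by the bag at that node, or \(\phi(A)\phi(B)\in E(T)\).  In the second case one endpoint's node is the parent of the other's, and the child's bag contains both fibres.
  \item \emph{Running intersection.}  A vertex \(A\) with \(\phi(A)=S\) appears exactly in \(\beta(S)\) and in the bags of the children of \(S\).  These nodes form a star in \(T\), which is connected.
  \item \emph{Vertex coverage.}  Every vertex \(A\) appears in \(\beta(\phi(A))\).
\end{enumerate}

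I expect no serious obstacle.  The only delicate step is the block reduction: I must confirm that a block containing a vertex \(u\notin N\) is a bridge.  Any edge of that block is incident with \(u\) or in the same block as such an edge.  If the block had a cycle, the cycle through \(u\) would use non-bridge edges at \(u\), which contradicts the hypothesis.

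Connectivity of \(\Gamma\) is not really needed.  If fibres of \(N\) are empty at some nodes, the bags simply shrink.  One could also remove one from the bound, since \(|\beta|\le 12\) gives width \(11\) exactly as stated, so no sharpening is required.
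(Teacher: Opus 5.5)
Your proof is correct and is essentially the paper's argument: your rooted bags \(N_S\cup N_{\pi(S)}\) are the paper's subdivided-tree decomposition (bags \(N_S\) at original nodes and \(N_S\cup N_{S'}\) at subdivision nodes) with each subdivision node merged into its child endpoint. Your appeal to the block lemma replaces the paper's explicit step of deleting bridges, copying the core decomposition onto each bridgeless component, and reattaching two-vertex bridge bags; both routes give bags of size at most twelve.
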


\begin{proof}
Put \(N_S=N\cap\phi^{-1}(S)\) and \(H=\Gamma[N]\).  Subdivide every edge
of \(T\) once to obtain a tree \(T^\star\).  Give an original tree node
\(S\) the bag
\[
 \beta(S)=N_S,
\]
and give the subdivision node corresponding to \(SS'\in E(T)\) the bag
\[
 \beta(e_{SS'})=N_S\cup N_{S'}.
\]
Every edge of \(H\) has both endpoints in one such bag.  A vertex in
\(N_S\) occurs at \(S\) and at precisely the subdivision nodes incident
with \(S\), which form a star.  Hence these bags decompose \(H\), and
\[
 |\beta(S)|\le6,\qquad
 |\beta(e_{SS'})|\le12.
\]

Delete all bridges of \(\Gamma\).  Every resulting component containing
an edge lies in \(H\), because every edge incident with a vertex outside
\(N\) was a bridge.  For each such component, take an independent copy of
the restriction of the preceding decomposition; for each singleton
component \(\{v\}\), take the singleton bag \(\{v\}\).  Contracting these
bridge-free components turns \(\Gamma\) into a tree.  For every bridge
\(uv\) between two components, add a bag \(\{u,v\}\) and attach it to a
bag containing \(u\) on one side and a bag containing \(v\) on the other.
The resulting host is obtained by replacing every vertex of the
bridge-block tree by its independent decomposition tree and every
block-tree edge by one bridge-bag path.  It is therefore a tree, although
not necessarily a literal subdivision of the block tree.  Vertex
occurrence sets remain connected, and the new bags have size at most two.
The maximum bag size is therefore twelve.
\end{proof}

\begin{corollary}[A conditional two-port bound in the source scope]
\label{cor:two-port-eleven}
Let \(p,q\) be two ports in a discretized simple polygon, under the
operational augmented interpretation of the sector construction in
\cite{deligkas2026polygon}.  Assume that the following two conclusions
of its full version hold in this interpretation for the port subsequence
\(p,q\):
\begin{enumerate}[label=(\roman*),leftmargin=2.2em]
  \item the one-port quotient \(\Gamma_p\) is a tree (Lemma~9); and
  \item with the Definition~13 designation of clean joint sectors, each
  \(p\)-sector contains at most six non-clean joint sectors, and every
  clean joint sector has exactly one neighbor in each component of its
  deletion (Lemma~14).
\end{enumerate}
Then
\[
 \operatorname{tw}(\Gamma_{p,q})\le11.
\]
Consequently, if \(c_2^{\mathrm{edge}}\) denotes the supremum of the
two-port augmented refinement treewidth over this edge-discretized class,
then under hypotheses \textup{(i)}--\textup{(ii)},
\[
 3\le c_2^{\mathrm{edge}}\le11.
\]
\end{corollary}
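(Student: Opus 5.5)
The plan is to apply \cref{thm:two-port-bridge-core} with \(\Gamma=\Gamma_{p,q}\), \(T=\Gamma_p\), and \(\phi\) the map sending each joint sector to the \(p\)-sector containing it. Three things must be checked: \(T\) is a tree, \(\phi\) is edge-compatible, and a suitable set \(N\) exists. Hypothesis (i) supplies the tree. Connectivity of \(\Gamma_{p,q}\) follows from connectivity of \(G\).

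For edge compatibility, take an edge \(AB\) of \(\Gamma_{p,q}\). It is realized by a grid edge \(uv\) with \(u\in A\), \(v\in B\). Either \(u,v\) lie in one \(p\)-sector, giving \(\phi(A)=\phi(B)\), or they lie in distinct \(p\)-sectors joined by \(uv\), which is then an edge of \(\Gamma_p\). This step needs only the definition of the common refinement: every joint sector is a connected subset of a single \(p\)-sector, because the \(p\)-label is constant on it and it is connected.

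Let \(N\) be the set of non-clean joint sectors from Definition~13. By hypothesis (ii), \(|N\cap\phi^{-1}(S)|\le6\) for every \(p\)-sector \(S\).

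The remaining requirement, and the main obstacle, is that every edge incident to a clean joint sector \(C\) is a bridge of \(\Gamma_{p,q}\). The plan reads the clause ``exactly one neighbor in each component of its deletion'' as follows: every component of \(\Gamma_{p,q}-C\) contains exactly one neighbor of \(C\). Take an edge \(CD\), and suppose it lies on a cycle. Removing \(C\) from that cycle leaves a path joining \(D\) to another neighbor \(D'\ne D\) of \(C\) that avoids \(C\). Then \(D\) and \(D'\) lie in one component of \(\Gamma_{p,q}-C\), which contradicts uniqueness. Hence \(CD\) is a bridge.

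The care needed is interpretive rather than combinatorial. Lemma~14 must be read in the simple graph \(\Gamma_{p,q}\), and under the augmented convention, as the corollary explicitly assumes. With that reading, \cref{thm:two-port-bridge-core} gives \(\operatorname{tw}(\Gamma_{p,q})\le11\).

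For the window, the lower bound \(3\) comes from \cref{prop:two-port-grid}. The \(2\times2\) cubical disk is a discretized simple polygon: its edge-discretization coincides with its one-skeleton, and the augmented labels agree by \cref{prop:layering}. The upper bound \(11\) holds under (i)--(ii) as shown above.
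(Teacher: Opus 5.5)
Your proposal is correct and follows the paper's own proof: you apply \cref{thm:two-port-bridge-core} with $T=\Gamma_p$, $\phi$ the projection to containing $p$-sectors and $N$ the non-clean sectors, use the same cycle argument to turn the Lemma~14 clause into the bridge condition, and take the lower bound from \cref{prop:two-port-grid}. Your explicit checks that $\Gamma_{p,q}$ is connected and that the $2\times2$ disk lies in the edge-discretized class are details the paper leaves implicit.
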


\begin{proof}
Let \(\phi\) map every joint sector to its containing \(p\)-sector; by
(i) the target
\(T=\Gamma_p\) is a tree.  A grid edge witnessing adjacency of two joint
sectors has endpoints in the same or adjacent \(p\)-sectors, so the
projection satisfies the edge condition of
\cref{thm:two-port-bridge-core}.

Let \(N\) be the non-clean sectors of (ii), so
\(|N\cap\phi^{-1}(S)|\le6\) for every \(p\)-sector \(S\).  By the second
part of (ii), every edge incident with a clean sector is a bridge:
otherwise a cycle through that edge would supply a second neighbor in
the same deletion component.  The theorem applies.  The lower bound is
\cref{prop:two-port-grid}.
\end{proof}

\begin{remark}[Why the hypotheses are stated explicitly]
\label{rem:two-port-transfer}
\Cref{cor:literal-c3} shows that under the literal reading of the formal
bend-distance definition in \cite{deligkas2026polygon}, the tree
statement (i) is already false on a unit square.  The operational
procedure of that work is exactly \cref{prop:layering}, so its geometric
proofs are naturally read in the augmented convention; moreover, the
published proof of its Lemma~14 refers only to the already incorporated
ports and the newly added port, so restricting to the subsequence
\(p,q\) does not use the full-terminal consecutive ordering.  We
nevertheless do not re-derive (i)--(ii) for the augmented metric on
edge-discretized domains here: the Reeb machinery of
\cref{sec:reeb,sec:bound} requires purity and does not directly apply to
graphs with dangling one-dimensional parts.  The bound eleven is
therefore conditional on this metric transfer, in contrast with the
unconditional filled-disk equality of
\cref{cor:sharp-two-port-constant}.
\end{remark}

\begin{remark}[Statement versus public proof arithmetic]
\label{rem:two-port-source-arithmetic}
The full-version Theorem~16 of \cite{deligkas2026polygon} \emph{states}
the stronger bound
\(\operatorname{tw}(\Gamma_{\mathcal P})\le7^{r-1}\), which would give
\(c_2^{\mathrm{edge}}\le7\).  We claim no counterexample to that
statement.  In the public
arXiv v1 proof, however, the displayed induction index, treewidth, and
maximum bag size are not aligned: replacing each old bag member by at
most six non-clean children multiplies bag size by six, and the displayed
splitting step adds a clean sector.  Starting from bag size two, that
arithmetic gives bag size thirteen, hence width twelve, conditional on
separately repairing the splitting order and attachment details.  The
bridge/core proof above does not use that splitting construction and is
why the weaker number eleven is derived here from the explicitly
isolated hypotheses of \cref{cor:two-port-eleven}.
\end{remark}

\section{Conclusion and open problems}
\label{sec:conclusion}

The endpoint convention is mathematically consequential.  Under a literal
last-actual-edge interpretation, bend distance has a non-affine
\((0,0,1,2)\) pattern on one square, and the literal sector graph of that
hole-free square is a triangle, so the tree theorem and the cycle bound
already fail there (\cref{cor:literal-c3}).  Under the explicitly repaired
virtual-stub convention---the convention implemented by the operational
bend-layering procedure---the orientation-state representation forbids
saddles inside unit squares and selects a canonical cubical extension.
Its Reeb multigraph retains the topology of open bend bands; the usual
sector graph is only its parallel-edge simplification.  For this augmented
metric on pure two-dimensional cubical domains we obtain
\(\betti(\GammaP)\le h\) and a linear-time feedback-set certificate of
size at most \(h\).  On a filled cubical disk, retaining Reeb-edge
multiplicity and using preferred-axis parity gives more: the Reeb
multigraph is a tree, every positive sector has a unique straight parent
interface, its label is its rooted depth, and it is exactly the one-sided
extrusion of that interface.  Convexity and the resulting three-vertex
adhesions then give the exact filled two-port constant
\(c_2^\square=3\), while a separate bridge/core argument gives the
conditional bound \(3\le c_2^{\mathrm{edge}}\le11\) in the broader
edge-discretized scope (\cref{rem:two-port-transfer}).

The same analysis also explains why the holes extension of coordinated
motion planning is not a formal substitution into the hole-free proof.
Parallel Reeb bands can be hidden by the simple sector graph, a one-hole
edge discretization can create two predecessors and a \(C_4\), and even a
tree quotient can contain a sector with no straight baseline.  Moreover,
bounded-size and feedback-set reformulations of the multi-port refinement
fail already on hole-free disks, while treewidth remains viable.  The
topological theorem is a structural first step and a falsification tool,
not a proof of fixed-parameter tractability.

\paragraph{What the cycle bound does not prove.}
Deleting at most \(h\) sector vertices destroys all cycles of a
single-port quotient, but the surviving sectors need not have straight
baselines (\cref{ex:nobaseline}), and a sector marked as one topological
defect for one port can split into many pieces when other ports are
added: the staircase family of \cref{thm:staircase} has two fixed ports,
no hole, and linearly growing joint cycle rank despite treewidth two.
Likewise, a treewidth bound on a sector quotient is not a treewidth
bound on the input graph---pieces can be glued through wide
interfaces---so a complete algorithm must additionally bound adhesions
and construct a bounded-width decomposition of a schedule-equivalent
reduced graph.

Nor does the cycle bound carry metric or collision information: bend
distance is not travel length, and cutting a hole open, merging
corridors, or replacing a region by shortest paths may change distances,
waiting locations, and vertex or opposite-edge conflicts.  The known
hole-free reduction requires canonicalization and a two-way,
cost-preserving schedule lift; none of those statements follows from
\(\betti(\GammaP)\le h\).

We close with the questions this paper leaves open.
\begin{enumerate}[leftmargin=2.2em]
  \item Is there a hole-aware refinement bound
  \(\operatorname{tw}(\Gamma_{\mathcal P})\le f(r,h)\) for \(r\) ports on
  pure cubical domains with \(h\) holes?  The families of
  \cref{sec:joint} constrain the possible proof shapes but are consistent
  with such a bound.
  \item What is the exact value of the edge-discretized two-port constant
  \(c_2^{\mathrm{edge}}\ge3\)?  The bridge/core argument gives the upper
  bound eleven conditionally on the metric transfer of
  \cref{rem:two-port-transfer}, and the source full version states
  seven.
  \item Does the one-sided extrusion geometry of \cref{thm:hf-osp} admit
  a controlled analogue on holed domains after the feedback set of
  \cref{cor:algorithm} is deleted?
  \item Is CMP, with total travel cost, fixed-parameter tractable in
  \(k+h\) on pure cubical domains with \(h\) holes?  We do not obtain an
  \(f(k,h)n^{O(1)}\) algorithm here.
\end{enumerate}

\appendix

\section{Machine-checked examples}
\label{app:verification}

An ancillary Python script
(\texttt{anc/verify\_examples.py} in the arXiv source package)
recomputes, from the definitions alone, the following finite assertions
of this paper:
\begin{enumerate}[leftmargin=2.2em]
  \item on the unit square of \cref{prop:endpoint}, the literal
  last-edge values \((0,0,1,2)\) by exhaustive path enumeration, the
  augmented values \((0,0,1,1)\), and the literal sector graph \(C_3\)
  of \cref{cor:literal-c3};
  \item for the staircase family of \cref{thm:staircase}, the exact
  values
  \(\betti(Q_t)=\lfloor2t/3\rfloor+2\) and
  \(B(Q_t)=\lfloor(2t+1)/3\rfloor+3\) for \(3\le t\le12\), computed by
  \(0\)-\(1\) breadth-first search in the orientation-state graph of
  \cref{sec:state} followed by component contraction;
  \item the joint labels and the refinement graph \(P_3\square P_3\) of
  \cref{prop:two-port-grid}; and
  \item exact treewidth at most three, by exhaustive elimination-order
  search, for all thirteen monotone-interface graphs of
  \cref{lem:sharp-monotone-interface}.
\end{enumerate}
These finite computations are not part of the proofs; they are included
so that the longer arithmetic certificates can be checked mechanically.

\bibliographystyle{plain}
\bibliography{references}

\end{document}